\definecolor{Gray}{gray}{0.9}
\newcommand*\bigcdot{\mathpalette\bigcdot@{.5}}
\newcommand*\bigcdot@[2]{\mathbin{\vcenter{\hbox{\scalebox{#2}{$\m@th#1\bullet$}}}}}
\newcommand\EightPtClose{\@setfontsize\EightPtClose\@viiipt{9}}
\newcommand\TenPtType{\@setfontsize\TenPtType\@xpt\@xiipt}
\def\notesize{\TenPtType}
\def\notesize{\EightPtClose}
\newenvironment{figurenotes}[1][Note]{\begin{minipage}[t]{\linewidth}\notesize{\itshape#1: }}{\end{minipage}}
\tikzset{
>=stealth',
  punktchain/.style={
    rectangle,
    rounded corners,
    % fill=black!10,
    draw=black, very thick,
    text width=10em,
    minimum height=3em,
    text centered,
    on chain},
  line/.style={draw, thick, <-},
  element/.style={
    tape,
    top color=white,
    bottom color=blue!50!black!60!,
    minimum width=8em,
    draw=blue!40!black!90, very thick,
    text width=10em,
    minimum height=3.5em,
    text centered,
    on chain},
  every join/.style={->, thick,shorten >=1pt},
  decoration={brace},
  tuborg/.style={decorate},
  tubnode/.style={midway, right=2pt},
}
\newtheorem{lemma}{Lemma}
\newtheorem{assumption}{Assumption}
\theoremstyle{definition}
\theoremstyle{plain}
\newcommand{\mbb}{\mathbb}
\newcommand{\E}{\mbb{E}}
\theoremstyle{plain}
\theoremstyle{definition}
\theoremstyle{remark}
\newcommand{\distas}[1]{\mathbin{\overset{#1}{\kern\z@\sim}}}%
\newsavebox{\mybox}\newsavebox{\mysim}
\newcommand{\distras}[1]{%
  \savebox{\mybox}{\hbox{\kern3pt$\scriptstyle#1$\kern3pt}}%
  \savebox{\mysim}{\hbox{$\sim$}}%
  \mathbin{\overset{#1}{\kern\z@\resizebox{\wd\mybox}{\ht\mysim}{$\sim$}}}%
}
\begin{document}

\title{Price Discrimination in International Airline Markets\thanks{
We thank the editor Francesca Molinari and three referees for numerous helpful suggestions that sub- stantially improved the paper. We are very grateful to the U.S. Department of Commerce for providing the data
for our analysis. We are thankful to research computing facilities at Boston College, UNC and UVA, with special thanks to Sandeep Sarangi at UNC research computing. Seminar 
(Arizona, BC, Cornell, Iowa, MIT, Penn State, Rice, Richmond Fed, Rochester, TAMU, Toulouse, UC-Irvine, UNC-Chapel Hill, UWO, WUSTL, Vanderbilt, Virginia) 
and conference (SEA 2014, EARIE 2015, IIOC 2016, ASSA 2016, DSE 2019, SITE 2019, QME 2021)
participants provided helpful comments. We thank Jan Brueckner, Ben Eden, Gautam Gowrisankaran, Jacob Gramlich, Qihong Liu, Yao Luo, Brian McManus, Nancy Rose, Nick Rupp, Andrew Sweeting and Peter R. Hansen for insightful suggestions. All remaining errors are our own.}}

\author{Gaurab Aryal\thanks{
Department of Economics, Washington University in St. Louis, aryalg@wustl.edu.}
\\
\and
 Charles Murry\thanks{
Department of Economics, Boston College, charles.murry@bc.edu.}
\\
%EndAName
\and
Jonathan W. Williams\thanks{
Department of Economics, University of North Carolina - Chapel Hill, jonwms@unc.edu.} \\
%EndAName
}

%\date{\today\\{\color{red} Preliminary and Incomplete}}
\date{\today}

\maketitle

\begin{abstract}

We develop a model of inter-temporal and intra-temporal price discrimination by monopoly airlines to study the ability of different discriminatory pricing mechanisms to increase efficiency and the associated distributional implications.
To estimate the model, we use unique data from international airline markets with flight-level variation in prices across time, cabins, and markets and information on passengers' reasons for travel and time of purchase.
The current pricing practice yields approximately 77\% of the first-best welfare. The source of this inefficiency arises primarily
from private information about passenger
valuations, not dynamic uncertainty about demand. 
We also find that if airlines could discriminate between business and leisure passengers, total welfare would improve at the expense of business passenger surplus.
Also, replacing the current pricing that involves screening passengers across cabin classes with offering a single cabin class has minimal effect on total welfare.
\end{abstract}

\newpage

\section{Introduction}
Firms with market power often use discriminatory prices to increase their profits. 
However, such price discrimination has ambiguous implications for total welfare. 
Enhanced price discrimination may increase welfare by reducing allocative inefficiencies, but it may also reduce consumer welfare. 
So an essential aspect of economic- and public-policy towards price discrimination is to understand how well various discriminatory prices perform in terms of the total welfare and its distribution, relative to each other and the first-best  \citep[e.g.,][]{Pigou1920, Varian1985, WH2015}.

We evaluate the welfare consequences of price discrimination and quantify sources of inefficiencies in a large and economically important setting, international air travel markets. 
{The pricing problem faced by airlines is complex. They lack perfect information about passengers' willingness-to-pay and the time passengers arrive at the marketplace. Measuring the welfare consequences of this missing information requires carefully studying a model of airline pricing.} 
To that end, we develop and estimate a model of inter-temporal and intra-temporal price discrimination by a monopoly airline and study {how} different discriminatory mechanisms {affect} welfare and the associated distributional implications.
The model incorporates a rich specification of passenger valuations for two vertically differentiated seat classes on international flights and a capacity-constrained airline that faces stochastic and time-varying demand. 
The airline screens passengers between the two cabins while updating prices and seat offerings over time.
Using the model estimates, we implement various counterfactuals in the spirit of \cite{BergemannBrooksMorris2015}, where we change the airline's information about current and future preferences and measure the welfare under various discriminatory pricing strategies. Our counterfactual pricing strategies are motivated by current airline practices intended to raise profits by reducing allocative inefficiencies, including attempts to solicit passengers' reason to travel and use of auctions \citep{Nicas2013, Vora2014, Tully2015, McCartney2016}.

We find that the current pricing practice yields approximately 77\% of the first-best welfare; most (87\%) of this inefficiency is due to private information about travelers' valuations, and the rest (13\%) is due to uncertainty about demand. These results suggest that airlines' attempts to collect passenger information for discriminatory pricing could improve efficiency. We also find that relative to the current pricing, the ability to screen passengers based on their reason to travel improves the total welfare at the expense of business passengers and in favor of airline and leisure passengers. However, reducing the scope of second-degree price discrimination by requiring airlines to choose the same price for the economy and first-class seats has a negligible effect on total welfare, suggesting a strong cross-cabin substitution.

Our empirical strategy uses a novel dataset of international air travel from the U.S. Department of Commerce's Survey of International Air Travelers \citepalias{SIAT}. Compared to data used in the extant literature, in our data, we observe the date of transactions, ticket prices, and passenger characteristics for dozens of airlines in hundreds of markets. One such characteristic is the passenger's reason for travel--business or leisure--which enables us to study third-degree price discrimination based on reasons for travel. Furthermore, several nonstop markets in international travel are highly concentrated, allowing us to focus on monopoly pricing. 

Airlines can segment customers in various ways, enabling them to price discriminate. 
We examine ways in which, despite all their advantages, airlines are limited in their ability to price discriminate perfectly. We also document variability and non-monotonicity in prices before departure, suggesting prices respond to demand uncertainty. In particular, we document the late arrival of passengers traveling for business, who tend to have inelastic demand, and the associated price changes. Although business travelers' late arrival puts upward pressure on fares, fares do not increase monotonically for every flight. This pattern suggests that the underlying demand for air travel is stochastic and non-stationary. 

We propose a flexible but tractable demand and supply model to capture these salient data features.
Each period before a flight departs, a random number of nonstop and connecting passengers arrive at the marketplace and purchase either a first-class ticket or an economy-class ticket or decide not to fly. 
Passengers are short-lived, so those who do not buy a ticket do not remain in or ever return to the marketplace.
Our focus is on estimating demand from only the nonstop travelers, whose willingness-to-pay depends on the seat class and their reason to travel. We allow the travelers to have different willingness-to-pay for first-class, such that, for some, the two cabins are close substitutes but not for others. Furthermore, we allow the mix of business and leisure travelers to vary over time.

On the supply side, we model a monopolist airline whose problem is to sell a fixed number of economy and first-class seats in a finite number of periods to maximize total expected profit. The airline knows the distribution of passengers' valuations and the expected number of nonstop and connecting passenger arrivals, and each period it chooses prices and seats to release before observing the demand.\footnote{Airline committing to a seat release policy is meant to capture the ``fare bucket" strategy used by airlines in practice. See, for example, \cite{AlderighiNicoliniPiga2015} for more on these buckets. }
We assume that the airline's choices do not affect the future arrival process, and thus the number of unsold seats is the only endogenous state variable. Thus, every period before the flight, the airline balances the expected profit from selling a seat today against the forgone future expected profit.
This inter-temporal trade-off results in a time-specific endogenous opportunity cost for each seat that varies with the expected future demand and number of unsold seats. 

Besides this temporal consideration, the airline screens passengers between two cabins every period. Thus, our model captures the \textit{inter}-temporal and \textit{intra}-temporal aspects of price discrimination. Moreover, these temporal and intra-temporal considerations are interrelated. For instance, if the airline increases the price for economy seats today, it will trade off fewer economy sales today, higher first-class sales today, with higher economy and lower first-class sales in the future. As demand varies over time, these trade-offs vary too. 

The estimation of our model presents numerous challenges.
The demand and supply specifications result in a non-stationary dynamic programming problem that involves solving a mixed-integer nonlinear program for each state. We solve this problem to determine optimal prices and seat-release policies for every possible combination of unsold seats and days until departure.
Moreover, our data include numerous flights across hundreds of routes, so not only do we allow for heterogeneity in preferences across passengers \emph{within a flight}, but we also allow different flights to have different distributions of passenger preferences.

To estimate the distribution of preferences across flights, we use a simulated method of moments approach based on the importance sampling procedure of \cite{Ackerberg2009}. 
Similar approaches to estimate a random coefficient specification has been used by \cite{FoxKimYang2016}, \cite{NevoTurnerWilliams2016}, and \cite{BlundellGowrisankaranLanger2020}. 
Like them, we match the empirical moments with the model-implied moments that describe within- and across-flight variation in fares and purchases.

Our estimates suggest substantial heterogeneity across passengers within and across flights.
The estimated marginal distributions of the willingness-to-pay for one-way travel for business and leisure travelers are consistent with the observed fares distribution, with the average willingness-to-pay for an economy seat by leisure and business travelers being \$392 and \$537, respectively. Furthermore, on average, passengers value a first-class seat 58\% more than an economy seat, implying meaningful cross-cabin substitution.
The arrival rate of nonstop and connecting passengers declines as we approach the flight date. However, among the nonstop passengers, the share of business travelers increases from zero to almost 30\% for the average flight.
We calculate the time-specific opportunity cost of selling a seat using the model estimates, which provide novel insight into airlines' dynamic incentives.

Using the estimates and the model, we characterize the efficiency level and the associated distribution of surplus under alternative pricing mechanisms that provide new insights into the welfare consequences of price discrimination.
{Our main finding is that the airlines' inability to segment passengers by their willingness-to-pay is the main reason for efficiency loss.} As previously mentioned, the current pricing achieves 77\% of the first-best welfare, and most of this inefficiency is due to the asymmetry of private information about passengers' values. The efficiency loss takes the form of seats allocated to passengers who do not necessarily have the highest valuation. We also find that most inefficiencies occur in the earlier periods because airlines sell too many seats from a total efficiency standpoint early in the sales process due to uncertainties about passenger willingness-to-pay and the future number of arrival of passengers, thus excluding potentially high-surplus late-arriving passengers. 
{Thus, making information about demand and passenger arrival time public would increase total efficiency. 
For instance, if airlines observe both this information, there are mechanisms, e.g., Vickery-Clarke-Grove (VCG) auctions, that could get to 97\% efficiency.}

Regarding the surplus distribution between airlines and passengers, we find that price discrimination skews the distribution of surplus in favor of the airlines.
In particular, the gap between producer and consumer surplus increases by approximately 25\% when airlines price differently across cabins, compared to setting only one price per period for both cabins.
If an airline were to price discriminate even more based on reasons to travel, it would increase the airline's share of surpluses. 
We illustrate the effect of airlines knowing the reason to travel on welfare by determining surplus division when the airline does not know the reason to travel and uses a VCG auction, period-by-period. We find that using VCG auctions doubles the consumer surplus under current pricing practices and is as efficient as eliminating all static informational frictions.

\paragraph{Contribution and Related Literature.}
Our paper relates to extensive research on the economics of price discrimination and research in the empirical industrial organization on estimating the efficiency and division of welfare under asymmetric information. Most of these papers, however, focus on either cross-sectional price discrimination \citep[e.g.,][]{IvaldiMartimort1994, Leslie2004,BusseRysman2005, CrawfordShum2006, mortimer2007price, McManus2007, AryalGabrielli2019} or inter-temporal price discrimination dynamics \citep[e.g.,][]{NevoWolfram2002, Nair2007,Escobari2012, Jian2012, NevoHendel2013, Lazarev2013, ChoLeeRustYu2018}, but not both. 

We also contribute to the literature that focuses on dynamic pricing \citep[e.g.,][]{GRADDY20116,Sweeting2010, ChoLeeRustYu2018, Waisman2021}. 
However, none studies intra-temporal price discrimination, inter-temporal price discrimination, and dynamic pricing together, even though many industries involve all three. Relatedly, \cite{CoeyLarsenPlatt2020} document inter-temporal and intra-temporal price dispersion in an environment with consumer search and deadlines but without price discrimination. We contribute to this research by developing an empirical framework where both static discriminative pricing and dynamic pricing incentives are present and obtain results that characterize the welfare implications.\footnote{There is a long theoretical literature on static and inter-temporal price discrimination; see \cite{Stokey1979, GaleHolmes1993,dana1999using, CourtyLi2000, Armstrong2006}. Airline pricing has also been studied extensively from the perspective of \emph{revenue management}; see \cite{RyzinTalluri2005}.} Our welfare analysis has some similarities with \cite{DubeMisra2021} and \cite{Williams2020}, who find that price discrimination improves welfare.

Additionally, we complement recent research related specifically to airline pricing, particularly \cite{Lazarev2013}, \cite{LiGranadosNetessine2014}, and \cite{Williams2020}.\footnote{Also see \cite{BorensteinRose1994}, \cite{puller2012does, ChandraLederman2018}, who use a regression approach to study effects of oligopoly on dispersion in airfares.} \cite{Lazarev2013} considers a model of inter-temporal price discrimination with one service cabin and finds large potential gains from allowing reallocation among passengers arriving at different times before the flight. Like us \cite{Williams2020} further allows for dynamic adjustment of prices in response to stochastic demand and finds that dynamic pricing (relative to a single price) increases total welfare at the expense of relatively inelastic customers who arrive late.
Our paper, however, differs from the existing literature on price discrimination in several ways. 
First, we model inter-temporal and intra-temporal price discrimination, closely capturing a multiproduct airline's decisions to screen passengers between cabin classes. Our results suggest that these two aspects jointly affect welfare. Second, like \cite{Lazarev2013}, we model connecting passengers, who play an important role in dynamic adjustments to remaining seats. Third, we model the airlines' ability to choose the number of seats available to passengers at any given time before the flight, which mimics the complicated selling algorithms that airlines use in practice. Lastly, we observe passengers' travel reasons, so we can measure business and leisure passenger welfare separately. 

Most papers use a random utility discrete choice framework to model the demand for air travel  \citep[e.g.,][]{BerryCarnallSpiller2006}. Instead, we use a pure characteristics approach, following the theoretical literature on price discrimination \citep[e.g.,][]{MussaRosen1978}. See also \cite{BerryPakes2007} and \cite{Molinari2021AER} for empirical examples. A pure characteristic approach is natural when modeling vertically differentiated products--economy and premium-class cabins. We estimate a flexible random coefficients demand model that captures within-flight and across-flight heterogeneity in consumer preferences. This approach is important as we have data for many types of flights. The richness in our specification allows us to capture variation in demand parameters due to permanent and unobserved differences across markets. However, the flexibility of our random-effects approach comes at the cost of not modeling flight-specific factors that affect demand.

%%%%%%%%%%%%%%%%%%%%%%%%%%%%%%%%%
% DATA
%%%%%%%%%%%%%%%%%%%%%%%%%%%%%%%%%

\section{Data \label{Data}}

The Department of Commerce's Survey of International Air Travelers (SIAT) \citepalias{SIAT} gathers information on international air passengers traveling to and from the U.S.
Passengers are asked detailed questions about their flight itinerary. These surveys take place either during the flight or at the gate area before the flight.
The SIAT targets U.S. residents traveling abroad and non-residents visiting the U.S. Passengers in our sample are randomly chosen flights from more than 70 participating U.S. and international airlines, including some charter carriers. The survey contains ticket information, which includes the cabin class (first, business, or economy), date of purchase, total fare, and the trip's purpose (business or leisure). We combine fares reported as business class and first-class into a single cabin class that we label ``first-class.'' This richness distinguishes the SIAT data from other data like the  Origin and Destination Survey (DB1B) conducted by the Department of Transportation \citepalias{DB1B}. In particular, the additional details about passengers (e.g., time of purchase, individual ticket fares, and reason for travel) make the SIAT dataset ideal for studying price discrimination.

We create a dataset from the survey where a unit of observation is a single ticket purchased by a passenger flying a nonstop (or direct) route. We then use fares and purchase (calendar) dates associated with these tickets to estimate price paths for each flight in our data, where a flight is a single instance of a plane serving a particular route. For example, in our sample, we observed some nonstop passengers flying United Airlines from SEA to TPE on August 12, 2010, departing at 5:10 pm. Then we say that this is one flight. 
In this section, we detail how we selected the sample and display descriptive statistics that motivate our model and analysis.

%new
\subsection{Sample Selection}
Our sample from the DOC includes 413,309 passenger responses for 2009-2011. We clean the data to remove contaminated and missing observations and construct a sample of flights that will inform our airline pricing model, which we specify in the following section. We detail our sample selection procedure in Appendix \ref{siat}, but, for example, we exclude responses that do not report a fare, are part of a group travel package, or are non-revenue tickets. We supplement our data with schedule data from the Official Aviation Guide of the Airways (OAG) company \citepalias{OAG}, which reports cabin-specific capacities by flight number.
Using the flight date and flight number in SIAT, we can merge the two data sets.
We include flights for which we observe at least ten nonstop tickets after applying the sample selection criteria.

\paragraph{Monopoly Markets.}
As mentioned earlier,  we focus on monopoly markets.
In international air travel, nonstop markets tend to be concentrated, except for a few busiest airport pairs. We classify a market as a monopoly market if it satisfies one of the following two criteria: (i) one airline flies at least 95\% of the total capacity on the route (where the capacity is measured using the OAG data); or (ii) a U.S. carrier and foreign carrier operate on the market with antitrust immunity from the U.S. Department of Justice. These immunities do not have any additional regulatory oversight on airfares.

These antitrust exemptions come from market access treaties signed between the U.S. and the foreign country that specify a local foreign carrier (usually an alliance partner of the U.S. airline) that will share the route. For example, on July 20, 2010, antitrust exemption was granted to the OneWorld alliance, which includes American Airlines, British Airways, Iberia, Finnair, and Royal Jordanian, for ten years subject to a slot remedy.\footnote{To determine such markets, we use information from DOT and carriers' 10K reports filed with the SEC.}

We define markets at the city-pair level in a few cases because we are concerned that within-city airports are substitutable. The airports that we aggregate up to a city-pairs market definition include airports in the New York, London, and Tokyo metropolitan areas. Thus, we treat a flight from New York JFK to London Heathrow to be in the same market as a flight from Newark EWR to London Gatwick.

\begin{table}[ht!]
%\centering
\caption{Top 20 Markets, Ordered by Sample Representation}\label{tab:markets}
\hspace{-0.6in}
\scalebox{0.7}{
\begin{tabular}{lccc p{.5cm} |lccc}
\toprule
      &  {\bf Unique}  &      &           {\bf Distance}&&        &  {\bf Unique}  & & {\bf Distance}    \\
 {\bf Market} &  {\bf Flights} &  {\bf Obs.} &  {\bf (miles)} &&  {\bf Market} &  {\bf Flights} &  {\bf Obs.}  &  {\bf (miles)}\\
\midrule
Los Angeles (LAX)-Shanghai (PVG)   &   53  & 1,646 &  6,485  && New York (JFK)-Vargas (CCS)   &   20  &  476  &  2,109 \\
San Francisco (SFO)-Auckland (AKL)   &   28  &  1,623  &  6,516  && Boston (BOS)-Keflavik (KEF)   &   13   & 427  &  2,413 \\
New York (JFK)-Helsinki (HEL)   &   35  &  1,191  &  4,117  && New York (JFK)-Santiago (SCL)   &   16   & 423  &  5,096 \\
New York (JFK)-Johannesburg (JNB)   &   26  &  1,035  &  7,967  && New York (JFK)-Nice (NCE)   &    9   & 329  &  3,991 \\
New York (JFK)-Warsaw (WAW)   &   34  &  787  &  4,267  && New York (JFK)- Dubai (DXB)   &    9   & 325  &  6,849 \\
New York (JFK)-Vienna (VIE)   &   29  &  729 &  4,239  && Pheonix (PHX)-Puerto Vallarta (PVR)   &    6   & 289  &  971 \\
Phoenix (PHX)-Cabo (SJD)   &   18   & 703  &    721 && Orlando (MCO)-Frankfurt (FRA)   &   8  & 284  &  4,734 \\
New York (JFK)-Casablanca (CMN)   &   21  &  625  &  3,609  && San Francisco (SFO)-Paris (CDG)   &   10   & 251  &  5,583\\
Seattle (SEA)-Taipei (TPE)   &   21  &  328  &  6,074 &&San Francisco (SFO)- Tokyo (HND)   &   4   & 251  &  5,161\\
New York (JFK)-Buenos Aires (EZE)   &   24  &  547  &  5,281  && Orlando (SFB)-Birmingham (BHX)   &   10   & 245  &  4,249 \\
\bottomrule
\end{tabular}}
\begin{figurenotes}
The table displays the Top 20 markets, their representation in our sample, their distance (air miles) from U.S. Data from the Survey of International Air Travelers, and the sample described in the text.
\end{figurenotes}
\end{table}

\paragraph{Description of Markets and Carriers.}
After our initial selection and restriction on nonstop monopoly markets, we have 14,930 observations representing 224 markets and 64 carriers. We list the top 20 markets based on sample representation in Table \ref{tab:markets}, along with the number of unique flights and the total observations. The most common U.S. airports in our final sample are New York JFK (JFK), San Francisco (SFO), and Phoenix (PHX), and the three most common routes are Los Angeles to Shanghai, China (PVG), San Francisco to Auckland, New Zealand (AKL),  and New York to Helsinki, Finland (HEL). The median flight has a distance of 4,229 miles and an inter-quartile range of $2,287$ to $5,583$ miles. In Table \ref{tab:carriers}, we display the top ten carriers from our final sample, representing approximately 60\% of our final observations.

\begin{table}[ht!]
\centering
\caption{Top Ten Carriers, Ordered by Sample Representation}\label{tab:carriers}

\begin{tabular}{lcc p{2cm} lcc}
\toprule
{\bf Carrier} & {\bf Unique Flights} & {\bf Obs.} \\
\midrule
US Airways         &  77  & 3,179  \\
Continental        &  98  & 2,036  \\
American Airlines  &  88  & 1,926  \\
Air New Zealand    &  30  &  1,666  \\
China Eastern      &  53  & 1,646  \\
Finnair            &  35  &  1,191  \\
Lufthansa          &  40  &  1,079  \\
Delta              &  35  &  1,062  \\
South African  & 26& 1,035\\
Austrian           &  35  &  881  \\
\bottomrule
\end{tabular}
\begin{figurenotes}
The table displays the top ten carriers ordered by frequency in the final sample. Data are from the Survey of International Air Travelers (SIAT) described in the text.
\end{figurenotes}

\end{table}

\subsection{Passenger Arrivals and Ticket Sale Process\label{descriptives}}
Passengers differ in their time of purchase and reasons for travel, and prices vary over time and across cabins. In this subsection, we present key features in our data about passengers and prices.
Throughout the paper, we use the reason for passengers' travel to define their type, which can be either business or leisure. Regardless of their type, passengers can choose between an economy-class or a first-class ticket, or not fly at all.

\paragraph{Timing of Purchase.}  Airlines typically start selling tickets a year before the flight date.
Although passengers can buy their tickets throughout the year, in our sample, most passengers buy in the last 120 days.
To keep the model and estimation tractable, we classify the purchase day into a fixed number of bins.
At least two factors motivate our choice of bin sizes. First, it appears that airlines typically adjust fares frequently in the last few weeks before the flight date, but less often farther away from the flight date \citep[see][]{HortacsuNatanParsleySchweigWillaims2021}. Second, there is usually a spike in passengers buying tickets at focal points, like 30 days, 60 days, and so on. 

Although we focus on pricing and seat release decisions for nonstop travel, flights also have connecting passengers, which we account for in our model. For example, in the LAX-PVG market, there can be nonstop passengers whose trips originate in LAX and terminate in PVG, but there may also be those flying from SFO (or other originating airports) to PVG via LAX. 
In Table \ref{tab:advance}, we present eight fixed bins for nonstop and connecting passengers and the number of observations in each bin. We use narrower bins closer to the flight date to reflect airlines' dynamic pricing strategies. Each of these eight bins corresponds to one period in our model, giving us eight periods. 

\begin{table}[ht!]
\centering
\caption{Distribution of Advance Purchase}\label{tab:advance}

\begin{adjustbox}{scale=1}
\begin{tabular}{lcc}%{p{3cm}cc}
\toprule
{\bf Days Until Flight} & {\bf Nonstop Obs.}& { \bf Connecting Obs.}\\
\midrule
   $0-3$ days &        597 &        {437}  \\
   $4-7$ days &        753 &        {528}  \\
  $8-14$ days &      1,038 &       {839}  \\
$ 15-29$ days &      1,555&        {1,126}   \\
$ 30-44$ days &      2,485&        {1,855}   \\
$ 45-60$ days &      2,561&        {1,936}   \\
$61-100$ days &      2,438&        {1,800}   \\
$  101+$ days &      3,400&        {2,332}   \\
\bottomrule
\end{tabular}
\end{adjustbox}
\begin{figurenotes}
The table displays the distribution of advance purchases. The first column is the number of days before the flight, and the second and third columns show how many nonstop and connecting passengers bought their tickets on those days, respectively.
\end{figurenotes}

\end{table}
{Although we observe prices, ticket class, and reason-to-travel for connecting passengers, from now onwards we only present information about nonstop passengers unless otherwise stated. All the figures and tables shown here use the sample of nonstop passengers.}

\paragraph{Passenger Characteristics.} We classify each passenger as either a business traveler or a leisure traveler based on the reason to travel. \textit{Business} includes business, conference, and government or military, while \textit{leisure} includes visiting family, vacation, religious purposes, study or teaching, health, and other; see Table \ref{table:summarystat} for more.\footnote{We also observe the channel through which the tickets were purchased: travel agents (36.47\%), personal computer (39.76\%), airlines (13.36\%), company travel department (3.99\%), and others  (6.42\%).} We classify service cabins into economy class and first-class, where we {combine business and first class} as the latter.\footnote{ For 90\% of the flights, passengers are seated in only one of these two cabins, but not both.}

In Table \ref{sumstatTKT} we display some key statistics for relevant ticket characteristics in our {nonstop} sample.
 As is common in the literature, to make one-way and round-trip fares comparable, we divide round-trip fares by two. Approximately 4.5\% passengers report to have bought a one-way ticket; see Table \ref{table:summarystat}. The standard deviations in fares are large, with a coefficient of variation of 0.85 for the economy and 1.06 for first-class.

\begin{table}[htbp]
\caption{Summary Statistics from SIAT, Ticket Characteristics of Nonstop Passengers} \label{sumstatTKT}
\centering
\begin{adjustbox}{scale=1}
\begin{threeparttable}
\begin{tabular}{p{5cm}ccc}
\toprule
 &\multicolumn{1}{c}{{Proportion}}&\multicolumn{2}{c}{{Fare}}   \\
\multicolumn{1}{l}{\bf Ticket Class }&\multicolumn{1}{c}{of Sample}&\multicolumn{1}{c}{Mean}&\multicolumn{1}{c}{S.D.} \\
\hline
\hspace{.25cm} Economy              &  92.50  & 447   &    382\\
\hspace{.25cm} First                &  7.50   & 897   &    956\\
\hline
\multicolumn{1}{l}{{ \bf Advance Purchase }}&\multicolumn{1}{c}{{}}&\multicolumn{1}{c}{{}}&\multicolumn{1}{c}{{}}\\
\hline
\hspace{.25cm} 0-3 Days          &   4.03  &    617  &     636\\
\hspace{.25cm} 4-7 Days          &   5.08  &    632  &     679\\
\hspace{.25cm} 8-14 Days         &   7.00  &    571  &     599\\
\hspace{.25cm} 15-29 Days        &  10.49  &    553  &     567\\
\hspace{.25cm} 30-44 Days        &  16.76  &    478  &     429\\
\hspace{.25cm} 45-60 Days        &  17.27  &    467  &     432\\
\hspace{.25cm} 61-100 Days       &  16.44  &    414  &     315\\
\hspace{.25cm} 101+ Days         &  22.93  &    419  &     387\\[4mm]
\hline
\multicolumn{1}{l}{{ \bf Travel Purpose }}&\multicolumn{1}{c}{{}}&\multicolumn{1}{c}{{}}&\multicolumn{1}{c}{{}}\\
\hline
\hspace{.25cm} Leisure     &  85.57 & 446    &   400  \\
\hspace{.25cm} Business    &  14.43 & 684    &   716\\[2mm]
\bottomrule
\end{tabular}
\begin{figurenotes}
Data from the Survey of International Air Travelers. Sample of nonstop passengers described in the text.
\end{figurenotes}
\end{threeparttable}
\end{adjustbox}

\end{table}

In the second panel of Table \ref{sumstatTKT}, we display the same statistics by the number of days in advance of a flight's departure that the ticket was purchased (aggregated to eight ``periods"). The average fare increases for tickets purchased closer to the departure date, as does the standard deviation.

Similarly, at the bottom panel of Table \ref{sumstatTKT}, we report price statistics by the passenger's trip purpose. About 14\% of the passengers flew for business purposes, and these passengers paid an average price of \$684 for one direction of their itinerary. Leisure passengers paid an average of \$446. This price difference arises for at least three reasons: business travelers tend to buy their tickets much closer to the flight date, prefer first-class seats, and fly to different markets.

\begin{figure}[htbp]
    \centering
        \caption{\em Business versus Leisure Passengers before the Flight Date \label{fig:timetrend}}
    \begin{subfigure}[b]{0.44\textwidth}
        \includegraphics[width=\textwidth]{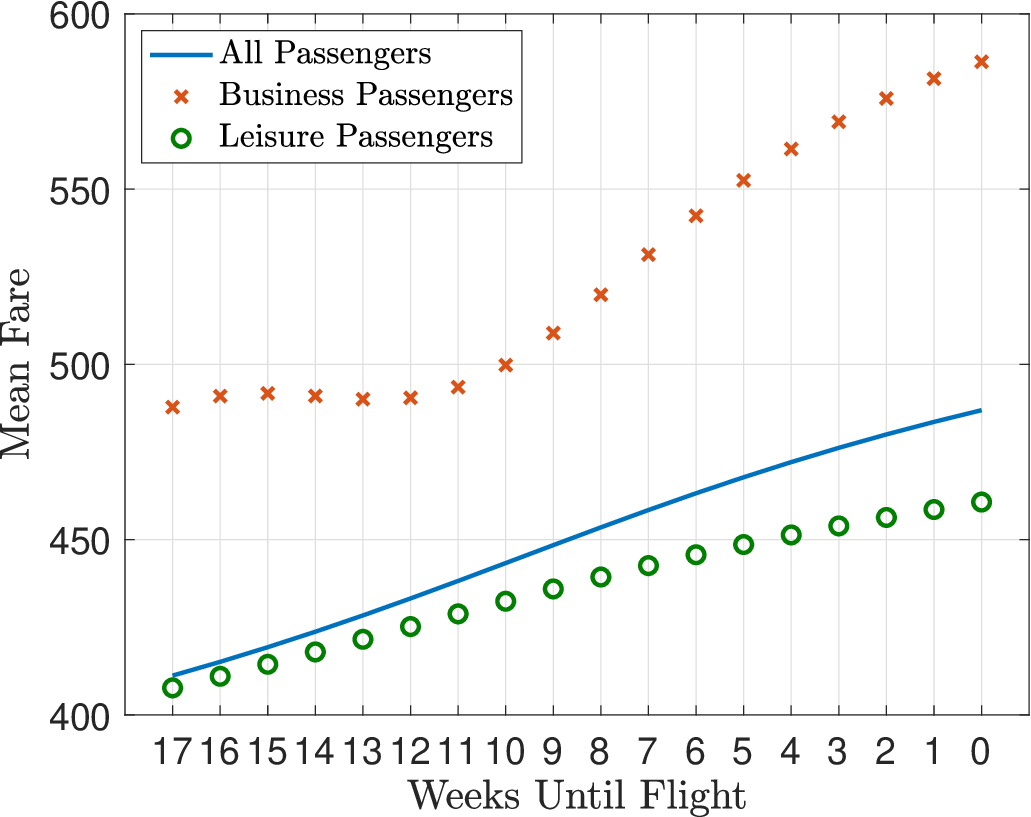}
        \caption{Average Economy Fares Prior to Flight}
    \end{subfigure}
    \begin{subfigure}[b]{0.49\textwidth}
        \includegraphics[width=\textwidth]{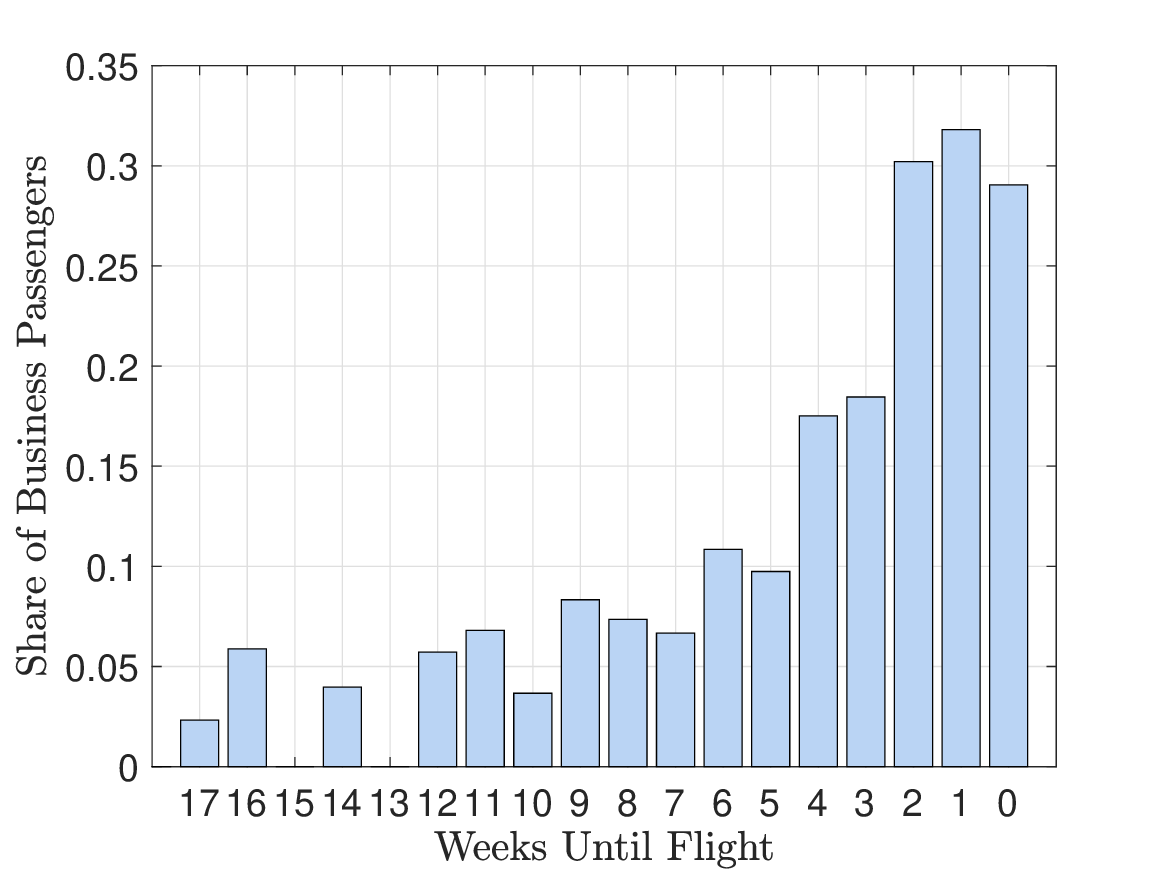}
        \caption{Business and Leisure Travelers}
    \end{subfigure}
\begin{figurenotes}
(a) Average price paths across all flights for tickets in economy class by the week of purchase prior to the flight date, by self-reported business and leisure travelers. Individual transaction prices are smoothed using the nearest neighbor with a Gaussian kernel with optimal bandwidth of 0.5198. (b) The proportion of business passengers across all flights, by advance purchase weeks.
\end{figurenotes}

\end{figure}

In Figure \ref{fig:timetrend}(a), we plot the average price for economy fares as a function of when the ticket was purchased.
Both business and leisure travelers pay more if they buy the ticket closer to the flight date, but the increase is more substantial for business travelers. The solid line in Figure \ref{fig:timetrend}(a) reflects the average price across both reasons for travel. At earlier dates, the total average price is closer to the average price paid by leisure travelers, while it gets closer to the average price paid by business travelers as the flight date nears. 
These are averages across all flights in our data, and we see that, on average, business passengers take more expensive flights than leisure passengers, motivating our modeling of flight heterogeneity in our analysis below.
In Figure \ref{fig:timetrend}(b), we display the proportion of business to leisure travelers across all flights by the advance purchase categories. In the last two months before the flight, the share of passengers traveling for leisure is approximately 90\%. This share decreases to 65\% a week before the flight. Taken together, business travelers purchase closer to the flight date than leisure travelers, and markets with a greater proportion of business travelers have a steeper price gradient.

\begin{figure}[htbp]
\begin{center}
  \caption{\em Histogram of Percent of Nonstop Business Passengers by Flight} \label{fig:mktbus}
  \includegraphics[scale=0.4]{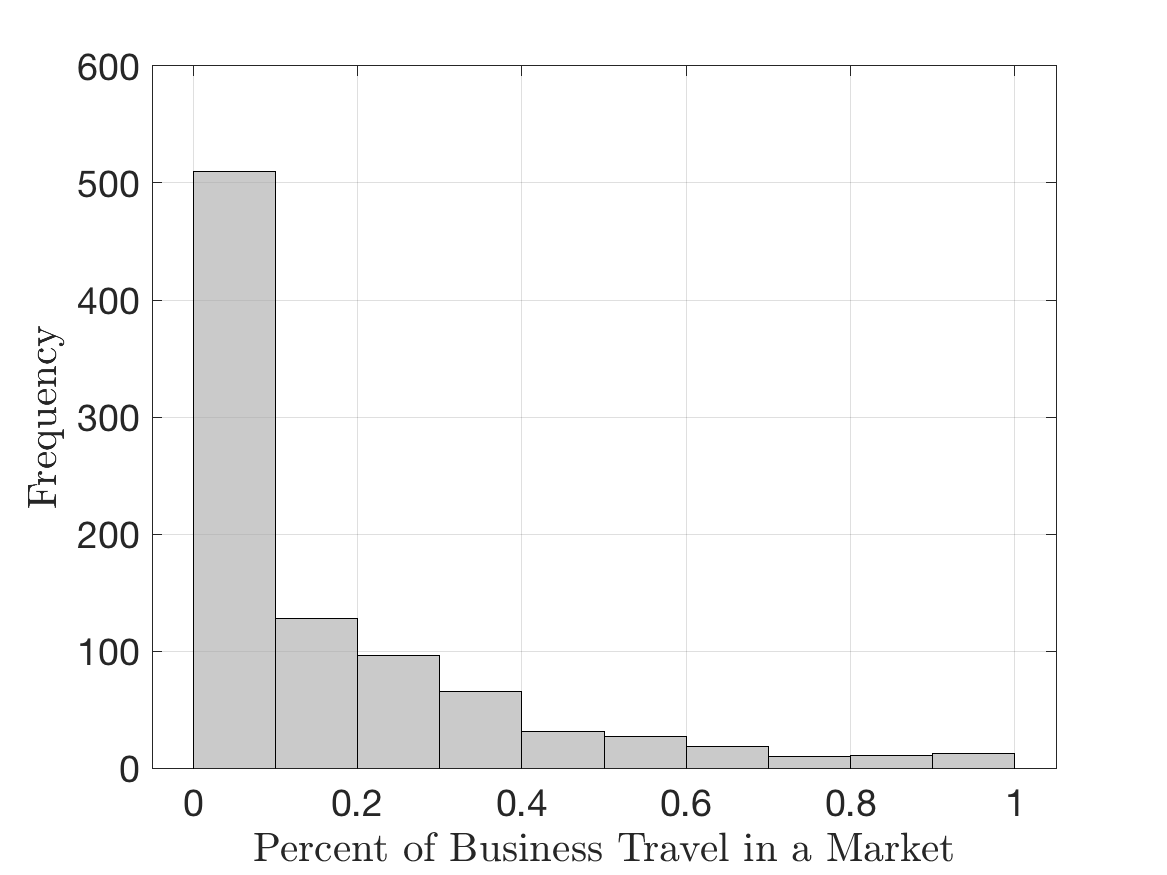}
\end{center}
\begin{figurenotes}
 Histogram of the business-travel index (BTI). The business-traveler index is the flight-specific ratio of self-reported business travelers to leisure travelers. The mean is 0.154, and the standard deviation is 0.210.
 \end{figurenotes}
\end{figure}

Observing the purpose of travel plays an important role in our empirical analysis, reflecting substantial differences in the behavior and preferences of business and leisure passengers. This passenger heterogeneity across markets drives variation in pricing, and this covariation permits us to estimate a model with richer consumer heterogeneity than the existing literature like \cite{BerryCarnallSpiller2006}, and \cite{CilibertoWilliams2014}. Further, a clean taxonomy of passenger types allows a straightforward exploration of the role of asymmetric information in determining inefficiencies and the distribution of surplus that arises from discriminatory pricing of different forms.\footnote{Passengers can also use reward points to buy their tickets, but we do not consider them here. However, reward points are not common in our data. Only 3.5\% of tickets are bought using reward points, and 5.46\% of passengers report that the main reason they choose their airline is frequent-flyer miles. In our final (estimation) sample, 1.7\% passengers get upgraded, i.e., buy economy but fly first-class. Our working hypothesis is that those upgrades materialized only on the day of travel.}

To further explore the influence this source of observable passenger heterogeneity has on fares, we present statistics on across-market variation in the dynamics of fares. Specifically, we first calculate the proportion of business travelers in each market, i.e., across all flights with the same origin and destination. Like \cite{Borenstein2010}, we call this market-specific ratio the business-traveler index (BTI).  In Figure \ref{fig:mktbus}, we present the histogram of the BTI across markets in our data. If airlines know of this across-market heterogeneity and use it to discriminate intra-temporally (across cabins) and inter-temporally (across time before a flight departs), different within-flight temporal patterns in fares should arise.

\begin{figure}[htbp]
    \centering
    \caption{\em Proportion of Business Travelers by Ticket Class  \label{fig:jon1}}
    \begin{subfigure}[b]{0.48\textwidth}
        \includegraphics[width=\textwidth]{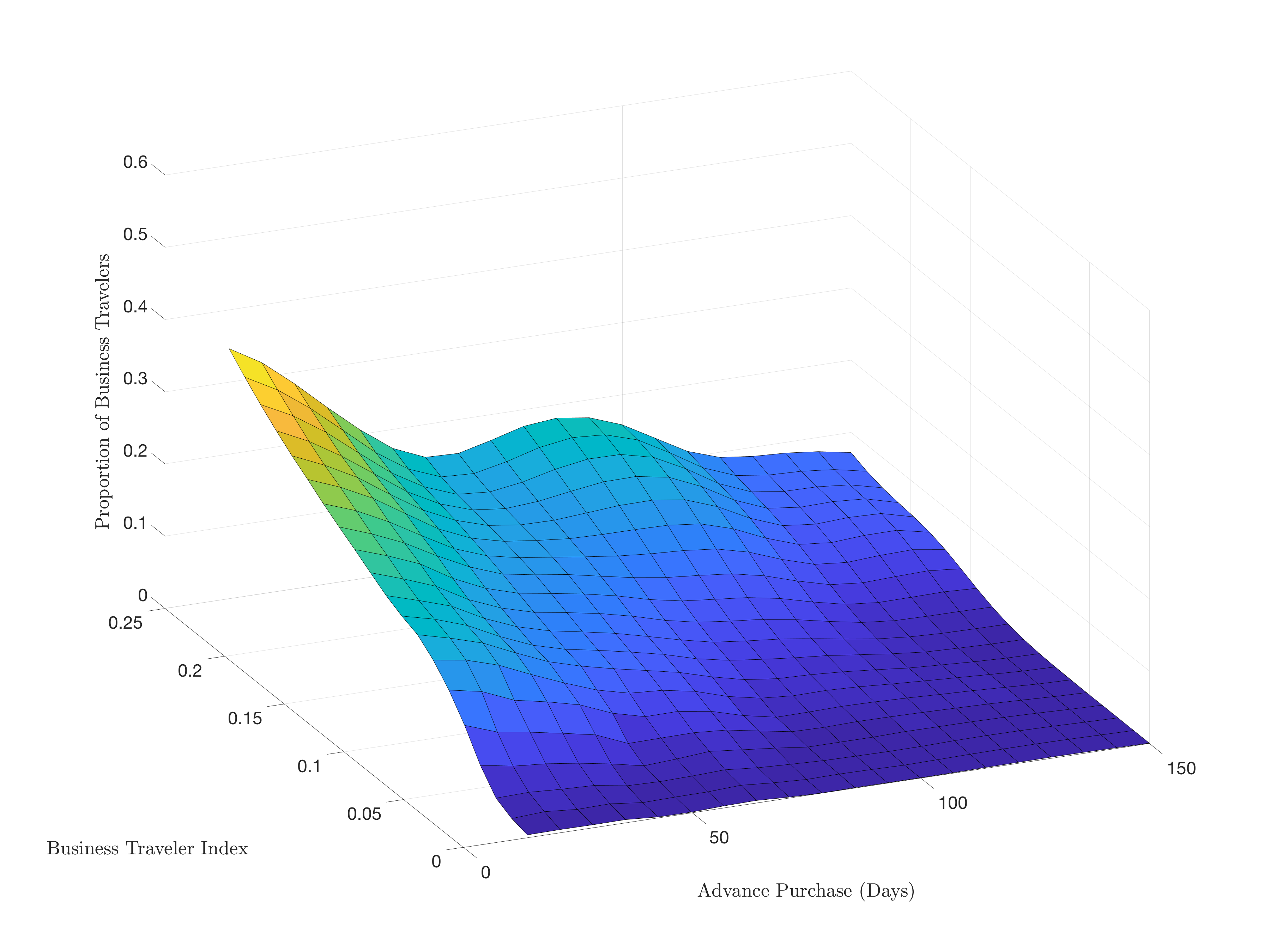}
        \caption{Economy Class}
    \end{subfigure}
    \begin{subfigure}[b]{0.48\textwidth}
        \includegraphics[width=\textwidth]{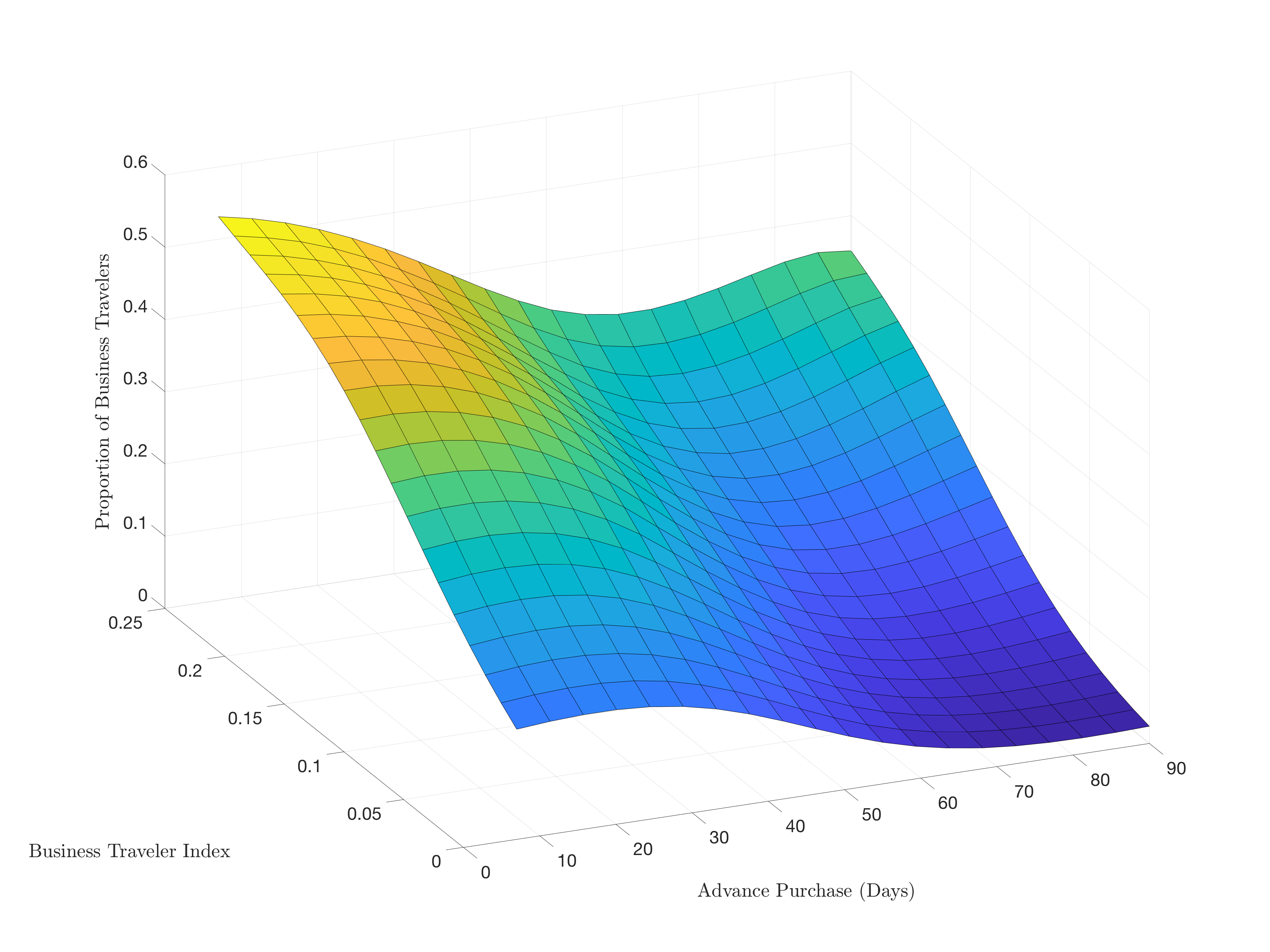}
        \caption{First-Class}
    \end{subfigure}
  \begin{figurenotes}
The figure presents kernel regression of the reason to travel on BTI and the purchase date. Panels (a) and (b) show the regression for the economy and the first-class seats, respectively. The regression uses a Gaussian kernel with optimal ``rule-of-thumb'' bandwidth. \end{figurenotes}
\end{figure}

In Figure \ref{fig:jon1} we present the results of a bivariate kernel regression where we regress an indicator for whether a passenger is traveling for business on the BTI in that market and the number of days the ticket was purchased in advance of the flight's departure. Figures \ref{fig:jon1}(a) and \ref{fig:jon1}(b) present the results for economy and first-class passengers, respectively.  There are two important observations. First, across all values of the BTI, business passengers arrive later than leisure passengers. Second, business passengers disproportionately choose first-class seats. To capture this feature, in Section \ref{model}, we model the difference between business and leisure passengers in terms of the timing of purchases and the preference for quality by allowing the passenger mix to change as the flight date approaches. This modeling approach results in a non-stationary demand process.

\begin{figure}[ht!]
    \centering
        \caption{\em Across-Market Variation in Fares\label{fig:jon2}}
    \begin{subfigure}[b]{0.48\textwidth}
        \includegraphics[width=\textwidth]{./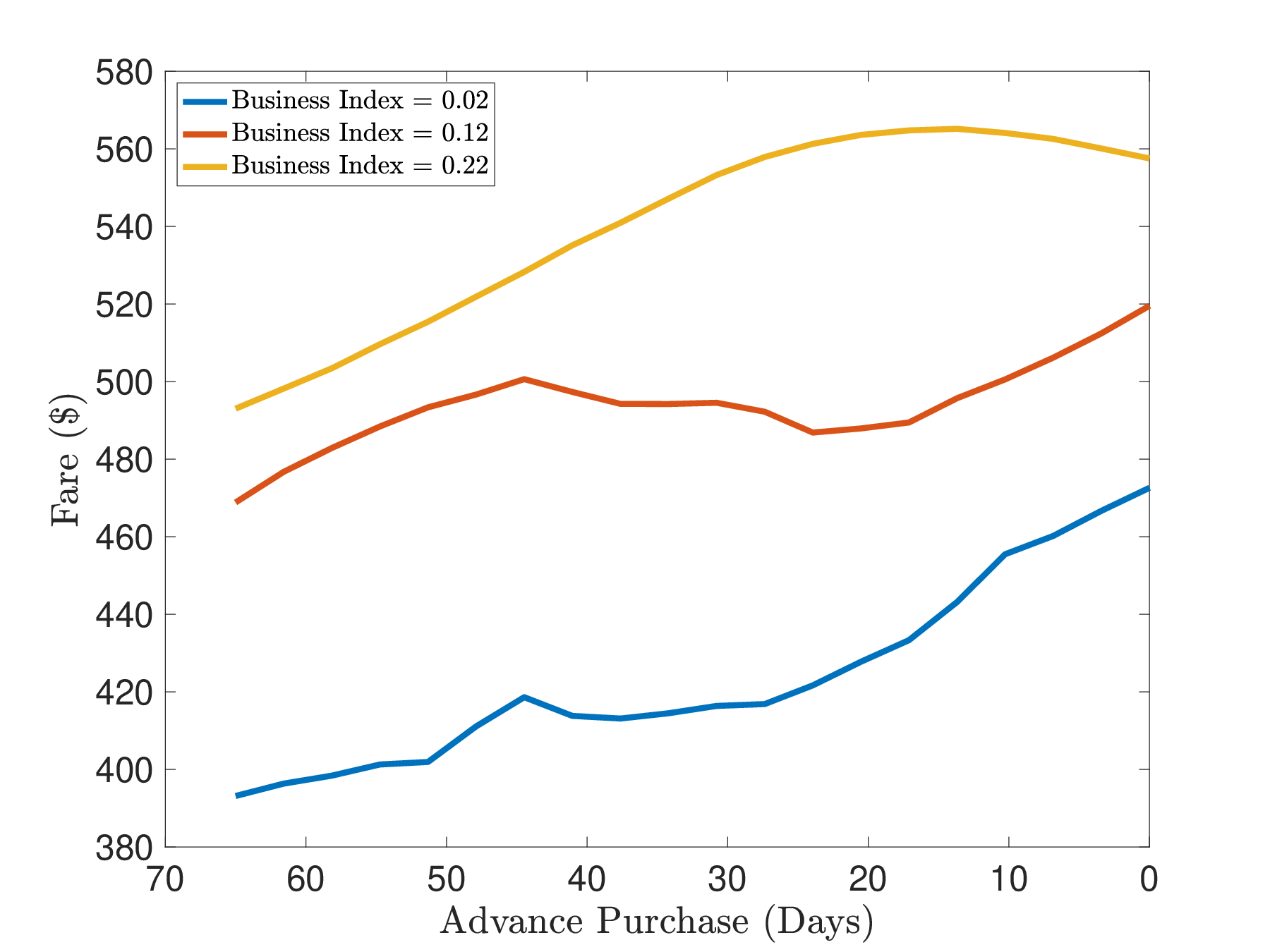}        
        \caption{Economy}
    \end{subfigure}
    \begin{subfigure}[b]{0.48\textwidth}
        \includegraphics[width=\textwidth]{./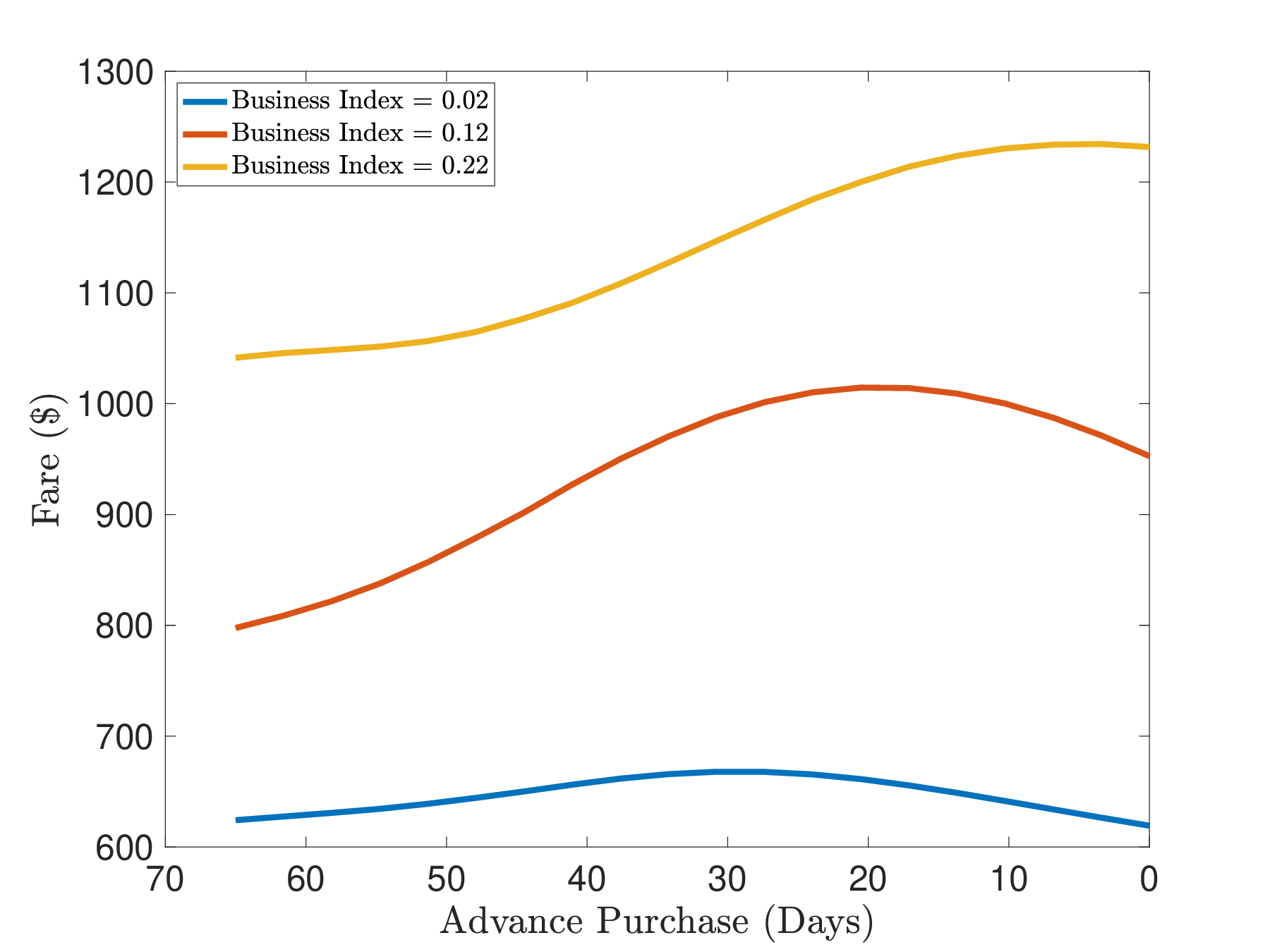}
        \caption{First-Class}
    \end{subfigure}
    \flushleft
   \begin{figurenotes}
 The figure presents results from a kernel regression of fares paid on BTI and the purchase date. Panels (a) and (b) show the regression results evaluated at different BTI values for the economy and the first-class seats, respectively. The regression uses a Gaussian kernel with optimal ``rule-of-thumb'' bandwidth.
\end{figurenotes}
\end{figure}
The influence of business passengers is evident on prices. Like Figure \ref{fig:jon1}, Figure \ref{fig:jon2}(a) and Figure \ref{fig:jon2}(b) present the results of a kernel regression with fare paid as the dependent variable for economy and first-class cabins, respectively. In both, we present cross-sections of these estimated surfaces for the $25^{th}$, $50^{th}$, and $75^{th}$ percentile values of the BTI.  For both cabins, greater values of the BTI are associated with substantially higher fares.

While there are clear patterns in how the dynamics of \textit{average} fares vary with the BTI, there is substantial heterogeneity across flights in how fares change as the flight date approaches. {To see the heterogeneity that Figures \ref{fig:jon1} and \ref{fig:jon2} mask, Figure \ref{fig:price_dists} presents a visualization of the time paths of economy fares for all flights in our data. Specifically, for each flight, we estimate a smooth relationship between economy fares and time before departure using a kernel regression and then normalize the path relative to the initial fare we observe for that flight. Each line is a single flight in our data and is normalized to begin at $1$. This heterogeneity, including some flights with non-monotonic prices, motivates the heterogeneity we include in our analysis.}

\begin{figure}[ht!]
   \centering
       \caption{\em Flight-Level Dispersion in Fares\label{fig:price_dists}}
       \includegraphics[width=.65\textwidth]{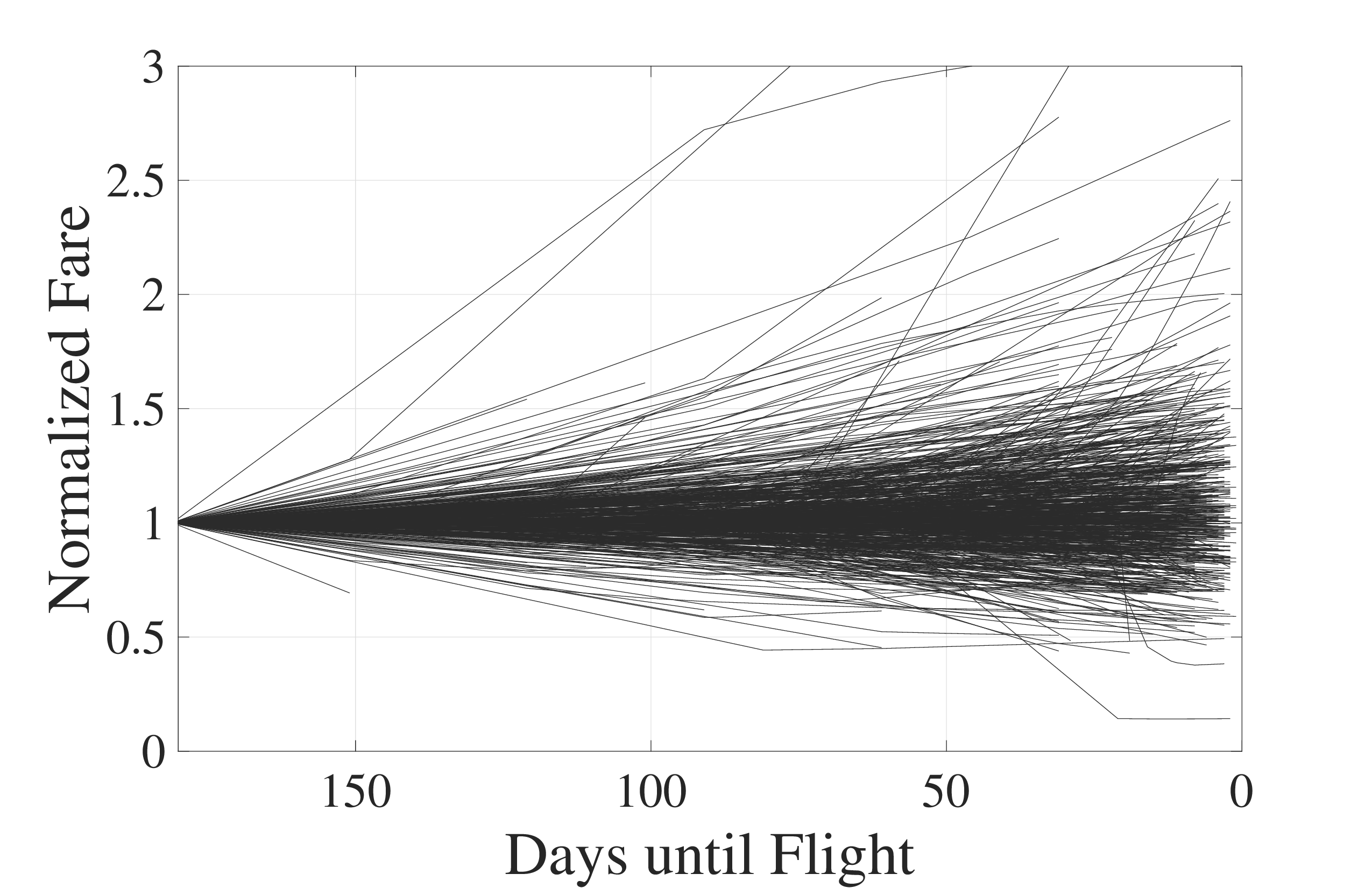}
\begin{figurenotes}
The figure presents the estimated fare paths from every flight we use in our estimation. Each line represents a flight. We normalize the fares to unity by dividing fares by the first fare we observe for the flight. The horizontal axis represents the days until the flight, where 1 is the flight date.
\end{figurenotes}
\end{figure}

For most flights, we observe little movement in fares until approximately 100 days before departure. Nevertheless, for a small proportion of flights, there are substantial decreases and increases in fares as much as five months before departure. Further, by the date of departure, the interquartile range of the ratio of current fare to initial fare is  0.75 to 1.85.
Thus, 25\% of flights experience a decrease of more than 25\%, while 25\% of flights experience an increase of greater than 85\%. The variation in the temporal patterns in fares across flights is attributable to the across-market heterogeneity in the mix of passengers and how airlines respond to demand uncertainty.

\subsection{Aircraft Characteristics}
Airlines' fares and {how they respond} to realized demand depend on the number of unsold seats. In Figure \ref{fig:initialcapacity}(a), we display the joint density of initial capacity of first and economy class in our sample. 
{The smaller of the two peaks corresponds to approximately 160 economy seats and 18 first-class seats, and the higher peak corresponds to approximately 265 economy seats and 47 first-class seats.}
\begin{figure}[ht!]
  \centering
    \caption{\em Initial Capacity and Load Factor\label{fig:initialcapacity}
}
  \begin{subfigure}[b]{0.48\textwidth}
    \includegraphics[width=\textwidth]{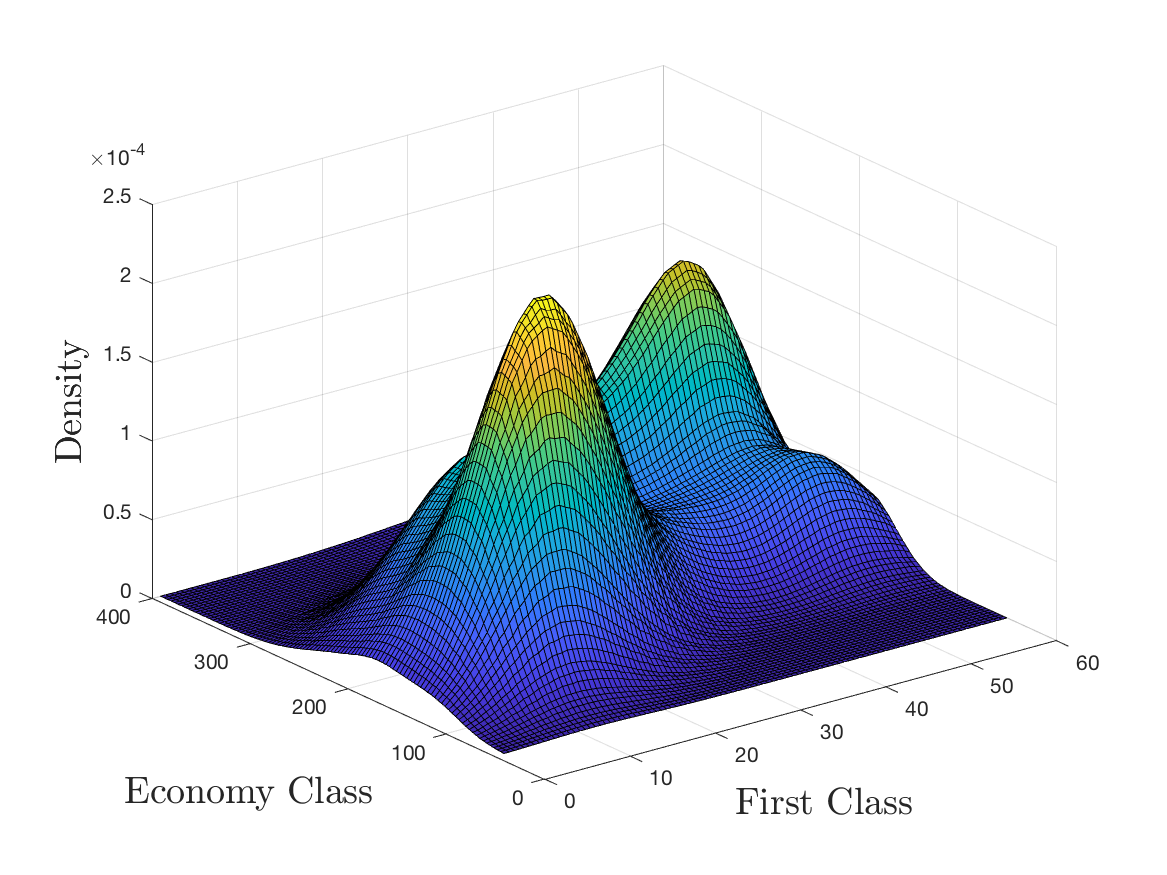}
    \caption{Density of Initial Capacities}
  \end{subfigure}
  \begin{subfigure}[b]{0.48\textwidth}
    \includegraphics[width=\textwidth]{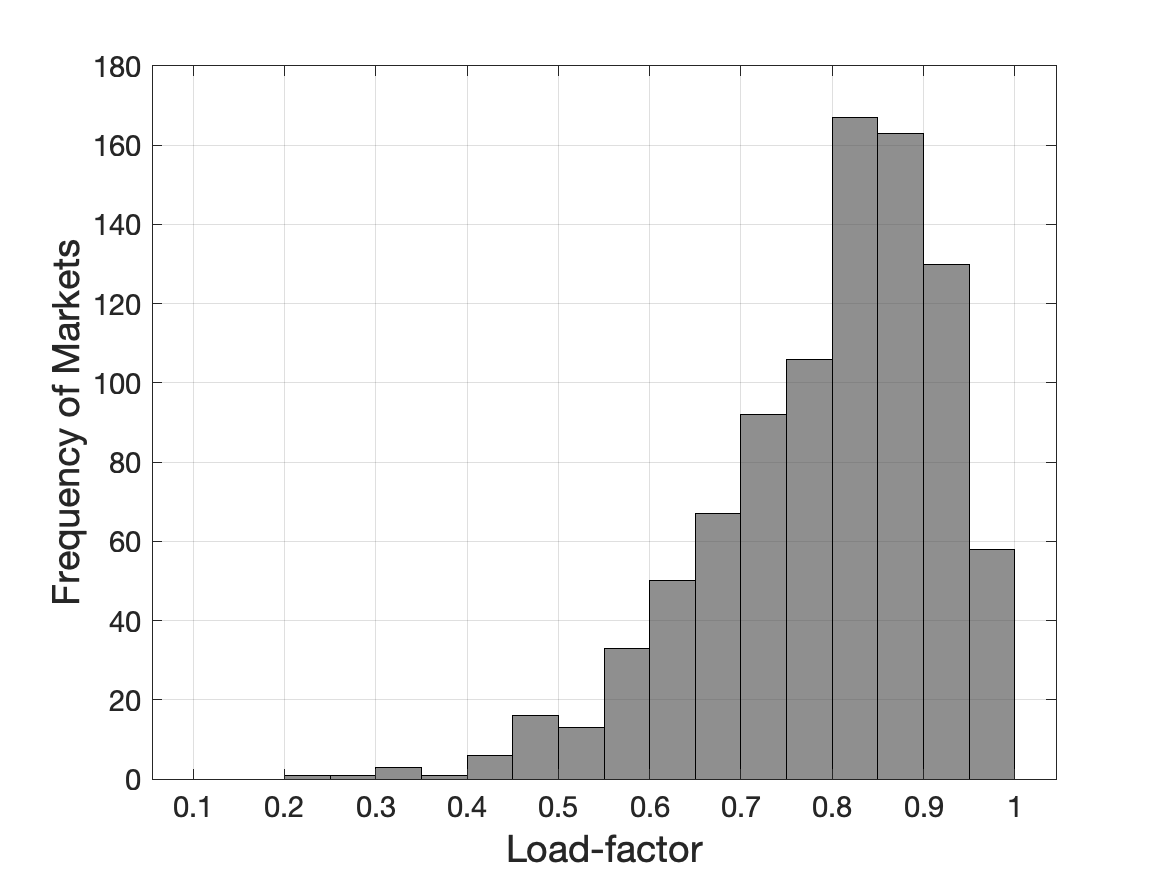}
    \caption{ Histogram of Load Factor}
  \end{subfigure}
  \begin{figurenotes}
 In part (a), this figure presents the Parzen-Rosenblatt Kernel density estimate of the joint density of initial capacities available for nonstop travel. In part (b), this figure presents the histogram of the passenger load factor across our sample.
 \end{figurenotes}

\end{figure}

The five most common aircraft types in our sample are the Boeing 777 (23\%), 767 (13\%), 757 (13\%), 737 (12\%), and the Airbus A340 (8\%). The 777, 747, and A340 are wide-body jets used typically on long-haul flights{, and they roughly correspond to the higher of the two peaks in Figure \ref{fig:initialcapacity}(a)}. The 737 has a typical configuration of around 160 seats, which roughly corresponds to the smaller of the two peaks in Figure \ref{fig:initialcapacity}(a). We refer to these two peaks as small and large modal capacities. Furthermore, the exact capacity configurations change from flight to flight. 

{The SIAT provides a \emph{sample} of passengers on each flight, but we cannot construct the full dynamic paths of seat purchases for individual flights. However, we merge the SIAT data with the Department of Transportation's T-100 segment data to measure the departure load factor for each flight. We discuss this issue later when we discuss the moments we use to estimate our model parameters.}
From the T100, we know the average load factor across a \emph{month} for a particular route flown by a particular type of equipment. In Figure \ref{fig:initialcapacity}(b), we display the histogram of load factor across flights in our sample. The median load factor is 82\%, but substantial heterogeneity exists across flights. {The low load factor itself suggests inefficiencies in this market, but we need a model of airline pricing to quantify this inefficiency.}

Overall, our descriptive analysis reveals several salient features we capture in our model. We find that a business-leisure taxonomy of passenger types helps capture differences in the timing of purchase, willingness-to-pay for an economy seat, and a vertically differentiated first-class seat. Further, we find evidence consistent with airlines responding to substantial heterogeneity in the business or leisure mix of passengers across markets, creating variation in both the level and temporal patterns of fares across markets. Finally, across flights, we observe considerable heterogeneity in fare paths as the flight date approaches. Together, these features motivate our model of non-stationary and stochastic demand and dynamic pricing by airlines that we present in Section \ref{model}, and the estimation in Section \ref{econometrics}.

\section{Model \label{model}}

In this section, we present a model of dynamic pricing by a profit-maximizing multi-product monopoly airline that sells a fixed number of economy seats ($0 \leq K^{e}<\infty$) and first-class seats ($0 \leq K^{f}<\infty$) on a  flight. We assume that nonstop passengers with heterogeneous and privately known preferences (i.e., their willingness-to-pay for an economy and first-class seat) arrive before departure ($T$). Additionally, the plane's capacity is occupied by passengers that use the flight segment as part of a multi-segment trip. Every period the airline has to choose ticket prices and the maximum number of seats to {release} at those prices, but before the demand realizes (for that period).

Our data indicate essential sources of heterogeneity in preferences that differ by reason-for-travel and purchase timing. Further, variability and non-monotonicity in {observed} fares suggest a role for uncertain demand. Our demand-side {model} seeks to flexibly capture this multi-dimensional heterogeneity and uncertainty that serves as an input into the airline's dynamic-pricing problem.
Furthermore, our supply-side {model} seeks to capture an airline's inter-temporal and intra-temporal trade-offs in choosing its optimal policy.

\subsection{Demand}

Let $N_t$ denote the number of nonstop individuals that \textit{arrive} in period $t\in\{1,\ldots,T\}$ to consider buying a nonstop ticket, where $T<\infty$ is the flight date. We model $N_{t}$ as a Poisson random variable with parameter $\lambda_t^n\in\mathbb{R}_+$, i.e., $\E(N_t)=\lambda_t^n$. {Additionally, let $C_t$ denote the number of connecting passengers that utilize the airplane in period $t\in\{1,\ldots,T\}$. We model $C_{t}$ also as a Poisson random variable with parameter $\lambda_t^c\in\mathbb{R}_+$, i.e., $\E(C_t)=\lambda_t^c$.} The airline knows $(\lambda_t^n,\lambda_t^c)$ for $t\in\{1,\ldots,T\}$, but must make pricing and seat-release decisions before the uncertainty over the \emph{realized} number of passenger arrivals is resolved each period. The {nonstop} passengers are one of two types, business or leisure. The probability that a given individual is business varies across time before departure and is denoted by $\theta_t \in[0,1]$.

To model individual preferences of nonstop passengers, we use a pure characteristics approach \citep[e.g.,][]{MussaRosen1978, BerryPakes2007, Molinari2021AER}. Passengers have different willingness-to-pay for flying, but everyone prefers a first-class seat to an economy seat. 
 Let $v\subset\mathbb{R}_{+}$ denote the value a passenger assigns to flying in the economy cabin, and let the indirect utility of this individual from flying economy and first-class at a price $p$, respectively, be
\begin{eqnarray*}
u^{e}(v,p, \xi) = v- p; \quad
u^{f}(v,p, \xi) = v\times \xi- p, \qquad \xi\in[1,\infty).
\end{eqnarray*}

\noindent The (utility) premium associated with flying in a first-class seat, $\xi$ captures the vertical quality differences between the two cabins. Passengers are heterogeneous in terms of their $v$ and $\xi$, and $v$ and $\xi$ are independent and privately known to the individual. Thus, we assume that utility from the outside option, which includes either not flying or flying through other routes, is normalized to zero.

We assume that the preferences across passengers are realizations from type-specific distributions. Specifically, the $v$ of business and leisure passengers are drawn from $F^{b}_v(\cdot)$ and $F^{l}_v(\cdot)$, respectively, and $\xi$ is drawn from $F_{\xi}(\cdot)$. {As $\xi$ enters multiplicatively in the utility function, passengers with higher base willingness-to-pay ($v$) will have a higher added value for first-class for the same draw of $\xi$. Regardless of their type, passengers can choose to fly on an economy-class or a first-class or not at all.}  Together with the arrival process, the type-specific distribution of valuations creates a stochastic and non-stationary demand process that we assume is known to the airline.

{We also model the arrival and purchase process for connecting passengers but using a simpler setup. There are potentially hundreds of connecting itineraries that use any given flight for one of the legs \citep{DixOrzach2022}. So, modeling the pricing process for these passengers would be computationally prohibitive. In the data, we observe how many connecting passengers purchase in each period, which gives us the ``rate" at which connecting passengers fill the plane. To capture this feature, we model connecting passengers as Poisson arrivals that passively take seats in the plane, alternating with the nonstop passengers.}

At given prices and the number of seats available at those prices, Figure \ref{fig:timing} summarizes a realization of the demand process for nonstop passengers. Specifically, demand realization and timing of information known by an airline leading up to a departure is as follows:
\begin{enumerate}[\hspace{1cm}(i)]
 \item Airline chooses a price and seat-release policy for economy cabin, $(p_{t}^{e},\overline{q}_{t}^{e})$, and the first-class cabin, $(p_{t}^{f},\overline{q}_{t}^{f})$, that determine the prices at which a maximum number of seats in the two cabins may be sold.
 \item $N_{t}$ nonstop individuals \textit{arrive}, the number being drawn from a Poisson distribution with parameter $\lambda_t^n$. Each passenger realizes their reason for flying from a Bernoulli distribution with parameter $\theta_{t}$ (i.e., business equals one). Each passenger observes their own $(v,\xi)$, drawn from the respective distributions, $F^{b}_v(\cdot)$, $F^{l}_v(\cdot)$, and $F_{\xi}(\cdot)$.
 \item $C_t$ connecting individuals arrive, the number is drawn from a Poisson distribution with parameter $\lambda_t^c$. A connecting passenger arrival is assigned as a potential economy-cabin passenger with a fixed probability of $r$ and a potential first-class passenger with the complementary probability.
 \item The nonstop and connecting passenger arrivals are allocated a seat alternatingly, but conditional on that, the seats are randomly allocated among either nonstop or connecting passengers. Thus, connecting passengers affects only the seats available for nonstop passengers.
 \item If neither seat-release policy is binding (realized demand does not exceed the number of seats released in either cabin), passengers select their most preferred cabin: first-class if $v\times \xi-p_t^f\geq \max\{0, v-p_t^e\}$, economy if $v-p_t^e\geq \max\{0, v\times \xi - p_t^f\}$, and no purchase if $0\geq \max\{v\times \xi -p_t^f,v-p_t^e\}$. Those passengers choosing the no-purchase option leave the market and never return. If the seat-release policy is binding in either or both cabins, we assume that passengers make sequential decisions in a randomized order until either none remaining wishes to travel in the cabin with capacity or all available seats are allocated.
 \item If there are more connecting passenger arrivals than nonstop, then the connecting passengers are allocated until none are left, or a capacity binds. 
 \item Steps (i)-(iv) repeat until the departure date, $t=T$, or all seats are allocated.

\end{enumerate}

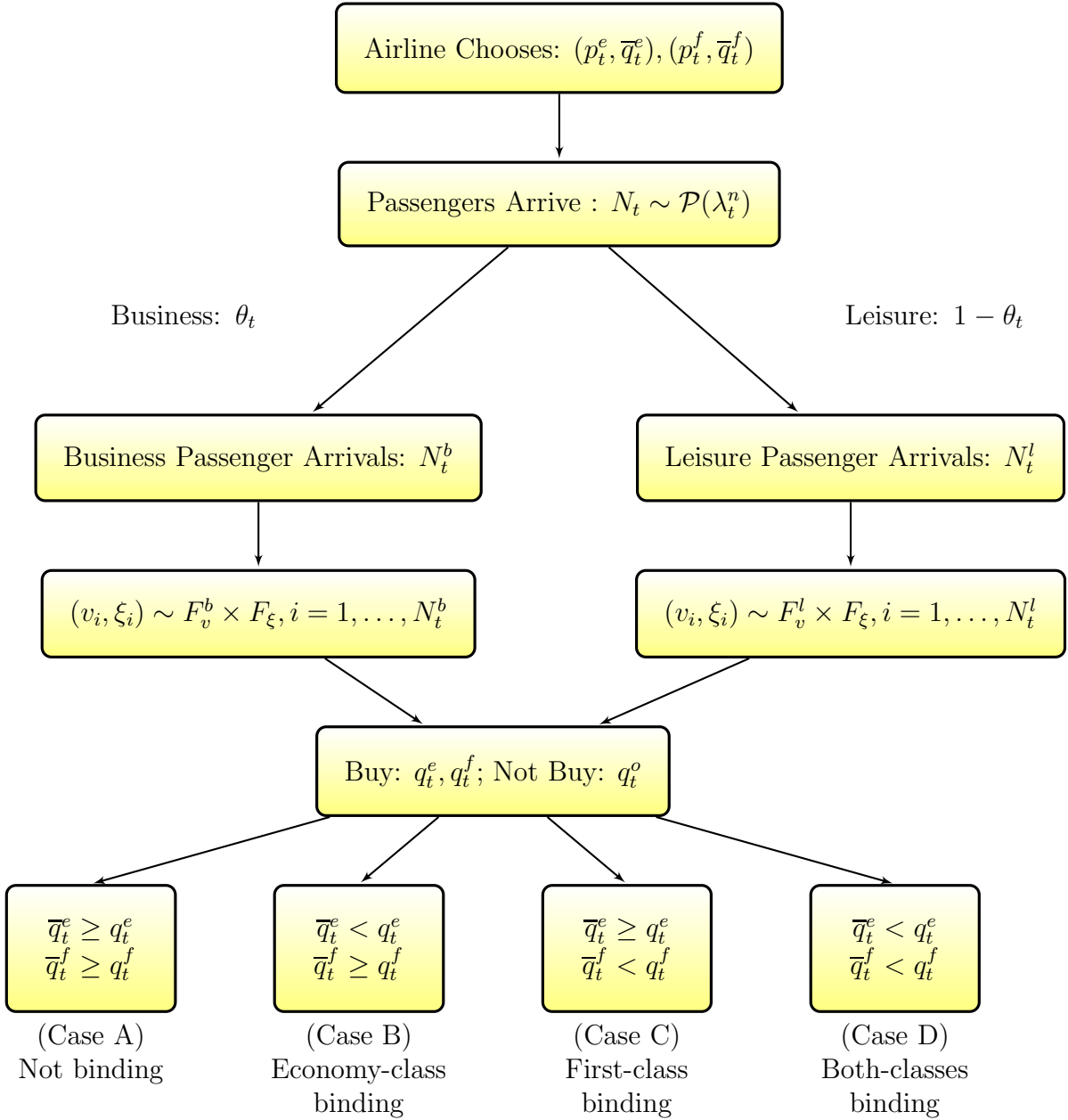
\begin{figure}
\caption{Realization of Demand of Nonstop Passenger Arrivals}\label{fig:timing}
\centering
\vspace{0.25cm}
\begin{tikzpicture}[node distance=1cm, scale=0.22]
\tikzset{
 mynode/.style={rectangle,rounded corners,draw=black, top color=white, bottom color=yellow!50,very thick, inner sep=1em, minimum size=3em, text centered},
 myarrow/.style={->, >=latex', shorten >=1pt, thick},
 mylabel/.style={text width=7em, text centered}
}
\node[mynode] (choices) {Airline Chooses: $(p_{t}^{e}, \overline{q}_{t}^{e}), (p_{t}^{f},\overline{q}_{t}^{f})$};
\node[mynode, below = of choices] (arrival) {Passengers Arrive : $N_{t}\sim {\mathcal P}(\lambda_{t}^n)$};
\node[below=3cm of arrival] (dummy) {};
\node[mynode, left=of dummy] (business) {Business Passenger Arrivals: $N_{t}^{b}$};
\node[mynode, right=of dummy] (leisure) {Leisure Passenger Arrivals: $N_{t}^{l}$};
\node[mylabel, below left=of arrival] (label1) {Business: $\theta_t$};
\node[mylabel, below right=of arrival] (label2) {Leisure: $1-\theta_{t}$};
\node[mynode, below = of business ] (wtp1) {$(v_{i},\xi_{i})\sim F^{b}_{v}\times F_{\xi}, i=1,\ldots, N_{t}^{b}$};
\node[mynode, below = of leisure ] (wtp2) {$(v_{i},\xi_{i})\sim F^{l}_{v}\times F_{\xi}, i=1,\ldots, N_{t}^{l}$};
\node[mynode, below = of wtp1, xshift=3.5cm] (demand) {Buy: $q_{t}^{e}, q_{t}^{f}$; Not Buy: $q_{t}^{o}$};
\node[mynode, below = of demand, xshift=-6cm] (case1) {$\begin{array}{c}\overline{q}_{t}^{e}\geq q_{t}^{e}\\ \overline{q}_{t}^{f}\geq q_{t}^{f}\end{array}$};
\node[mynode, below = of demand, xshift=-2cm] (case2) {$\begin{array}{c}\overline{q}_{t}^{e}< q_{t}^{e}\\ \overline{q}_{t}^{f}\geq q_{t}^{f}\end{array}$};
\node[mynode, below = of demand, xshift=2cm] (case3) {$\begin{array}{c}\overline{q}_{t}^{e}\geq q_{t}^{e}\\ \overline{q}_{t}^{f}< q_{t}^{f}\end{array}$};
\node[mynode, below = of demand, xshift=6cm] (case4) {$\begin{array}{c}\overline{q}_{t}^{e}< q_{t}^{e}\\ \overline{q}_{t}^{f}< q_{t}^{f}\end{array}$};
\draw[myarrow] (choices) -- (arrival);
\draw[myarrow] (arrival) -- (business);
\draw[myarrow] (arrival) -- (leisure);
\draw[myarrow] (business) -- (wtp1);
\draw[myarrow] (leisure) -- (wtp2);
\draw[myarrow] (wtp1) -- (demand);
\draw[myarrow] (wtp2) -- (demand);
\draw[myarrow] (demand) --(case1.north);
\draw[myarrow] (demand) --(case2.north);
\draw[myarrow] (demand) --(case3.north);
\draw[myarrow] (demand) --(case4.north);
\node[mylabel, below =of case1, yshift =1cm] {(Case A)\\ {Not binding}};
\node[mylabel, below =of case2, yshift =1cm] {(Case B)\\ {Economy-class binding}};
\node[mylabel, below =of case3, yshift =1cm] {(Case C)\\{First-class binding}};
\node[mylabel, below =of case4, yshift =1cm] {(Case D)\\{Both-classes binding}};
\end{tikzpicture}
\begin{figurenotes}
A schematic representation of the timing of demand for nonstop passenger arrivals.
\end{figurenotes}

\end{figure}

In any given period ($t$), there are four possible outcomes given a demand realization: neither seat-release policy is binding, first-class or economy class seat-release policies are binding, or both are binding. If the seat-release policy is not binding for either of the two cabins, then the expected demand for the respective cabins in period $t$ when the airline chooses policy $\chi_t:=(p_{t}^{e},\overline{q}_{t}^{e},p_{t}^{f},\overline{q}_{t}^{f})$ is
\begin{eqnarray*}
\E_t(q^e;\chi_t):=\sum_{n=0}^\infty \left\{n\times\Pr(N_t=n) \underbrace{\Pr(v-p_{t}^{e}\geq \max\{0, v\times\xi-p_{t}^{f}\})}_{:=P_t^e(\chi_t)}\right\}=\lambda_t \times P_t^e(\chi_t);\label{eq:demand1}\\
\E_t(q^f;\chi_t):=\sum_{n=0}^\infty \left\{n\times\Pr(N_t=n) \underbrace{\Pr(v\times \xi-p_{t}^{f}\geq \max\{0, v-p_{t}^{e}\})}_{:=P_t^f(\chi_t)}\right\}=\lambda_t\times P_t^f(\chi_t).\label{eq:demand2}
\end{eqnarray*}
\noindent If one or both of the seat-release policies are binding, the rationing process creates the possibility for inefficiencies by excluding passengers with a greater willingness-to-pay than those allocated a seat and misallocating passengers across cabins.

In Figure \ref{fig:seat-allocation-random}, we present a simple example to illustrate inefficiency arising from asymmetric information in this environment under random allocation. Assume the airline has one first-class and two economy seats remaining and chooses to release one seat in each cabin at $p^f=2000$ and $p^e=500$. Suppose three nonstop passengers arrive with values $v_1 = 2500$, $v_2 = 1600$, and $v_3 = 5000$, with $\xi_1=\xi_2=2$ and $\xi_3=1$. Passengers $n_1$ and $n_2$ are willing to pay twice as much for a first-class seat as an economy seat, whereas passenger $n_3$ values the two cabins equally. Furthermore, suppose two connecting passengers arrive, and out of these two, suppose the first connecting passenger $c_1$ is to be allocated an economy seat, and $c_2$ is to be allocated a first-class seat, if available. 
Suppose that under the random allocation rule, nonstop passengers get to choose before connecting, and among nonstop passengers, $n_2$ gets to choose first, and $n_3$ is the last, and among connecting passengers, suppose $c_1$ is allocated first. 

As shown in Figure \ref{fig:seat-allocation-random}, the final allocation is inefficient because a) passenger $n_2$ gets first-class even though $n_1$ values it more; and b) both nonstop passengers $n_1$ and $n_3$ do not get any seat. This difference in passengers' arrival time allows multiple welfare-enhancing trades. Given the limited opportunity for coordination among passengers to make such trades and the legal or administrative barriers, we believe random rationing is reasonable to allocate seats within a period.\footnote{Passengers considering flying are not always in the marketplace, looking for the best deal on tickets. They arrive at different times for exogenous reasons. Random rationing captures this realistic feature of the data. The seat release policies, presence of connecting passengers, and random assignments imply that there may be instances when a potential passenger with a high willingness-to-pay shows up to the market and cannot be served because a low willingness-to-pay passenger arrived earlier and was allocated the seat. If, instead, we use optimal rationing, where seats are assigned in the order of willingness-to-pay, it would lead to higher baseline efficiency.}

\begin{figure}
\caption{Illustration of Random Rationing Rule.}\label{fig:seat-allocation-random}
\begin{center}
\begin{tikzpicture}
[font=\sffamily,
 every matrix/.style={ampersand replacement=\&,column sep=1cm,row sep=1cm},
 source/.style={draw,thick,rounded corners,fill=yellow!20,inner sep=.23cm},
 process/.style={draw,thick,circle,fill=green!20},
 sink/.style={source,fill=yellow!20},
 sinks/.style={source,fill=green!20},
 datastore/.style={draw,very thick,shape=datastore,inner sep=.23cm},
 dots/.style={gray,scale=2},
 to/.style={->,>=stealth',shorten >=1pt,semithick,font=\sffamily\footnotesize},
 every node/.style={align=center, scale=0.7}]

 % Position the nodes using a matrix layout
 \matrix{
 \node[source] (hisparcbox) {Capacity: $\quad K^f=1, K^e=2$};
 \& \node[process] (daq) {$p^f=2000; \overline{q}^f=1$\\$p^e=500, \overline{q}^e=1$}; \& \\

 \& \node[datastore] (buffer) {$\begin{array}{ccc}v_1=1800& v_2=1600, & v_3=1900\\\xi_1=2, & \xi_2=2, & \xi_3=1\end{array}$}; \& \\

  \& \node[sinks] (monitor) {$\begin{array}{ccc}\texttt{nonstop}&\texttt{passenger-id}&\texttt{preference}\\ yes& n_1 & f\succ e\succ o\\yes&n_2&f\succ e\succ o\\yes&n_3&e\succ o\succ f\\no& c_1 & -\\no & c_2 &-\end{array}$};
 \& \node[sink] (datastore) {$\begin{array}{cc}\texttt{passenger-order}&\texttt{allocation}\\n_2& f\\c_1&e\\n_1 &o\\n_3 &o\\c_2&o\end{array}$}; \\
 };

 % Draw the arrows between the nodes and label them.
 \draw[to] (hisparcbox) -- node[midway,above] {prices}
 node[midway,below] {seats} (daq);
 \draw[to] (daq) -- node[midway,left] {demand}node[midway,right]{realization} (buffer);
 \draw[to] (buffer) --
 node[midway,left] {preference}node[midway, right]{ordering} (monitor);

 \draw[to] (monitor) -- node[midway,above] {random}
 node[midway,below] {allocation}(datastore);
\end{tikzpicture}
\end{center}
\begin{figurenotes}
Example to demonstrate how random-rationing rule can generate inefficiency in the model.
\end{figurenotes}
\end{figure}
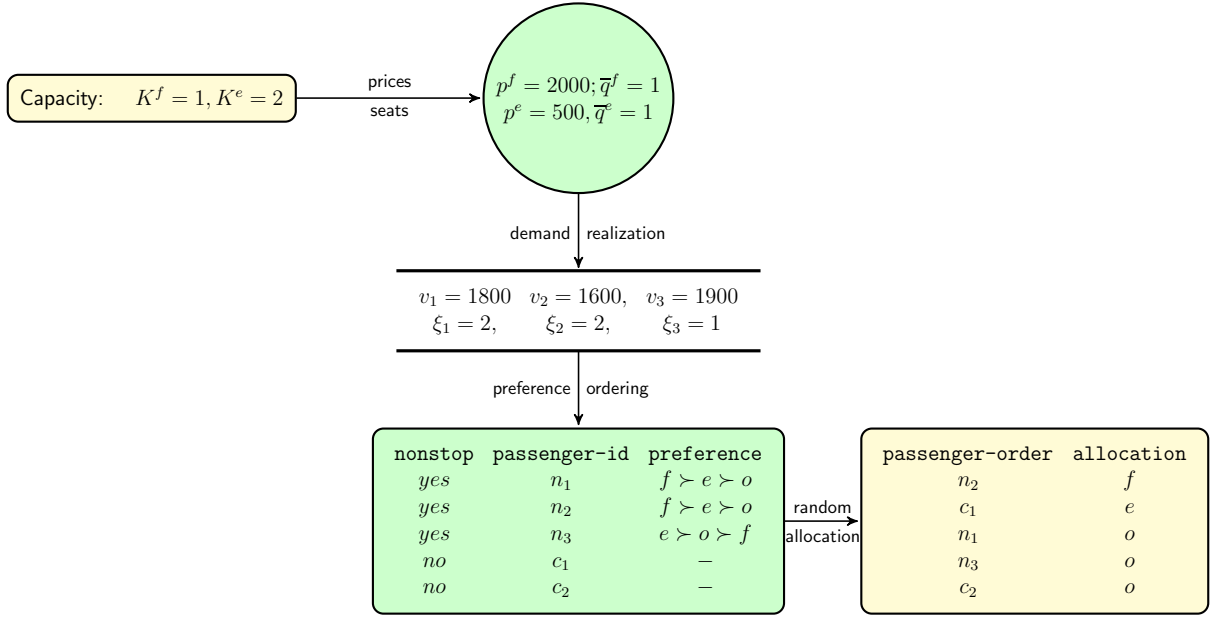

\subsection{Supply}

The airline has $T$ periods before the departure to sell $K^{e}$ and $K^{f}$ economy and first-class seats, respectively. Each period, the airline chooses prices $\{p_{t}^{e}, p_{t}^{f}\}$ and commits to selling no more than $\{\overline{q}_{t}^{e}, \overline{q}_{t}^{f}\}\leq \omega_t$ seats at those prices, where $\omega_t := (K_{t}^{e},K_{t}^{f})$ is the number of unsold seats in each cabin. {The airline cannot observe passengers' reasons for travel, so it cannot price differently for business and leisure passengers.} We model that airlines must commit to a seat-release policy to mimic the ``fare bucket'' strategy that airlines use in practice \citep[e.g.,][]{AlderighiNicoliniPiga2015}, which helps the airline insure against a favorable demand shock where too many seats are sold today at the expense of future passengers with higher willingness-to-pay. One of this market's defining characteristics is that the airline must commit to policies every period before realizing the demand. The airline does not observe a passenger's reason to fly or valuations $(v,\xi)$; however, the airline knows the underlying stochastic process that governs demand and uses the information to price discriminate, both within and across periods.\footnote{See \cite{BarnhartBelobabaOdoni2003} for an overview of forecasting airline demand.}

Let $c^e$ and $c^f$ denote the constant cost of servicing a passenger in the respective cabins.
These marginal costs, or the so-called ``peanut costs,'' capture variable costs like food and beverage service that do not vary with the timing of the purchase but may vary with the differential service given in the two cabins. Let $\Psi := (\{F^{b}_{v}, F^{l}_{v}, F_{\xi}, c^f, c^e \}, \{\lambda_{t}^{n}, \theta_{t}, \lambda_{t}^{c}\}_{t=1}^{T})$ denote the vector of demand and cost primitives.

The airline maximizes the sum of discounted expected profits by choosing price and seat-release policies for each cabin, $\chi_{t}= \left( p_{t}^{e}, p_{t}^{f} ,\overline{q}_{t}^{e}, \overline{q}_{t}^{f} \right)$, in each period $t=1,\ldots, T$ given $ \omega_{t}$. The optimal policy is a vector $\{\chi_t: t=1, \ldots, T\}$ that maximizes expected profit
 \begin{eqnarray*}
 \sum_{t=1}^{T} \mathbb{E}_{t} \left\{ \pi (\chi_{t},\omega_{t};\Psi_{t})\right\},
 \end{eqnarray*}
where $\pi
 (\chi_{t},\omega_{t};\Psi_{t}) = (p_{t}^{f}-c^{f})q_{t}^{f}+ (p_{t}^{e}-c^{e})q_{t}^{e}$ is the per-period profit after the demand for each cabin is realized ($q_{t}^{e}$ and $q_{t}^{f}$) and $\Psi_{t} = (\{F^{b}_{v}, F^{l}_{v}, F_{\xi}, c^f, c^e \}, \{\lambda_{t}^n, \theta_{t}, \lambda_{t}^c\})$.
 When choosing its policy, the airline observes the unsold capacity ($\omega_{t}$) but not the particular realization of passenger valuations that determine the realized demand. The optimal seat-release policy must satisfy $\overline{q}_{t}^{e} \leq K_{t}^{e}$ and $\overline{q}_{t}^{f} \leq K_{t}^{f}$ and take on integer values.

The stochastic process for demand, capacity-rationing algorithm, connecting passengers, and optimally chosen seat-release and pricing policies induce a non-stationary transition process between states, $Q_{t}(\omega_{t+1} | \chi_{t},\omega_{t},\Psi_{t})$. The optimal policy in periods $t\in\{1,\ldots, T-1\}$ is characterized by the solution to the Bellman equation,
\begin{eqnarray}
V_{t}(\omega_{t},\Psi)=\max_{\chi_{t}}\mathbb{E}_{t}\left\{
\pi(\chi_{t},\omega_{t};\Psi_{t})+
 \sum_{\omega\in\Omega_{t+1}}V_{t+1}(\omega_{t+1},\Psi)\times Q_{t}(\omega_{t+1}|\chi_{t},\omega_{t},\Psi_{t})\right\},\label{eq:value}
\end{eqnarray}
\noindent where $\Omega_{t+1}$ represents the set of reachable states in period $t+1$ given $\omega_{t}$ and $\chi_{t}$. The expectation, $\mathbb{E}_{t}$, is over realizations from the demand process ($\Psi_{t}$) from period $t$ to the date of departure $T$. In period $T$, optimal prices maximize
\begin{eqnarray*}
V_{T}(\omega_{T},\Psi_{T})=\max_{\chi_{T}}\mathbb{E}_{T}
\pi(\chi_{T},\omega_{T};\Psi_{T}),\label{eq:vT}
\end{eqnarray*}
because the firm no longer faces any inter-temporal trade-offs. We assume that passengers are short-lived for model tractability and cannot strategically time their purchases. So, the time they arrive at the marketplace and the time they purchase their tickets are the same and do not depend on the price path.\footnote{\cite{LiGranadosNetessine2014} use an instrumental variables strategy to study the strategic behavior of passengers and infer that between 5\% and 20\% of passengers wait to purchase in a sample of domestic markets, with the share decreasing in the market distance. We expect the share to be small in our context because we study international long-haul markets.} \cite{BoardSkrzypacz2016} allow consumers to be strategic under an additional assumption that the seller has a full commitment and chooses its supply function only once, in the first period. However, the assumption that airlines choose their fares only once at the beginning is too strong in the airline industry.
 The dynamic programming that characterizes an airline's problem helps us understand the shadow cost of an unsold seat.\footnote{Throughout the paper, we focus on only one flight and do not consider substitution across future flights.}

The optimal pricing strategy includes both inter-temporal and intra-temporal price discrimination. First, given the limited capacity, the airline must weigh allocating a seat to a passenger today versus a passenger tomorrow, who may have a higher mean willingness-to-pay because the fraction of business passengers increases as it gets closer to the flight date. This decision is difficult because both the volume ($\lambda_{t}$) and composition ($\theta_{t}$) of demand change as the date of departure nears. Thus, the good's perishable nature does not necessarily generate declining price paths like \cite{Sweeting2010}. Simultaneously, the airline must allocate passengers across the two cabins every period by choosing $\chi_t$ so that the price and supply restriction-induced selection into cabins is optimal.

To understand the problem further, consider the trade-off faced by an airline from increasing the price for economy seats today: (i) it decreases the expected number of economy seat purchases but increases the revenue associated with each purchase; (ii) it increases the expected number of first-class seat purchases but causes no change to revenue associated with each purchase; (iii) it increases the expected number of economy seats and decreases the expected number of first-class seats available to sell in future periods. Effects (i) and (ii) capture the multi-product trade-off faced by the firm, while (iii) capture the inter-temporal trade-off. More generally, differentiating Equation \ref{eq:value} with respect to the two prices gives two first-order conditions that characterize optimal prices given a particular seat-release policy:
\begin{eqnarray}
\left(\begin{array}{c}\mathbb{E}_{t}(q^{e};\chi_{t})\\ \mathbb{E}_{t}(q^{f};\chi_{t})\end{array}\right)
+\left[\begin{array}{cc}
\frac{\partial \mathbb{E}_{t}(q^{e};\chi_{t})}{\partial p_{t}^{e}}& -\frac{\partial \mathbb{E}_{t}(q^{f};\chi_{t})}{\partial p_{t}^{e}}\\-\frac{\partial \mathbb{E}_{t}(q^{e};\chi_{t})}{\partial p_{t}^{f}}&\frac{\partial \mathbb{E}_{t}(q^{f};\chi_{t})}{\partial p_{t}^{f}}
\end{array}\right] \left(\begin{array}{c}p_{t}^{e}-c^{e}\\p_{t}^{f}-c^{f}\end{array}\right)=\left(\begin{array}{c}\frac{\partial {\mathbb E}_{t}V_{t+1}}{\partial p_{t}^{e}}\\
\frac{\partial {\mathbb E}_{t}V_{t+1}}{\partial p_{t}^{f}}\end{array}\right)\label{eq:foc-f}.
\end{eqnarray}
The left side is the contemporaneous marginal benefit net of the ``peanut costs," while the right is the discounted future benefit.

Equation \ref{eq:foc-f} makes clear the two components of marginal cost: (i) the constant variable cost, or peanut cost, associated with servicing seats occupied by passengers; (ii) the opportunity cost of selling additional seats in the current period rather than in future periods. We refer to (iii), the vector on the right side of Equation \ref{eq:foc-f}, as the shadow cost of a seat in the respective cabins. These shadow costs depend on the firm's expectations regarding future demand (i.e., variation in the volume of passengers and business-leisure mix as the flight date nears) and the number of seats remaining in each cabin (i.e., $K_{t}^{f}$ and $K_{t}^{e}$). The stochastic nature of demand drives variation in the shadow costs, leading to equilibrium price paths non-monotonic in time. This flexibility is crucial given the variation observed in our data (see Figure \ref{fig:price_dists}).\footnote{The model implies a mapping between prices and the unobserved state (i.e., the number of remaining seats in each cabin). This rules out the inclusion of serially correlated unobservables (to the researcher) that shift demand or costs that could otherwise explain price variation.}

The decision to release an additional economy-class seat has multiple consequences. From an \emph{intra-}temporal point of view, an additional economy-class seat will lead to higher expected economy-class sales but might cannibalize revenue from the first-class cabin. From an \emph{inter-}temporal point of view, releasing an additional economy-class seat in this period may cannibalize future economy-class revenue, the same way lowering the fare this period may cannibalize future revenue. 

Formally, and holding prices fixed, the expected discounted profit from choosing a seat release policy $\bar{q}^e$ must be greater than an alternative of releasing one extra seat. The following inequality captures this trade-off:
\begin{eqnarray}
 \underbrace{[\mathbb{E}_{t}(q^{e}; {(\overline{q}^e+1)}) - \mathbb{E}_{t}(q^{e}; {\overline{q}^e})]\times (p^e_t-c^e)}_{\text{economy sales (+)}} &\leq& -\underbrace{[\mathbb{E}_{t}(q^{f}; {(\overline{q}^e+1)}) - \mathbb{E}_{t}(q^{f}; {\overline{q}^e})]\times (p^f_t-c^f)}_{\text{first-class sales (-)}} \notag\\
 &&- \underbrace{(\mathbb{E}_t V_{t+1}({(\overline{q}^e+1)}) - \mathbb{E}_t V_{t+1}({\overline{q}^e}))}_{\text{future expected profits (-)}}.\label{eq:inequality1}
\end{eqnarray}
The term on the left-hand side of \ref{eq:inequality1} is weakly positive because there is some chance of filling the extra seat. 
Likewise, if we ignore the minus sign in front, both terms on the right-hand side are weakly negative because releasing extra economy seats may cannibalize first-class today, and there will be fewer economy seats tomorrow, which lowers future expected profit. Similarly, if the airline offers one less economy seat, then the following inequality must hold:
\begin{eqnarray}
 \underbrace{[\mathbb{E}_{t}(q^{e}; {(\overline{q}^e-1)}) - \mathbb{E}_{t}(q^{e}; {\overline{q}^e})]\times (p^e_t-c^e)}_{\text{economy sales (-)}} &\leq& \underbrace{[\mathbb{E}_{t}(q^{f}; {(\overline{q}^e-1)}) - \mathbb{E}_{t}(q^{f}; {\overline{q}^e}]\times (p^f_t-c^f)}_{\text{first-class sales (+)}} \notag\\
 &&+ \underbrace{(\mathbb{E}_t V_{t+1}({(\overline{q}^e-1)}) - \mathbb{E}_t V_{t+1}({\overline{q}^e}))}_{\text{future expected profits (+)}}. \label{eq:inequality2}
\end{eqnarray}
Thus, inequalities (\ref{eq:inequality1}) and (\ref{eq:inequality2}) determine the optimal seat release policies. Although we discussed only one economy seat for brevity, similar logic applies to any seat in economy class or first-class. {In Appendix \ref{section:dp} we present simulation of optimal seat release policy and demand for economy seats for two types of markets; one with high arrival rate and the other with low arrival rate. }

The profit function is bounded, so the value function is well defined, and under some regularity conditions, we can show the uniqueness of the optimal policy. 
We present these regularity conditions and the proof of uniqueness in Appendix \ref{section:prooflemma}.

\section{Estimation and Identification\label{econometrics}}
In this section, we discuss the parametrization of the model, estimation methodology (generalized method of moments), and the sources of identifying variation. 
The model's parametrization balances the dimensionality of the parameters and the desired richness of the demand structure, and the estimation algorithm seeks to limit the number of times we have to solve our model due to its computational burden. At the same time, we hope to avoid strong assumptions on the relationship between model primitives and both observable (e.g., business travelers) and unobservable market-specific factors. Our identification discussion provides details of the moments we use in the estimation and how they identify each parameter.

\subsection{Model Parametrization and Solution}

Recall our model primitives, ${\Psi} = (\{F_{b}, F_{l}, F_{\xi}, c^f,c^e \}, \{\lambda_{t}^n, \theta_{t}, \lambda_{t}^c\}_{t=1}^{T})$, include distributions of valuations for business and leisure passengers, $(F_{b}, F_{l})$, distribution of valuations for 1st-class premium, $F_{\xi}$, marginal costs for economy and 1st-class, $(c^f,c^e)$, and the time-varying Poisson arrival rate of passengers, $\lambda_{t}^n$, the fraction of business passengers, $\theta_{t}$, and the Poisson rate of connecting passenger demand, $\lambda_{t}^c$. 
{Recall that the airlines do not observe passengers' reason-to-travel, so for them, the valuation is a linear mixture of business and leisure valuation with weights $\theta_t$ and $(1-\theta_t)$, respectively. }

Motivated by our data, we choose $T=8$ to capture temporal trends in fares and passengers' reasons for travel, where each period is defined as in Table \ref{sumstatTKT}. There are {three} demand primitives, $\lambda_{t}^n$ and $\theta_{t}$, and the connecting passengers, $\lambda_{t}^c$, that vary as the flight date approaches. To {capture} the relationship between the time before departure and these parameters, we use a linear parameterization, 
\begin{equation} 
\theta_{t}:=\min\left\{\Delta^\theta \times(t-1),1 \right\}; \qquad \lambda_{t}^n:= \lambda^n+\Delta^{\lambda}\times (t-1); \qquad \lambda_{t}^c:=\lambda^c + \Delta^{\lambda}\times (t-1),
\end{equation}
where $\Delta^{\theta},\lambda^n,\lambda^c$ and $\Delta^{\lambda}$ are scalar constants. This parametrization of the arrival process permits the volume ($\lambda$ and $\Delta^{\lambda}$) and composition ($\Delta^{\theta}$) of demand to change with time while also limiting the number of parameters to estimate. Here we are assuming that there is no dependence on the rate of arrivals between $t$ and $(t-1)$, which would drastically complicate the model.\footnote{We assume that the rate of change of connecting passenger arrivals is the same as the rate of change for the nonstop passengers, $\Delta^{\lambda}$ based on the aggregate patterns in our data.}

{Three distributions $( F_{b}, F_{l}, F_{\xi})$ determine passenger preferences. We assume that business and leisure passenger valuations distributions, $F_{b}$ and $F_{l}$, are Normal distributions with means $\mu^b$ and $\mu^{\ell}$, respectively, that are left-truncated at zero.} Given the disparity in average fares paid by business and leisure passengers, we assume $\mu^b \geq \mu^{\ell}$, which we model by letting $\mu^b =\mu^{\ell}\times (1+\delta^{b})$ with $\delta^{b} \geq 0$. The two cabins are vertically differentiated, and passengers weakly prefer first-class to the economy. To capture this product differentiation, we assume that the quality premium, $\xi$, equals one plus an Exponential random variable with mean $\mu^{\xi}$. 
{In other words, $\xi$ is the premium that all passengers place on first-class seats. The interaction between $\delta^b$ and $\mu^\xi$ determines the sorting of business travelers disproportionately into first class.}

In our final sample, 92\% of connecting passengers choose an economy-class seat. So we assign a connecting passenger arriving at the market as a potential economy-cabin passenger with a probability of $r=0.92$.
Finally, we fix the marginal cost of supplying a first-class and economy seat, $c^f$ and $c^e$, respectively, to equal industry estimates of marginal costs for servicing passengers. Specifically, we set $c^f=40$ and $c^e=14$ based on information from the International Civil Aviation Organization, Association of Asia Pacific Airlines, and \cite{Doganis2002}.\footnote{\cite{Doganis2002} finds the peanut costs in first-class are 2.9 times that of the economy and the average overall cost (across economy and first-class seats) is \$17.8 (inflation-adjusted). We use the average relative number of economy and first-class cabins to determine the cost of serving each seat. This exercise implies the costs for servicing one first-class and economy are \$40 and \$14, respectively. We chose not to estimate these peanut costs because we cannot identify them separately from market-specific demand parameters, and even if they were identified, estimating market-specific peanut costs would be computationally prohibitive.} The inter-temporal and intra-temporal changes in the shadow costs of seats are the primary reason for price variation. In international travel, where the average fare is substantially greater than domestic travel, these shadow costs are more important than passenger-related services' direct costs. Thus, we expect our estimates and counterfactuals to be less sensitive to our choice of peanut costs.\footnote{Our model does not incorporate taxes. Unit taxes would result in higher marginal costs. However, small changes to marginal costs do not qualitatively affect our results. If taxes are ad valorem, we are likely misinterpreting part of our willingness-to-pay estimate.} 

Given this parametrization of the model, the demand process can be described by a vector of parameters, ${\Psi}=\left( \mu^{\ell},cv^{\ell},\delta^{b},cv^{b},\mu^{\xi},\lambda^n,\lambda^c,\Delta^{\lambda},\Delta^\theta \right)\in[\underline{\Psi}, \overline{\Psi}]\subset\mathbb{R}^9$, where $cv$ denotes coefficient of variation. The model is a finite period non-stationary dynamic program. We solve the model for state-dependent pricing and seat-release policies by working backward; computing expected values for every state in the state space, where the state is the number of seats remaining in each cabin. The optimal policy is the solution to a mixed-integer nonlinear program (MINLP) at each state because seats are discrete and prices are continuous controls.\footnote{We perform computations in MATLAB R2020a, using MIDACO solver \citep{SchlueterGerdts2012}. We use warm start points for speed and accuracy by (a) solving the no-price discrimination problem first and (b) starting the MINLP problem for each state at the solution to an adjacent state.}

\subsection{Estimation}\label{estimation}

Although the parameterization above in Section 4.1 is for a single flight or a specific flight at a specific time between two airports, our data represent many diverse fights, and there may be many observed and unobserved factors that impact model primitives in an unknown way. For example, the distance or commerce between cities may affect willingness-to-pay for a first-class seat. Instead of further parameterizing the model as a function of observables, we propose a flexible approach to estimate the distribution of flight-level heterogeneity. The approach has the benefit of limiting the number of times the model is solved.

To understand our approach, consider the following example. We have many instances of the SEA-TPE route in our data, and we treat the prices and quantities sold on each instance of this route as a separate flight. Demand for such routes may vary across seasons; for example, there may be a higher willingness-to-pay for flights in the summer during the tourist season than in the winter. One approach would be to incorporate the observable characteristics of different flights (e.g., season, sporting events, college attendance) and allow them to affect the willingness-to-pay through some functional form.

Instead, {we use} a random coefficients model to estimate a distribution of demand primitives across flights. So two different instances of the SEA-TPE route (two different flights) are allowed to differ in their demand primitives, and the differences due to seasonality in demand will be captured by (parameters of) the distribution. We take this approach because (1) including enough observables to capture differences across flights would result in too many parameters to feasibly estimate the model, and (2) for our counterfactuals, our primary goal is to learn the distribution of demand and not so much about the relationship between prices and flight-market observable characteristics.

We estimate the model using the generalized method of moments. In particular, our approach combines the methodologies of \cite{Ackerberg2009}, \cite{FoxKimYang2016}, \cite{NevoTurnerWilliams2016}, and \cite{BlundellGowrisankaranLanger2020}. We posit that empirical moments are a mixture of theoretical moments, with a mixing distribution that we know up to a finite-dimensional vector of parameters. To limit the computational burden of estimating these parameters that describe the mixing distribution, we rely on the importance sampling procedure of \cite{Ackerberg2009}. Our estimation proceeds in three steps. First, we calculate moments from the data to summarize the heterogeneity in equilibrium outcomes within and across flights. Second, we solve the model once, at $S$ different parameter values that cover the parameter space $[\underline{\Psi}, \overline{\Psi}]$. Third, we optimize an objective function that matches the empirical moments to the analogous moments for a mixture of candidate data-generating processes. The mixing density that describes across-market heterogeneity in our data is the object of inference.

Specifically, for a given level of observed initial capacity, $\omega_{1}:=(K^{f}_{1}, K^{e}_{1})$, our model produces a data-generating process characterized by parameters $\Psi$ that describe demand and costs. This data-generating process can be described by a set of $N_{\rho}$--many moment conditions that we denote by $\boldsymbol{\rho}(\omega_{1};\Psi)$. We assume that the analogous empirical moment conditions, $\rho(\omega_{1})$, can be written as a mixture of candidate moment conditions, i.e.,
\begin{eqnarray}
 \rho(\omega_{1}) = \int_{\underline{\Psi}}^{\overline{\Psi}}\boldsymbol{\rho}(\omega_{1};{\Psi}) h(\Psi|\omega_{1})d\Psi,\label{eq:mixture}
\end{eqnarray}
where $h(\Psi|\omega_{1})$ is the conditional (on initial capacity $\omega_{1}$) density of $\Psi$.\footnote{Thus, we estimate a different density for each initial capacity $\omega_1$, which captures the possibility that the observed initial capacities may be correlated with the unobserved demand.}

The goal is to estimate the mixing density, $h(\Psi|\omega_{1})$,
that best matches the empirical moments (left side of Equation \ref{eq:mixture}) to the expectation of the theoretical moments (right side of Equation \ref{eq:mixture}). To identify the mixing density, we assume a particular parametric form for $h(\Psi|\omega_{1})$ that reduces the matching of empirical and theoretical moments to a finite-dimensional nonlinear search. Specifically, we let the distribution of $\Psi$ conditional on $\omega_{1}$ be a truncated multivariate normal distribution, i.e., 
\begin{equation*}
\Psi|\omega_{1}\sim h(\Psi|\omega_{1};{\mu_{\Psi}},{\Sigma_{\Psi}}),
\end{equation*}
where $\mu_{\Psi}$ and $\Sigma_{\Psi}$ are the vector of means and covariance matrix, respectively, of the non-truncated distribution. We choose our estimates based on a least-squares criterion
\begin{eqnarray}
\label{eq:objfun}
\left( \hat{\mu}_{\Psi}(\omega_{1}), \hat{\Sigma}_{\Psi}(\omega_{1}) \right)=\arg\min_{ \left( \mu_{\Psi}, \Sigma_{\Psi} \right) } \left( \hat{\rho}(\omega_{1})- \widetilde{\mathbb{E}}({\boldsymbol{\rho}}(\omega_{1};\mu_{\Psi},\Sigma_{\Psi}) )\right)^{\top} \left( \hat{\rho}(\omega_{1})- \widetilde{\mathbb{E}}({\boldsymbol{\rho}}(\omega_{1};\mu_{\Psi},\Sigma_{\Psi})) \right)
\end{eqnarray}
where $\hat{\rho}(\omega_{1})$ is an estimate of the ($M\times1$) vector of empirical moments and $\widetilde{\mathbb{E}}({\boldsymbol{\rho}}(\omega_{1};\mu_{\Psi},\Sigma_{\Psi}))$ is a Monte Carlo simulation estimate of $ \int_{\underline{\Psi}}^{\overline{\Psi}}\boldsymbol{\rho}(\omega_{1};{\Psi}) h(\Psi|\omega_{1};\mu_{\Psi},\Sigma_{\Psi})d\Psi$ equal to
$
 \frac{1}{S}\sum_{j=1}^{S}\boldsymbol{\rho}(\omega_{1};{\Psi}_{j})
$
with the $S$ draws of $\Psi$ taken from $h(\Psi|\omega_{1};\mu_{\Psi},\Sigma_{\Psi})$.\footnote{To model the variance-covariance matrices, we use the representation proposed by \cite{ArchakovHansen2021}. For the consistency of our estimator we assume that, for each initial capacity $\omega_1$, the number of flights and the number of passengers in those flights are sufficiently large so that $\hat{\rho}(\omega_{1})$ is a consistent estimator of the true moment ${\rho}(\omega_{1})$, and our importance sampling procedure to determine $\widetilde{\mathbb{E}}({\boldsymbol{\rho}}(\omega_{1};\mu_{\Psi},\Sigma_{\Psi}) )$ is also consistent. For a formal analysis of the subject, see \cite{GourierouxMonfortRenault1993} Proposition 1.}

The dimensionality of the integral we approximate through simulation requires many draws, and we choose $S=10,000$. Thus, the most straightforward approach to optimization of Equation \ref{eq:objfun} would require solving the model $S=10,000$ times for each value of $(\mu_{\Psi}, \Sigma_{\Psi})$ until a minimum is found. Our model is complex, and the dimensionality of the parameter space to search over makes such an option prohibitive. For this reason, we appeal to the importance sampling methodology of \cite{KloekDijk1978} as adapted in this context by \cite{Ackerberg2009}.

The integral in Equation \ref{eq:mixture} can be rewritten as
\begin{eqnarray*}
\int_{\underline{\Psi}}^{\overline{\Psi}}\boldsymbol{\rho}(\omega_{1};{\Psi}) \frac{h(\Psi|\omega_{1};\mu_{\Psi},\Sigma_{\Psi})}{g(\Psi)}g(\Psi) d\Psi
,\label{eq:impsamp}
\end{eqnarray*}
where $g(\Psi)$ is a known well-defined probability density with strictly positive support for $\Psi \in \left[ \underline{\Psi} , \overline{\Psi} \right] $ and zero elsewhere like $h(\Psi | \omega_{1};\mu_{\Psi},\Sigma_{\Psi})$. Recognizing this, one can use importance sampling to approximate this integral with
\begin{eqnarray*}
\label{simexp}
 \frac{1}{S}\sum_{j=1}^{S}\boldsymbol{\rho}(\omega_{1};{\Psi}_{j}) \frac{h(\Psi_{j}|\omega_{1};\mu_{\Psi},\Sigma_{\Psi})}{g(\Psi_{j})}
\end{eqnarray*}
where the $S$ draws of $\Psi$ are taken from $g(\Psi)$. Thus, the importance sampling serves to correct the sampling frequencies so that it is as though the sampling was done from $h(\Psi|\omega_{1};\mu,\Sigma)$.

The crucial insight of \cite{Ackerberg2009} is that this importance sampling procedure separates the problem of solving the model from the optimization of the econometric objective function.\footnote{Importance sampling approaches have also been used to estimate models of discrete games \citep{bajari2010identification} and demand with large consideration sets \citep{Molinari2021AER}.} That is, we solve the model for a fixed number of $S$ draws of $\Psi$ from $g(\Psi)$, and then $\boldsymbol{\rho}(\omega_{1};{\Psi}_{j})$ is calculated once for each draw. After these calculations, optimization of the objective function to determine $(\hat{\mu}_{\Psi}(\omega_{1}), \hat{\Sigma}_{\Psi}(\omega_{1}))$ simply requires repeatedly calculating the ratio of two densities, $\frac{h(\Psi_{j}|\omega_{1};\mu_{\Psi},\Sigma_{\Psi})}{g(\Psi)}$. To simplify the importance sampling process, we fix the support of $g(\cdot)$ and $h(\cdot)$ to be the same and let $g(\cdot)$ be a multivariate uniform distribution with the support [$\underline{\Psi}$,$\overline{\Psi}$] chosen after substantial experimentation to ensure it encompasses those patterns observed in our data.

To solve Equation \ref{eq:objfun}, we use a combination of global search algorithms and multiple starting values. We repeat this optimization for each $\omega_{1}$ which provides an estimate of the parameters of the distribution of market heterogeneity, $(\hat{\mu}_{\Psi}(\omega_{1}), \hat{\Sigma}_{\Psi}(\omega_{1}))$. To calculate the distribution of demand parameters across all flights, we then appropriately weight each estimate by the probability mass associated with that value of $\omega_{1}$ (Figure \ref{fig:initialcapacity}). We calculate standard errors for the estimates and the counterfactuals by re-sampling the individual passenger observations in the SIAT data. This procedure accounts for errors in survey responses and variation in our moments across flights.

\subsection{Identification}
In this section, we describe the moments we use to estimate the model and how they help inform the parameter estimates. Our model is nonlinear, we use several moments, and they help inform more than one parameter, but below, we discuss the primary relationships between the moments and the parameters. 

We estimate the parameters for twenty initial capacities, $\omega_{1}$, and the results we display in the next section weigh the parameters by the likelihood of each capacity in our sample. 
For a given initial capacity $\omega_1$, we use the following moments: 
\begin{enumerate}[(i)]
 \item the distribution of economy- and first-class fares for each period;
 \item the distribution of the change in fares for each period;
 \item the distribution of the maximum and minimum differences in first-class and economy fares, i.e., $ \max_{t=1,\ldots,T} \{ p^{f}_{t} - p^{e}_{t}\}$ and $ \min_{t=1,\ldots,T}\{p^{f}_{t} - p^{e}_{t}\}$, respectively;
 \item the distribution of the nonstop passengers sold each period as a fraction of total nonstop passengers;
 \item the distribution of load factors across flights, as shown in Figure \ref{fig:initialcapacity}(b);
 \item the distribution of connecting passengers sold each period as a fraction of total connecting passengers;
 \item the proportion of business travelers in each period, as shown in Figure \ref{fig:mktbus}.
\end{enumerate}

We assume that the airline chooses prices optimally, given the remaining capacity and knowledge of the demand process. Therefore, observed prices and changes in prices captured by moments (i) and (ii) directly inform the mean willingness-to-pay and the standard deviation of willingness-to-pay. In particular, since we assume that the willingness-to-pay distribution is constant across periods, first-period prices (before business passengers arrive and before the time-dependent shocks to the arrival process are realized) are particularly important for identifying mean willingness-to-pay. If prices change substantially from period to period, the airline is less certain about the state process, so the willingness-to-pay distribution must have a higher variance.\footnote{Our identification arguments are similar to others who identify demand in dynamic settings using restrictions from the supply side, e.g., \cite{HendelNevo2006, NevoHendel2013}, and \cite{HortacsuMcAdams2010}.} 

The maximum differences between economy and first-class prices captured by moments (iii) identify passengers' premium on first class $(\mu^\xi)$. To see this, consider a passenger with $(v,\xi)$. She buys a first-class ticket if $\xi\geq (p_{t}^{f}-p_{t}^{e})/v$, and economy if $\xi\leq(p_{t}^{f}-p_{t}^{e})/v$. Because $\xi$ is time-invariant and independent of $v$, this equality should hold across all passengers and at all times, which gives
$\frac{\max_{t}(p_{t}^{f}-p_{t}^{e})}{\min\{v: \texttt{bought first-class}\}}\leq (\xi-1)\leq \frac{\min_{t}(p_{t}^{f}-p_{t}^{e})}{\max\{v: \texttt{bought economy}\}}$,
where $\min\{v: \texttt{bought first-class}\}$ and $\max\{v: \texttt{bought economy}\}$ are the minimum and maximum value among those who buy first-class and economy, respectively.

The fractions of sales across periods (moments (iv)) and the variability of prices inform us about the arrival process, $\lambda_t^n$. From the assumption $\lambda_{t}^n= \lambda^n+\Delta^{\lambda}\times (t-1)$, it follows that in the initial period, there are on average $\lambda^n$-many passenger arrivals at the marketplace.  
Consider two markets with the same capacity but different arrival processes. 
If the average number of passenger arrivals in one market is higher than in the other, the prices in the former market start higher, and vice versa. If, however, the two markets have the same rate of initial arrivals ($\lambda^n$) but different slopes ($\Delta^\lambda$), then the prices in the market with a smaller slope start higher and have a flatter slope than the other market. In other words, differences in ($\lambda^n, \Delta^\lambda$) are reflected in the initial price differences and the price variance over time.
Thus, the positive relationship between the size of the demand, i.e., the average number of passenger arrivals, and the rate at which prices change (i.e., increase) over time for a given flight suggests that we can use variation in the difference in prices at $t=0$ and price at $t>0$ to identify $(\lambda^n, \Delta^{\lambda})$.\footnote{Thus, we rely on the assumption that shocks to the arrivals process across time are uncorrelated. So far as this assumption is violated, we may misinterpret variation in price levels across markets as reflecting differences in the willingness-to-pay distribution instead of correlated shocks within markets.} Likewise, the distribution of the load factor (moment (v)) is also informative about the size of the market. We do not observe the full purchase counts from the SIAT data, so the load factor from the T-100 data is crucial information to help discipline the market size. We also observe connecting passengers (moments (vi)), which identify $\lambda^c$.

We observe the fraction of business passengers (moments (vii) and Figure \ref{fig:timetrend}(b)), so that informs us about the share of business passengers in the model, $\theta_t$. 
To see the intuition, after suppressing the time index, we note that  
$
\theta:=\Pr(\texttt{business})=\Pr(\texttt{business|buy})\times \Pr(\texttt{buy})+\Pr(\texttt{business|not-buy)}\times \Pr(\texttt{not-buy}),
$
where $\Pr(\texttt{business|buy})$ is estimable from the data, and given $(\lambda_1^n, \Delta^\lambda)$ we can also determine $\Pr(\texttt{buy})$ and $\Pr(\texttt{not-buy})$. So we can identify $\theta$ as long as we can identify $\Pr(\texttt{business|not-buy})$.
{Business travelers have a higher mean willingness-to-pay than leisure travelers. So, conditional on those who do not buy, the share of business travelers is smaller than the share of leisure travelers, i.e., $\Pr(\texttt{business|not-buy})\leq \Pr(\texttt{leisure|not-buy})$. However, when there are no business passengers in the early periods, those who do not buy a ticket are all leisure passengers. Moreover, the functional form assumption, $\theta_t=\min\{\Delta^{\theta}\times (t-1),1\}$, implies that $\Pr(\texttt{business|not-buy})$ is proportional to a constant (time-invariant) parameter $\Delta^\theta$. Conditional on us identifying the valuation distributions, we identify $\Pr(\texttt{business|not-buy})$, and from that we can identify $\theta$. In particular, at $t=1$ we get $\theta_1=0$, and from the number of sales at $t=1$ we can determine $\Delta^\theta$ and hence $\theta_t=\Pr(\texttt{business in period $t$})$. }

Finally, to identify the distribution of the random coefficients, $h(\Psi|\omega_{1};\mu_{\Psi},\Sigma_{\Psi})$, we rely on the variation in all of the above moments across flights. For example, if prices vary substantially across flights, then we would estimate a large variance for the means of the willingness-to-pay.\footnote{Our approach is similar to that of \cite{NevoTurnerWilliams2016}, who use variation across households in their usage of telecommunications services when facing nonlinear prices with stochastic future demand to identify the weights on each latent household types.}

%%%%%%%%%%%%%%%%%%%%%%%%%%%
%% Estimation Results
%%%%%%%%%%%%%%%%%%%%%%%%%%%

\section{Results\label{results}}
In this section, we present our estimation results. First, we discuss how our estimates capture sources of across-market heterogeneity as implied by the estimated conditional densities $h(\cdot|\omega_1; \widehat{\mu_{\Psi}}, \widehat{\Sigma_{\Psi}})$, given the initial capacity $\omega_1$.
Second, we calculate the distribution of opportunity costs for a seat and show how they vary across cabins and the time until departure. We discuss model fit in Appendix \ref{sec:fit}. 

\subsection{Market Heterogeneity}

We estimate parameters that determine the means and covariances of the demand across all markets and capacities in our sample. In Table \ref{demandhet0}, column (1), we present the mean values of these nine demand parameters averaged across all of our markets. The mean willingness-to-pay for a (one-way) economy-class seat by leisure passengers across our entire sample is \$392.43 (s.e. \$52.39), and for the small and large modal markets, the mean values are \$391.24 (s.e. \$61.42) and \$488 (s.e. \$54.21), respectively.\footnote{We use the bootstrap procedure to calculate the standard errors. In particular, we sample the data 100 times with replacement and compute the empirical moments. Then, using the model-implied moments for all 10,000 markets, we estimate $h(\cdot|\omega_1^*; \mu_{\Psi}, \Sigma_{\Psi})$ for each capacity, using the estimates from $h(\cdot|\omega_1^*; \widehat{\mu_{\Psi}}, \widehat{\Sigma_{\Psi}})$ as the starting values. We repeat these steps 100 times and determine the standard errors.} 

The demand estimates vary with the initial capacities. In Table \ref{demandhet0}, columns (2) and (3), we present the mean values of the demand implied by the estimates for the market with small and large modal capacities, respectively; see Figure \ref{fig:initialcapacity}(a). Additionally, in Tables \ref{table:appendix2} and \ref{table:appendix1}, we show the mean and mode of the nine demand parameters associated with the conditional estimated density $h(\cdot|\cdot;\widehat{\mu_{\Psi}}, \widehat{\Sigma_{\Psi}})$, different initial capacities $\omega_1$.

The mean value of the coefficient of variation of willingness-to-pay is 0.19, and, on average, a business traveler values an economy-class ticket 37\% more than a leisure passenger as $\mu^b=\mu^l\times(1+\delta^b)$. On average, passengers' willingness-to-pay for a first-class seat is 58\% more than an economy seat. The estimates of the arrival processes suggest that there are, on average, 32 nonstop and 50 connecting passengers in the first period. These numbers decrease over time because $\Delta^\lambda=-0.03$. However, the fraction of business passengers increases from 0 in the first period to 35\% in the last period. We see a similar pattern for the two modal markets. In Appendix \ref{appendix1}, Tables \ref{table:appendix1} and \ref{table:appendix2} we display the mean of the demand parameters across all twenty initial capacities for which we estimated the model.

\begin{table}[ht!]
\begin{center}
\caption{Estimates Implied Demand Parameters} \label{demandhet0}
\begin{tabular}{lccc}
\toprule
 Demand & Mean & Mean & Mean\\
 Parameters & All Markets & Small Modal Capacity & Large Modal Capacity\\
 & (1) & (2) & (3)\\
 \hline
$\mu^l$  &392.43 & 391.24 & 488.92 \\ 
  &(52.39) & (61.42) & (54.21) \\ 
$cv^l$  &0.19 & 0.08 & 0.08 \\ 
  &(0.06) & (0.09) & (0.03) \\ 
$\delta^b$  &0.37 & 0.12 & 0.28 \\ 
  &(0.06) & (0.04) & (0.09) \\ 
$cv^b$  &0.73 & 0.33 & 0.26 \\ 
  &(0.08) & (0.07) & (0.07) \\ 
$\mu^\xi$  &0.58 & 0.38 & 0.52 \\ 
  &(0.08) & (0.06) & (0.11) \\ 
$\lambda^n$  &32.85 & 25.05 & 28.26 \\ 
  &(2.97) & (0.73) & (4.00) \\ 
$\Delta^{\lambda}$ &-0.03 & -0.02 & -0.00 \\ 
  &(0.01) & (0.01) & (0.01) \\ 
$\Delta^{\theta}$ &0.05 & 0.04 & 0.06 \\ 
  &(0.01) & (0.01) & (0.00) \\ 
$\lambda^c$  &50.28 & 30.55 & 36.59 \\ 
  &(3.86) & (5.51) & (3.44) \\ 
 \bottomrule
\end{tabular}
\end{center}
\begin{figurenotes}
In this table we present the mean of the demand parameters from the marginal density $h(\Psi|\omega_1^*;\widehat{\mu}_{\Psi},\widehat{\Sigma}_{\Psi})$ (in column (1)) weighted across all capacities, and conditional densities $h(\Psi|\omega_{1}^{*};\widehat{\mu_{\Psi}},\widehat{\Sigma_{\Psi}})$, where $\omega_{1}^{*}\in\{(160,18), (265, 47)\}$ are the small and large modal capacities (in columns (2) and (3)), respectively. Bootstrapped standard errors are in the parentheses.
\end{figurenotes}
\end{table}

\begin{figure}[t!]
 \centering
 \caption{\em Market Heterogeneity: Marginal Densities of Demand Parameters}
 \includegraphics[width=0.9\textwidth]{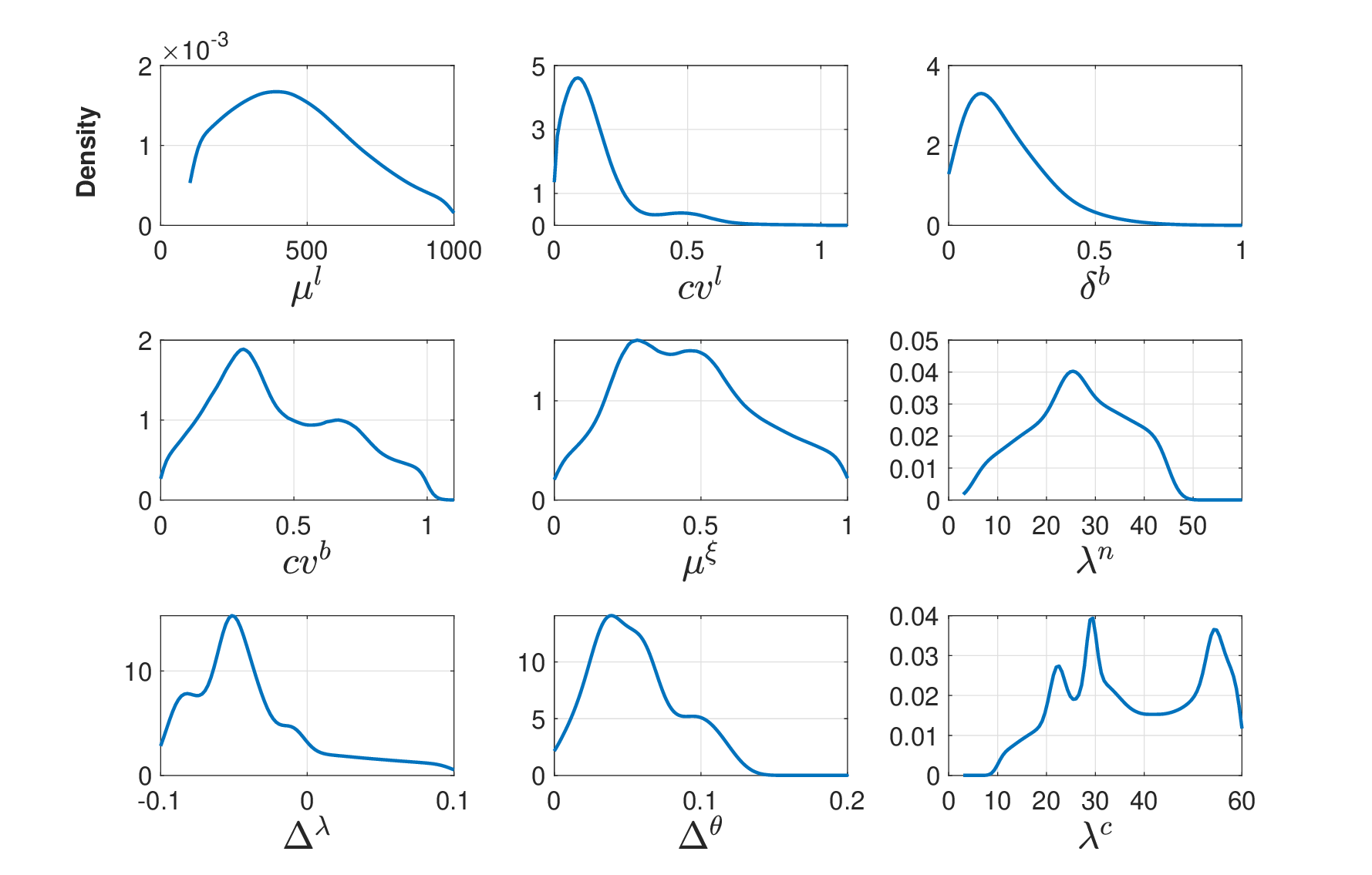} 
 \begin{figurenotes}
This figure displays the weighted average of marginal densities for the nine parameters parameters associated with $h(\cdot|\omega_{1}^{*};\widehat{\mu_{\Psi}},\widehat{\Sigma_{\Psi}})$ for the 20 most common initial capacities, where the weights are densities of these 20 capacities given in Figure \ref{fig:initialcapacity}(a). For example, the density of $\mu_{\ell}$ (the mean willingness to pay for leisure passengers) is displayed in the top left cell.\label{fig:marginal}
\end{figurenotes} 
\end{figure}

However, we have many different markets in our sample. To get a sense of the heterogeneity across these markets, in Figures \ref{fig:marginal} and \ref{fig:joint}, together, we show the marginal and joint densities of the demand parameters ($\Psi$). In fig \ref{fig:marginal} we display the marginal densities of each parameter, and in Fig \ref{fig:joint} we display the joint between the parameters that we allow to have covariance, weighted by the joint density of initial capacities as shown in Figure \ref{fig:initialcapacity}(a). The main takeaway is that there is substantial heterogeneity in all the nine demand parameters across markets and that the arrival rate parameters are correlated: the rate of arrival of nonstop passengers is correlated with that of connecting passengers.

To get a sense of what the estimates imply about the demand curves, we conduct a simple simulation exercise to measure the implied price elasticities of demand. For simplicity, we consider the case when capacity constraints do not bind, and there are enough passengers and focus only on the economy class seats. We simulate willingness-to-pay from the (overall) mean market, Table \ref{demandhet0}, column (1), and compute elasticities of demand at the model implied average price for leisure and business passengers separately. We find that price elasticity for leisure passengers is $-3.9$, and for business passengers, it is $-0.51$. These elasticities are similar to \cite{BerryJia2010}, who find that the own-price elasticities for leisure and business passengers of $-5.01$ and $-0.44$, respectively, for a sample of U.S. domestic flights from 1999.
 
\begin{figure}[t!]
 \centering
 \caption{\em Market Heterogeneity: Joint Densities of Demand Parameters}
 \includegraphics[width=0.98\textwidth]{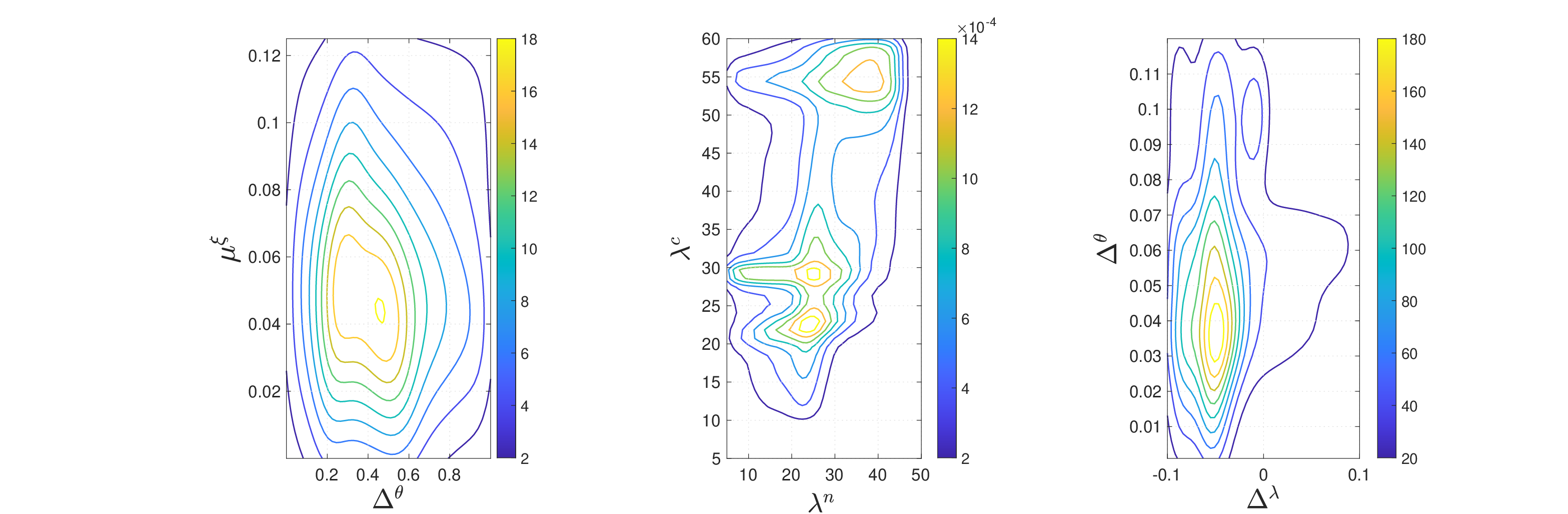} 
 \begin{figurenotes}
This figure displays (clockwise) the contours of weighted average of joint densities of $(\Delta^{\theta},\mu^\xi), (\lambda^n,\lambda^c)$ and $(\Delta^{\lambda}, \Delta^\theta)$ implied by $h(\cdot|\omega_{1}^{*};\widehat{\mu_{\Psi}},\widehat{\Sigma_{\Psi}})$ for the 20 most common initial capacities, where the weights are densities of these 20 capacities given in Figure \ref{fig:initialcapacity}(a).\label{fig:joint} 
 \end{figurenotes}

\end{figure}

\subsection{Estimates of Willingness-to-Pay and Shadow Costs}
Using our estimates, we can determine the implied densities of the willingness-to-pay for an economy seat and a first-class seat and how these densities change over time.
Recall that the density of the willingness-to-pay for an economy seat in period $t$ is the mixture $\theta_t \times f_v^b(\cdot) + (1-\theta_t)\times f_v^l(\cdot)$, and the density of the willingness-to-pay for an first-class seat in period $t$ is similar to the economy seat augmented by $f_{\xi}(\cdot)$. These densities, for periods $t=1,3,5,8$, are displayed in Figure \ref{fig:modelmean_wtp}(a) and (b).
Densities of the willingness-to-pay for a first-class seat are to the right of those for an economy seat by $\xi$. 
Using the estimates for $\lambda_t$ and $\theta_t$ from column 3 of Table \ref{demandhet0}, in Figure \ref{fig:modelmean_arrival}, we display the average number of direct passenger arrivals and the share of business passengers by time.  

\begin{figure}[ht!]
 \centering
 \caption{\em Willingness-to-Pay for a Seat by Cabin, for Large Modal Capacity }
 \begin{subfigure}[b]{0.45\textwidth}
 \includegraphics[width=\textwidth]{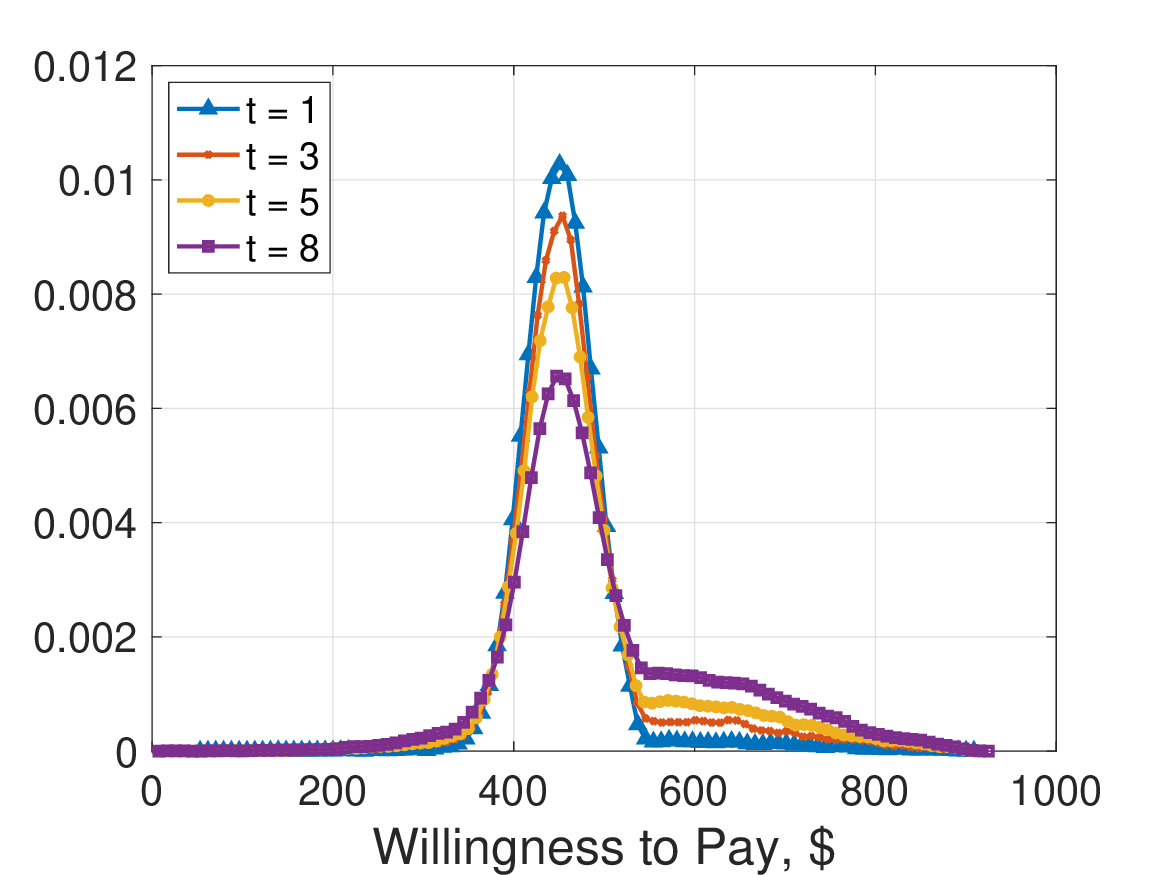}
 \caption{PDF of WTP for an Economy Seat}
 \end{subfigure}
 \begin{subfigure}[b]{0.45\textwidth}
 \includegraphics[width=\textwidth]{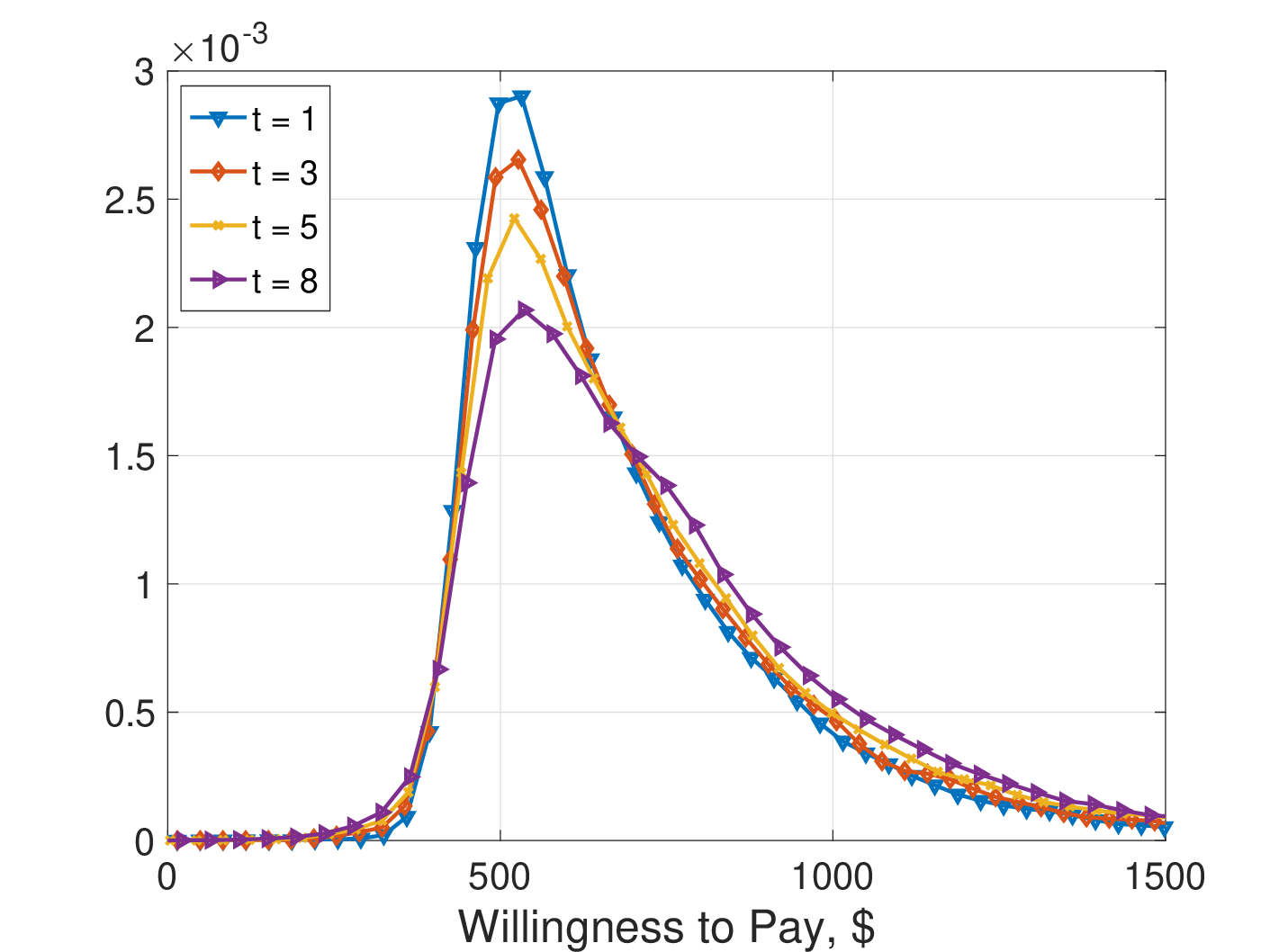}
 \caption{PDF of WTP for a First-Class Seat}
 \end{subfigure}
 \begin{figurenotes}
This figure displays the densities of WTP for an economy seat and a first-class seat for the large modal capacity $\omega_1^*=(265, 47)$ and the mean market. \label{fig:modelmean_wtp}
\end{figurenotes}
\end{figure}

\begin{figure}[ht!]
 \centering
 \caption{\em Number of Passenger Arrivals for Large Modal Capacity }
 \includegraphics[scale=0.34]{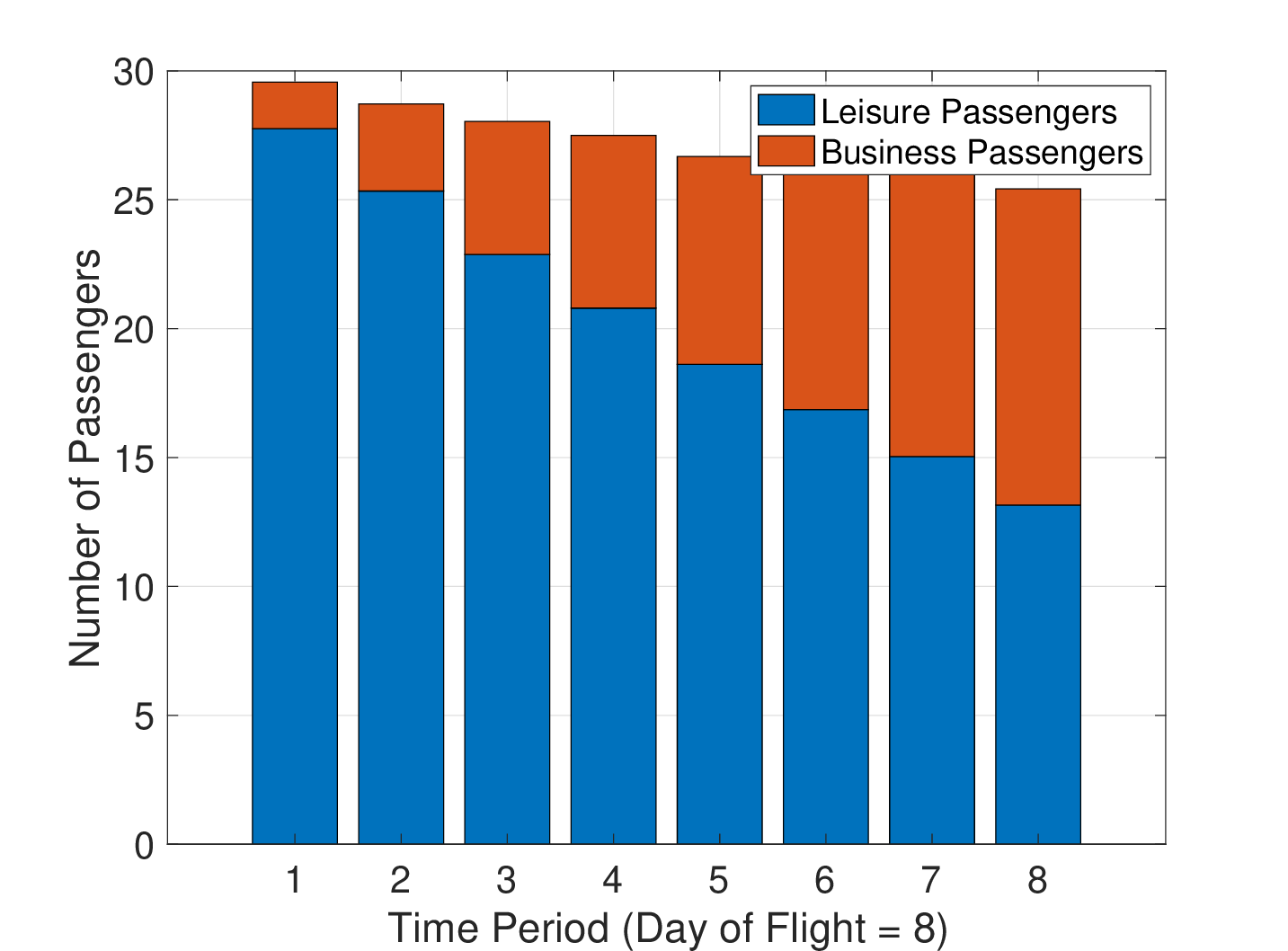}
 \begin{figurenotes}
This figure displays the number of passenger arrivals by their purpose of travel for the large modal capacity $\omega_1^*=(265, 47)$ and the mean market. \label{fig:modelmean_arrival}
\end{figurenotes}
\end{figure}

It is also illustrative to consider what these parameters imply about the time-varying shadow costs of a seat. The total marginal cost of a seat comprises its ``peanut cost," which is constant, and the opportunity cost varies over time depending on the state's evolution, i.e., the number of unsold economy seats and first-class seats. The shadow costs are the right-hand side of Equation \ref{eq:foc-f}, the change in expected value for a change in today's price. In other words, the shadow cost is the cost of future revenues to the airline of selling an additional seat today. 

In Figure \ref{fig:statetransition}, we present the state's evolution in terms of the contours corresponding to the state's joint density, as implied by our model estimates. 
Consider $\omega_{1}$, which is the initial capacity for this modal capacity market. So, when we move to the subsequent few periods, we see that the uncertainty increases. However, as we get closer to the departure time, the contours move towards the origin, which means fewer seats might remain unsold with time.
The contour of the state at the time of departure ($\omega_{\texttt{dept}}$) denotes the distribution of the state at the time of departure. {Even at the time of departure, the contours are not degenerate but they suggest slightly higher load factor than the data; see Figure \ref{fig:initialcapacity}(b).} That contours get dispersed as the flight date approaches is a way to visualize demand uncertainty. 

\begin{figure}[th!]
\begin{center}
 \caption{\em Evolution of Seats Remaining} \label{fig:statetransition}
 \includegraphics[scale=.4]{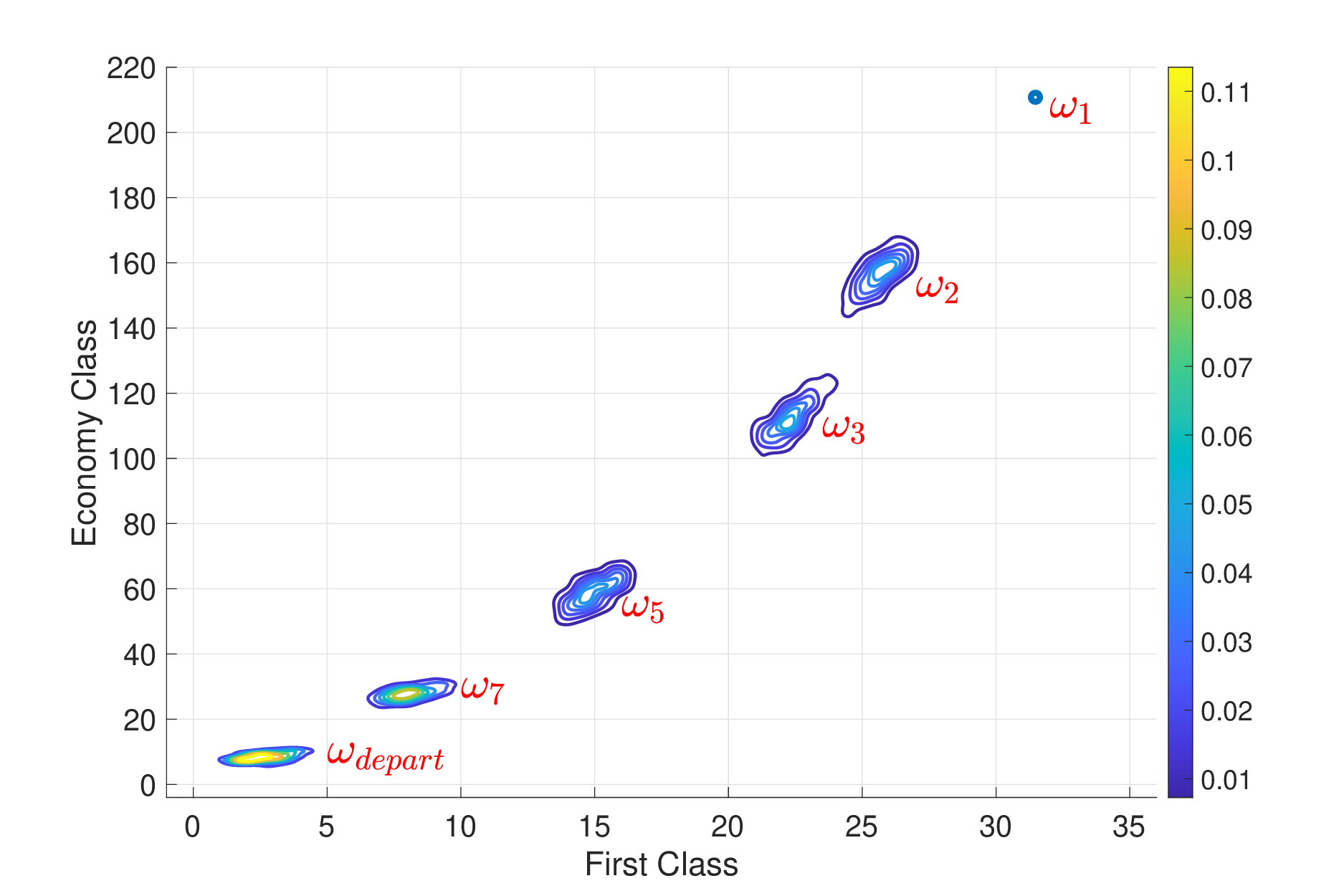}
\end{center}
\begin{figurenotes}
The figure displays the contours corresponding to the joint density of unsold seats for every period across all capacities and markets.
\end{figurenotes}
\end{figure}

An implication of this demand volatility is the implied volatility in the value of a seat to the airline, i.e., the seat's opportunity cost.
In Figure \ref{fig:costCDF}, we present the distributions of the marginal cost for an economy and first-class seat realized in equilibrium, averaged across all markets and all capacities. This feature graphically relates the state transitions to the shadow costs.
In particular, in Figure \ref{fig:costCDF}, we take the distribution of states realized in a given period (Figure \ref{fig:statetransition}) and sample the total marginal costs (sum of the derivative of value functions with respect to price and the peanut cost) based on those frequencies and then plot the distributions. In panel (a), we present the distributions for an economy seat, and in panel (b), we present the distributions of a first-class seat.
As can be seen, there is a significant variation in the costs. These variations are crucial for our identification, as they are the underlying reason for dispersion in the observed fares.

In the first period, $t=1$, there is no uncertainty about the state, which means the marginal cost distribution degenerates at \$127.06 for an economy seat and \$381.3 for a first-class seat. With $t>1$, the distributions become more dispersed but with little change in the mean. For instance, the means of the marginal costs for an economy seat in periods $t=3, 4, 5$ and $7$ are \$139.95, \$145.2, \$151.77, and \$159.03, respectively. We observe similar pattern for a first-class seat; the mean marginal costs are, approximately, \$456.54, \$463.22, \$449.48 and \$430.65, in periods $t\in\{3,4,5,7\}$, respectively. 
Finally, in the last period, $t=8$, the opportunity cost of a seat is zero, as the marginal cost is only the peanut cost.

\begin{figure}[t!]
 \centering
 \caption{\em Distributions of Marginal (Opportunity) Cost of a Seat}
 \hspace{-0.65cm}
 \begin{subfigure}[b]{0.50\textwidth}
 \includegraphics[width=\textwidth]{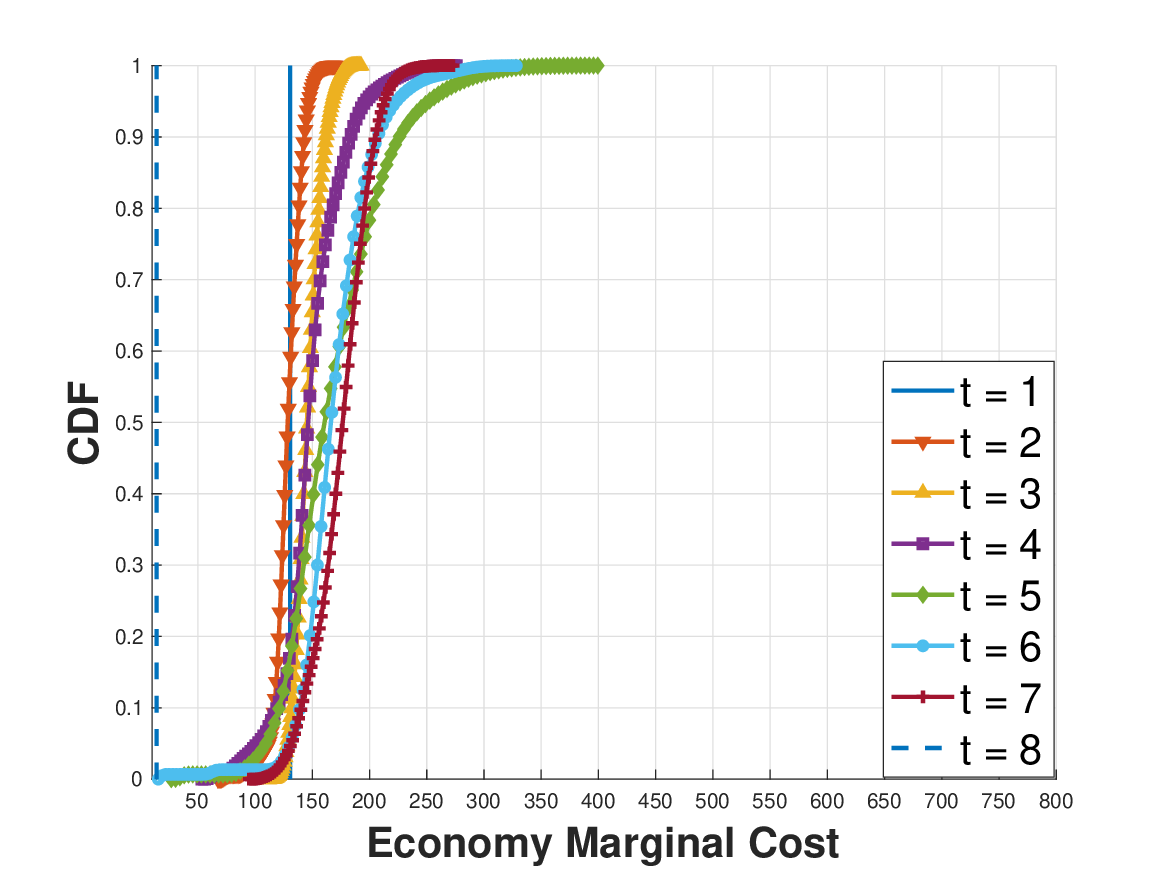}
 \caption{ Economy}
 \end{subfigure}
 \hspace{-0.65cm}
 \begin{subfigure}[b]{0.50\textwidth}
 \includegraphics[width=\textwidth]{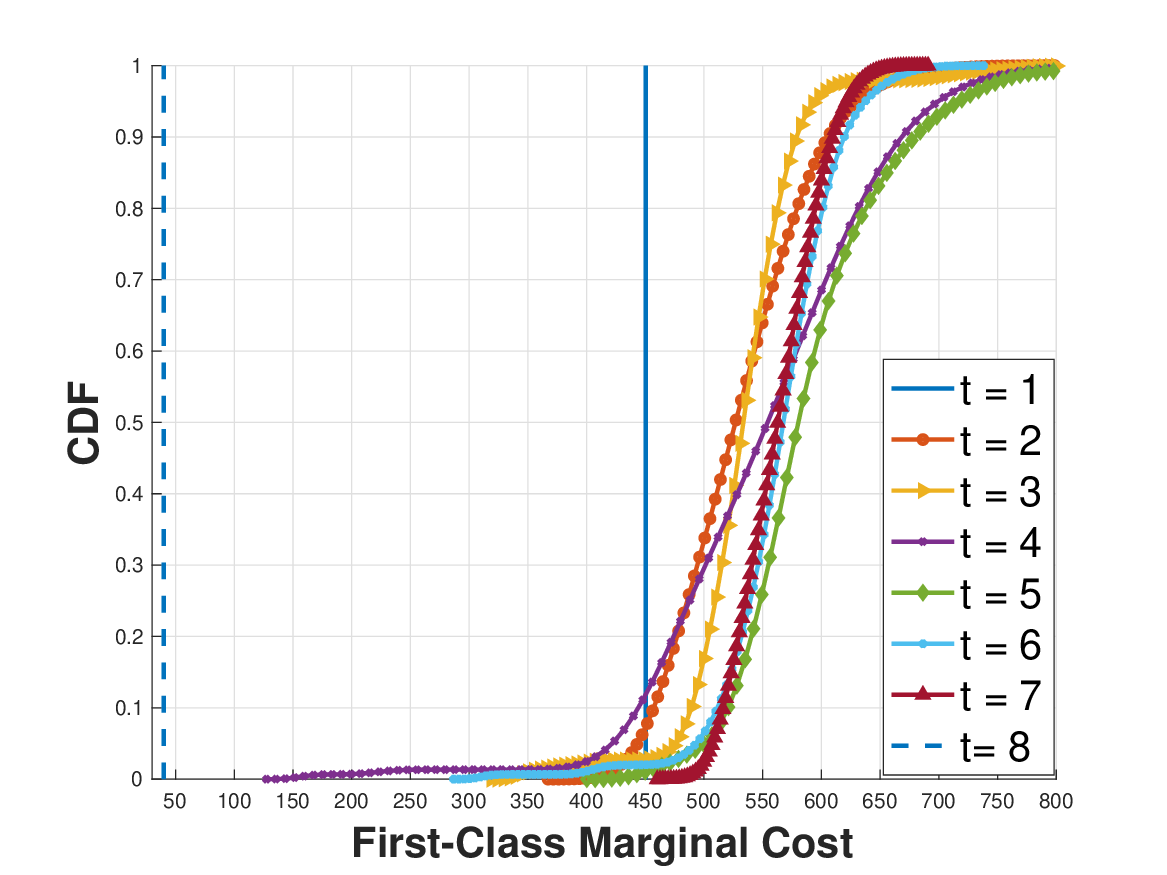}
 \caption{First-Class}
 \end{subfigure}
\begin{figurenotes}
Figures display the opportunity-cost distributions at each time for an economy and a first-class seat, respectively. \label{fig:costCDF}
\end{figurenotes}
\end{figure}

%%%%%%%%%%%%%%%%%%%%%%%%%%%
%% Counterfactuals
%%%%%%%%%%%%%%%%%%%%%%%%%%%
\section{Inefficiency and Welfare\label{counterfactuals}}
Two sources of informational friction in this market contribute to inefficiency: asymmetric information about passengers' valuations and uncertainty about future demand. A passenger's valuation may be due to idiosyncratic preferences and may also be associated with their reason for travel. Airlines' inability to price based on a passenger's reason for travel or even the idiosyncratic valuation can distort the seats' final allocation.\footnote{This source of inefficiency is present in a static monopoly setting when the monopolist chooses one price. However, in our setting, the monopolist must consider the unknown valuations of future arriving passengers.} The second source leads to inefficient allocations of limited capacity because the airline chooses its prices and {seat-release} policies before the demand is realized. Intra-temporal and inter-temporal misallocations introduced by these frictions represent opportunities for welfare-improving trade. Using counterfactual pricing and allocation mechanisms, we quantify the inefficiencies attributable to these sources. We first show how to visualize these sources of inefficiencies using a schematic representation of a welfare triangle. We then present and discuss our results.

\subsection{Welfare Triangle}

Consider the first-best allocation: seats are allocated to the highest-valuation passengers $(v,\xi)$ regardless of the timing of their arrival at the market. Under this allocation, the division of the surplus would depend on the prices. Figure \ref{fig:welfare} shows the line between A and B associated with this efficient benchmark and forms the welfare triangle (CAB). Point A represents the complete extraction of consumer surplus (i.e., price equals valuation), and point B represents maximum consumer surplus (i.e., prices equal the peanut costs).

 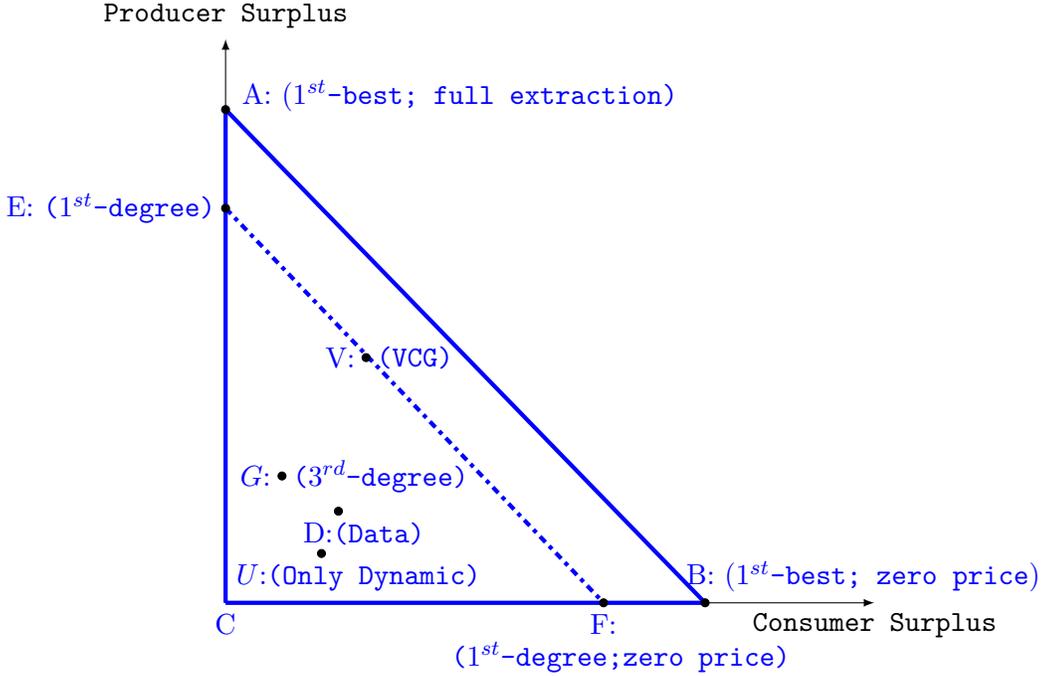
\begin{figure}[ht!]
\begin{center}
\caption{\em Welfare Triangle \label{fig:welfare}}
 \begin{tikzpicture}[>=latex, y=2.5cm, font=\small, scale =0.75]
 \draw [<->] (0,4) node [above] {\texttt{Producer Surplus}} |- (11.5,0) node [below] {\texttt{Consumer Surplus}}; %x-y axes
 \draw [blue, ultra thick] (0,0) -| (0,3.5); %red line on y-axis
 \filldraw (0.1,3.6)node[right] {\blue A: (\texttt{$1^{st}$-best; full extraction)}}; %A
 \draw [blue, ultra thick] (0,0) -| (8.5,0); %red line on x-axis
 \draw [blue, ultra thick] (0, 3.5) -- (8.5,0); % hypotenuse
 \draw (11.3,0.001) node [above]{\blue B: (\texttt{$1^{st}$-best; zero price})}; %B
 \filldraw (0,3.5) circle (2pt); % black dot for A
 \filldraw (8.5,0) circle (2pt); % black dot for B
 \filldraw (2,.65) circle (2pt); % black dot for Data
 \filldraw (2.45,.65) node [below]{\blue D:\texttt{(Data)}}; %C
 \draw [blue, ultra thick,dash dot] (0,2.8) -- (6.7,0); % feasible line
 \filldraw (6.7,0) circle (2pt) node[below] {\blue F:}; % F
 \filldraw (7,-0.2) node[below] {\blue \texttt{($1^{st}$-degree;zero price)}}; %note for G
 \filldraw (0,2.8) circle (2pt) node [left]{\blue E: \texttt{($1^{st}$-degree)}}; % E
 \filldraw (1,0.9) circle (2pt)node [left]{\blue $G$:}node [right]{\blue \texttt{($3^{rd}$-degree)}}; %D
 \filldraw (2.49,1.74)circle (2pt) node[left]{\blue V:}node[right]{\blue \texttt{(VCG)}};
 \filldraw (0,0) node[below]{\blue C};
 \filldraw (1.7,.35) circle (2pt); % black dot for Uniform Pricing
 \filldraw (2.35,.35) node [below]{\blue $U$:\texttt{(Only Dynamic)}}; %H
 \end{tikzpicture}
 \end{center}
 \begin{figurenotes}
A schematic representation of the welfare triangle under a different pricing regime.\end{figurenotes}
\end{figure}

Point D (``Data") in Figure \ref{fig:welfare}, which is in the interior of the triangle CAB, denotes the division of surplus resulting from current pricing practices by airlines that we observe in our data and that we use to estimate our model. 
This outcome is preferable to no trade by the airline and consumers, but it is strictly inside the welfare frontier due to the two sources of inefficiency discussed above. The distance of D from the line $\overline{\text {AB}}$ illustrates the magnitude of welfare-improving opportunities relative to current practice.

Suppose the airline did not employ second-degree price discrimination each period before the flight and only changed prices across time. Or, in other words, the airline did not exploit the difference in quality between an economy seat and a first-class seat when choosing its prices, and each period chose one price for both cabins.\footnote{Formally, airline solves $\max_{\{p_t, q_t\}_{t=1}^T}\sum_{t=1}^T\E_t \left\{ \pi (\chi_{t},\omega_{t};\Psi_{t})\right\}$, where the flow profit is $\pi(\chi_{t},\omega_{t};\Psi_{t}) = (p_{t}-c^{f})\times\E(q_{t}^{f}|(p_t, q_t))+ (p_{t}-c^{e})\times\E(q_{t}^{e}|(p_t, q_t))$, and $\E(q_{t}^{f}|(p_t, q_t))$ and $\E(q_{t}^{e}|(p_t, q_t))$ are expected demands.}
The airline would still adjust the price each period depending on the opportunity cost of selling a seat in that period. This counterfactual of choosing one price across cabins 
corresponds to point {U} in Figure \ref{fig:welfare} (``Only Dynamic''), i.e., uniform pricing within each period.
While the producer surplus under {U} will be lower than the producer surplus under D, the effect on consumer surplus is theoretically ambiguous.
Choosing one price across two cabins should improve welfare for those who buy first-class under the current prices, but it should lower welfare because economy-class seats become expensive and total sales will adjust.

Airlines do not observe passengers' reasons for travel, limiting their ability to price based on the difference between business and leisure passengers' willingness-to-pay, possibly resulting in the exclusion of leisure passengers based on expectations of greater demand from business passengers. Permitting the airline to price based on the reason for travel, i.e., third-degree price discrimination, can increase profits for the airline, but the implication for passengers is ambiguous. Leisure passengers may benefit, but it may come at the cost of business passengers. Since leisure and business travelers arrive at different times, and the airline faces capacity concerns, the change in consumer surplus depends on the entire demand process. Furthermore, the total number of seats sold may increase or decrease, affecting the total welfare. Point {G} (``$3^{rd}$-degree"), i.e., group pricing, in Figure \ref{fig:welfare} represents a division of welfare when the airline can charge different prices based on a passenger's reason for travel and one seat-release policy for each cabin. In particular, the airline sets four prices every period depending on the reason to travel and seat class. However, we impose that airlines can choose only one seat-release policy for each ticket class.

Even with the airline's ability to price based on the reason for travel, asymmetric information about idiosyncratic valuations can create inefficiencies. For example, some leisure passengers may have unusually high valuations, and some business travelers may have low valuations. To ascertain the importance of this information asymmetry, we consider a setting where the airline practices first-degree price discrimination. The airline observes valuations each period and decides which passengers to accommodate, charging each passenger their valuation. However, the airline is still uncertain about future demand realizations. This outcome corresponds to point E in Figure \ref{fig:welfare}. Likewise, point F in Figure \ref{fig:welfare} corresponds to the first-degree allocation of seats but with the price equal to the peanut cost.

The (dotted) line that joins E and F is informative about the extent of dynamic inefficiency in the market. In particular, line $\overline{\text {EF}}$ represents the frontier in the welfare triangle (CEF) when the airline knows $(v,\xi)$ for passengers in a given period but cannot foresee future realizations of the demand process. One way to divide the surplus along the $\overline{\text {EF}}$ frontier is by implementing Vickery-Clarke-Groves auctions every period. The division of the per-period surplus depends on the composition of passengers. All else equal, the larger the number of passengers with a high willingness-to-pay, the higher the ``VCG price" and closer to the y-axis, and vice versa. Such a division of surplus averaged across all periods is denoted by point V (``VCG") in Figure \ref{fig:welfare}. Thus, the potential outcomes in CAB but not CEF represent surplus lost due to inter-temporal demand uncertainty. One could envision a secondary-market run by the airline to resolve these dynamic inefficiencies, and our estimates provide the value that such an exchange could create.

\subsection{Counterfactual Results}
Table \ref{tab:counterfactuals} presents the welfare estimates under these alternative pricing strategies, averaged across all markets and capacities observed in our data. Columns of Table \ref{tab:counterfactuals} are indexed by a letter (e.g., D, E, U) that corresponds to the point in the welfare triangle of Figure \ref{fig:welfare}.

\paragraph{Current Pricing.}
Recall that point D in Figure \ref{fig:welfare} denotes the division of surplus resulting from the airlines' current pricing practices.
The surplus associated with this pricing strategy is in the first column of Table \ref{tab:counterfactuals}.
The average total surplus is \$60,739 per flight (with a standard error of \$4,713), with 76\% of the surplus going to the airline.\footnote{To compute the standard errors, we follow the bootstrap procedure described in Section \ref{results}. To compute the bootstrapped samples of counterfactual welfare, we solve the counterfactual outcomes for only those 1,889 market types, out of the 10,000 types, with an implied weight greater than $10^{-8}$. Although the set of types with the implied weight of greater than $10^{-8}$ may be different for each bootstrap sample, for computational tractability, while computing the standard errors of the surpluses, we hold the 1,889 types fixed and re-weight the types implied in each bootstrap sample.} We find that, on average, the consumer and producer surplus per seat sold is approximately \$144.883 and \$465.091, respectively.
This outcome is preferable to no trade for both the airline and consumers but is strictly inside the welfare frontier due to the inefficiencies, as mentioned earlier.
Comparing the total surpluses under columns D and $\overline{\text {AB}}$ in Table \ref{tab:counterfactuals}, we find that the surplus associated with current pricing represents 77\% of the average market's potential attainable surplus.

\newcolumntype{g}{>{\columncolor{Gray}}c}
\begin{table}[ht!]
\centering
\caption{Price Discrimination Counterfactuals}
\label{tab:counterfactuals}
\begin{adjustbox}{scale=0.9}
\begin{threeparttable}
\begin{tabular}{lcgcccccccccc}
\toprule
 & Only Dynamic& Data & $3^{rd}$-Degree & $1^{st}$-Degree & VCG & $1^{st}$-Best \\
 & {U} & D & {G} & E & V & $\overline{\text{A}\text{B}}$ \\
 \midrule
Producer Surplus & 43,228 & 46,594 & 49,460 & 76,591 & 44,479 & - \\
 & (3,780) & (3,904) & (3,903) & (5,853) & (4,958) & \\
 -- Economy Class & 34,595 & 30,927 & 33,075 & - & - & - \\
 & (2,851) & (2,919) & (2,875) & & & \\
Consumer Surplus & 17,191 & 14,145 & 13,729 & - & 32,111 & 78,885 \\
 & (1,570) & (1,192) & (1,200) & & (2,467) & (6,123) \\
 -- Business & 8,259 & 5,789 & 4,476 & -& 11,413 & 27,620 \\
 & (1,393) & (1,008) & (741) & & (1,979) & (3,372) \\
 -- Leisure & 8,931 & 8,355 & 9,253 & -& 20,698 & 51,265 \\
 & (1,338) & (1,032) & (1,012) && (2,076) & (5,981) \\
Total Surplus & 60,419 & 60,739 & 63,190 & 76,591 & 76,591 & 78,885 \\
 & (4,793) & (4,713) & (4,812) & (5,853) & (5,853) & (6,123) \\
\bottomrule
\end{tabular}
\begin{figurenotes}
In this table, we present measures of average welfare (expressed in U.S. dollars) for six different outcomes, corresponding to points in Figure \ref{fig:welfare}, where Column D is the data. These calculations are performed for all market types receiving positive weight for a given capacity, then averaged across types and capacities. Bootstrapped standard errors are in parentheses.
\end{figurenotes}
\end{threeparttable}
\end{adjustbox}
\label{table:cf}
\end{table}

We can also determine when the efficiency losses occur under the current pricing. 
Inefficiency increases with the number of incorrectly-allocated seats sold. For example, much of the loss will be when seats are sold early, and there are positive shocks to demand at later dates. Then, passengers who arrive late feel these losses. Similarly, if too few seats are released early and later demand is lower than expected, the early passengers are relatively worse off. 

\begin{table}[ht!]
\begin{center}
\caption{Efficiency Losses, by Time}\label{table:dynamic_ineff}
\begin{adjustbox}{scale=0.90}
\begin{tabular}{lllllllll}
\toprule
Days before Flight & 101+ & 61-100& 45-60& 30-44 & 15-29 & 8-14 & 4-7 & 0-3 \\
\midrule
Inefficiency(\%) & -12.69 & -4.2 & 9.57 & 13.56 & 18.16 & 23.7 & 25.5 & 26.4 \\
\bottomrule
\end{tabular}
\end{adjustbox}
\begin{figurenotes}
This table presents the average efficiency losses for each period (rounded up to a near decimal point). The inefficiency per period is defined for each market and each capacity. It is expressed as a percentage of total inefficiency and then averaged across markets and initial capacities. The exact formula to calculate this measure is given in the text.
\end{figurenotes}
\end{center}
\end{table}

In Table \ref{table:dynamic_ineff}, we present the time-specific inefficiencies for each market and then average them across markets and capacities. Efficiency losses occur early and decrease as the flight approaches, suggesting that the airline sells too many early seats than under the $1^{st}$-best allocation. For example, passengers who arrive in the first period (i.e., 101+ days before the flight) have 12.69\% more surplus under the current pricing than under the first-best allocation. In contrast, those at the end get 26.4\% less surplus under the current pricing than under the first-best.

\paragraph{Only-Dynamic Pricing.} Restricting the airline to only one price each period for both cabins, as expected, slightly lowers the total welfare. Note that the airline can still change prices over time. The producer surplus is 92\% of the baseline (see columns D and U), and the consumer surplus is 121\% of the baseline. 
Although the airline's ability to use second-degree price discrimination to screen passengers between cabins (from {U} to C) has a negligible effect on total surplus, airlines capture all the additional surplus. The surplus for business travelers under uniform pricing is larger than under current pricing because, on average, prices are lower under uniform pricing.

\paragraph{Group Pricing: Business versus Leisure.} Column {\red G} of Table \ref{tab:counterfactuals} provides surplus estimates when the airlines are permitted to price based on the reason for travel, i.e., third-degree price discrimination. Relative to current pricing practice (i.e., Column D), airline surplus increases by six percent, but consumer surplus falls by about three percent and the total surplus increases by four percent. Thus, group pricing based on business or leisure increases the revenue and total surplus slightly but lowers consumer surplus.

\paragraph{Static versus Dynamic Inefficiencies.} We begin with the surplus under the first-degree price discrimination in column E of Table \ref{tab:counterfactuals}, where the airline can price equal to the passengers' willingness-to-pay but still faces uncertain future demand. By construction, the airline can capture the entire surplus. The total surplus increases a lot relative to the second-and third-degree price discrimination (D and {G} of Table \ref{tab:counterfactuals}, respectively).
We find that a VCG auction would result in relatively low prices or that consumers would capture 41\% of the surplus in the presence of a period-by-period VCG auction.
Thus, these results suggest that the business-leisure distinction increases the total surplus.

Comparing total surpluses under D and $\overline{\text{AB}}$, we find that stochastic demand and asymmetric information lead to approximately 23\% loss of welfare. Comparing D to E and E to $\overline{\text{AB}}$, we find that almost all of this inefficiency is due to private information about passengers' willingness-to-pay and not to stochastic demand.

\paragraph{Inefficiencies and Market Variance.}
The welfare measures reported thus far correspond to the average flight in our sample, but welfare and inefficiency may vary with the demand uncertainty an airline faces on a route. 
In other words, airlines would be less able to use market-level information to set prices and quantities for markets if the variance of demand primitives is high. So, we would expect the dynamic inefficiency to increase with variance. 
To shed light on this, we calculate average welfare for high- and low-variance markets separately. We classify a market as a low (respectively, high) variance market if $\{cv^{\ell}, cv^b, \Delta^\theta\}$ are \emph{all} less than (larger than) the averages of $\{0.19, 0.73, 0.05\}$ as shown in Table \ref{demandhet0}. 

As expected, the average total $1^{st}$-best welfare for markets with high variance at 113,130 is larger than that for the markets with low variance at 80,212. 
Comparing C with the AB line, we find the inefficiencies for the low- and high-variance markets are 77\% and 74\%, respectively. Furthermore, the private information versus dynamic inefficiency breakdown for these two types of markets is 87.4\% vs. 12.6\% and 86.4\% vs. 13.6\%, respectively. These estimates suggest that the dynamic inefficiency increases with market variance, albeit only modestly, consistent with our finding that overall inefficiency due to dynamic demand uncertainty is relatively smaller than the inefficiency due to private information.
       
\section{Conclusion\label{conclusion}}

We develop a model of intra-temporal and inter-temporal price discrimination by airlines that sell a fixed number of seats of different quality to heterogeneous consumers arriving before a flight. We specify demand as non-stationary and stochastic, accommodating the salient features of airline pricing. Using unique data from international airline markets, we flexibly estimate the distribution of preferences for flights. The estimation exploits the relationship between a passenger's seat chosen, the timing of purchases, reasons for travel, and the fare paid to identify how effectively airlines discriminate using sources of passenger heterogeneity. We find that the flexibility of the model and estimation algorithm successfully captures key features of our data, such as the price paths for both economy and first-class tickets, the number of purchases, and the differences in fares of the two cabins across time.

Next, through several counterfactual exercises, we use the estimates to explore the role stochastic demand and asymmetric information have on efficiency and the distribution of surplus. Because of private information about demand, there are substantial inefficiencies relative to the first-best outcome under the current pricing practices. In particular, total welfare is only 77\% of the welfare without demand uncertainty and asymmetric information. To isolate the role of different sources of asymmetric information in determining welfare, we solve for optimal seat release and prices when the airline can discriminate based on passengers' reason to travel and when the airline can observe their preferences. The first case (i.e., third-degree price discrimination) achieves 80\% of the first-best welfare, representing a 3\% increase from current practices. Business passengers' surpluses decrease, but leisure passengers' surpluses increase due to the business passengers' loss of informational rent. The second case (i.e., first-degree price discrimination), where the only remaining source of inefficiency is inter-temporal demand uncertainty, significantly affects welfare compared to the first case. Thus, private information about demand accounts for 87\% of the total welfare loss, while dynamic uncertainty about demand accounts for the rest.

There are many avenues for future research on related topics. One interesting topic for future research is to consider competition. Like other studies of dynamic pricing, we model a monopolistic market structure that accurately reflects our data. This modeling choice limits our ability to examine the impact of competition on discriminatory-pricing practices. Another interesting path for future research is considering the possibility that consumers are forward-looking and strategic in their purchasing decisions. Furthermore, purchases are increasingly made online, allowing firms to track search behavior and adapt pricing accordingly. Given the growing theoretical literature on this topic (e.g., \cite{BoardSkrzypacz2016} and \cite{DilmeLi2012}) that yield testable implications from strategic behavior by consumers, empirical studies like ours and \cite{Sweeting2010} represent an opportunity to offer insight to future modeling efforts.

As firms gather more information about consumers' preferences and purchasing habits, exploiting this information becomes a paramount concern. For more on the role of privacy and efficiency, see \cite{Hirshleifer1971} and \cite{Posner1981}. Although few papers study the role of privacy \citep{Taylor2004, CalzolariPavan2006} in price discrimination, more empirical research on this topic is needed. Future research should study the trade-off between efficiency and privacy, especially in industries with greater access to such information.

\section*{\bf Data Availability Statement.} 
U.S. Department of Commerce and Department of Transportation provided the SIAT and the T100 data underlying this article. Additional data on capacities were purchased from OAG. We do not have permission to share the data. However, SIAT and OAG data can be purchased and T100 data accessed by following the steps provided in the replication folder available in Zenodo at \url{https://zenodo.org/record/7392123}.
%\newpage

\bibliographystyle{jpe}
\bibliography{dispersion}

%\newpage
\setcounter{section}{0}
\setcounter{equation}{0}
\setcounter{figure}{0}
\setcounter{table}{0}
\begin{center}
\noindent\textbf{\huge{Appendix}}
\end{center}
\renewcommand{\thesection}{A.\arabic{section}}
\renewcommand{\theequation}{A.\arabic{equation}}
\renewcommand\thefigure{\thesection.\arabic{figure}}
\renewcommand\thetable{\thesection.\arabic{table}}

% \section{Survey of International Air Travel}

% \input{siat}

\section{Survey of International Air Travel\label{siat}} 

We present additional details about the SIAT data. 
As we mentioned in Section \ref{Data}, the data are collected by the Department of Commerce. The DOC contracts with a private survey firm, CIC Research Inc. We use data from the surveys conducted in 2009, 2010, and 2011. There are two data collection methods: (1) Direct participation of the airlines, which arrange for their flight crews to distribute and collect surveys on-board; (2) Use of sub-contractors to distribute and collect the questionnaires in the airport departure gate area. According to the SIAT, in 2009, these two methods accounted for 60\% and 40\% of all collections, respectively. The dataset can be purchased at \url{https://rb.gy/fop8cc}. %\url{https://home.cicresearch.com/AviationAndAirlines/AviationAndAirlines.asp}. 
A copy of the survey questionnaire is available at \url{http://charliemurry.github.io/files/SIAT_Data_Doc_2009.pdf}.\footnote{DOC processes the surveys and implements a quality control process to ensure data integrity.}

There are 413,309 survey responses in the data from the Department of Commerce.
We impose many restrictions to arrive at our final sample. In Table \ref{table:summarystat}, we display summary statistics at four stages of the sample selection process: (1) the original data, (2) after we drop responses that do not report a price, (3) after we make additional selection criteria, like dropping flights with less than ten responses, responses with other partial information, non-revenue and other exotic tickets, and connecting tickets and (4) our final sample after we select monopoly markets and merge with auxiliary data on capacities. Approximately 38\% of the initial survey responses do not have information about fares, and we drop those. Out of the remaining 62\% who report fares, approximately 53\% report traveling with at least one companion. If multiple people are traveling together, e.g., a family, the survey is intended to be administered to one group member. When a respondent reports flying with other passengers, we duplicate the ticket data for each passenger they report flying with them. We exclude respondents who report buying their tickets as a part of a tour package, using airline miles, or through any other discounted fare. We also restrict our sample to responses that report traveling with at most ten people (which is 98.23\% of the original sample) to minimize the chances that the tickets were bought as part of some tour package.

\begin{table}[htbp]
\hspace{-1in}
\caption{Summary Statistics\label{table:summarystat}}
\scalebox{0.75}{
\begin{threeparttable}
\begin{tabular}{l ccccc |c ccccc}
\toprule
Variable & Mean & Min & Median & Max & Non-missing && Mean & Min & Median & Max & Non-missing\\
\midrule
 & \multicolumn{5}{c}{(1) Raw Data} & \multicolumn{5}{c}{(2) Non-missing Fares}  \\
\midrule
Business Passenger         &      0.17&      --  &      --  &      --  & 413,309&&     0.19&      --  &      --  &      --  & 256,843\\
-- business                 &      0.04&      --  &      --  &      --  & 413,309&&     0.04&      --  &      --  &      --  & 256,843\\
-- convention              &      0.02&      --  &      --  &      --  & 413,309&&     0.02&      --  &      --  &      --  & 256,843\\
-- government              &      0.00&      --  &      --  &      --  & 413,309&&     0.00&      --  &      --  &      --  & 256,843\\
-- teaching                &      0.02&      --  &      --  &      --  & 413,309&&     0.02&      --  &      --  &      --  & 256,843\\
-- vacation                &      0.17&      --  &      --  &      --  & 413,309&&     0.20&      --  &      --  &      --  & 256,843\\
-- visit friends/family    &      0.12&      --  &      --  &      --  & 413,309&&     0.15&      --  &      --  &      --  & 256,843\\
-- religion                &      0.01&      --  &      --  &      --  & 413,309&&     0.01&      --  &      --  &      --  & 256,843\\
-- health                  &      0.01&      --  &      --  &      --  & 413,309&&     0.01&      --  &      --  &      --  & 256,843\\
Nonstop                    &      0.49&      --  &      --  &      --  & 413,309&&     0.46&      --  &      --  &      --  & 256,843\\
Economy Class              &      0.90&      --  &      --  &      --  & 413,309&&     0.90&      --  &      --  &      --  & 256,843\\
Advanced Purchase          &     72.37&      0.00&     45.00&    365.00& 413,309&&    67.39&      0.00&     45.00&    365.00& 256,843\\
Fare (on-way adjusted)     &    567.97&      0.00&    400.00& 24,258.00& 256,843&&   567.97&      0.00&    400.00&  24,258.00& 256,843\\
Total travelers            &      2.98&      1.00&      2.00&     10.00& 413,309&&     2.80&      1.00&      2.00&     10.00& 256,843\\
Trav. with Family*\dotfill  &      0.36&      --  &      --  &      --  & 413,309&&      0.34&      --  &      --  &      --  & 256,843\\
US Resident*         &      0.48&      --  &      --  &      --  & 413,309&&      0.51&      --  &      --  &      --  & 256,843\\
Male*               &      1.54&      1.00&      2.00&      2.00& 363,586&&      1.56&      1.00&      2.00&      2.00& 236,658\\
Age*                 &     43.01&     18.00&     42.00&     99.00& 234,546&&     43.10&     18.00&     42.00&     92.00& 153,092\\
Income Bin*          &      5.68&      1.00&      5.00&     11.00& 305,647&&      5.77&      1.00&      5.00&     11.00& 213,284\\
\midrule
 & \multicolumn{5}{c}{(3) Selection On Additional Variables} & \multicolumn{5}{c}{(4) Final Sample}  \\
\midrule
Business Passenger         &      0.16&      --  &      --  &      --  &  25,945&&       0.14&      --  &      --  &      --  &  14,930\\
-- business                 &      0.04&      --  &      --  &      --  &  25,945&&       0.03&      --  &      --  &      --  &  14,930\\
-- convention              &      0.02&      --  &      --  &      --  &  25,945&&       0.02&      --  &      --  &      --  &  14,930\\
-- government              &      0.00&      --  &      --  &      --  &  25,945&&       0.00&      --  &      --  &      --  &  14,930\\
-- teaching                &      0.02&      --  &      --  &      --  &  25,945&&       0.01&      --  &      --  &      --  &  14,930\\
-- vacation                &      0.19&      --  &      --  &      --  &  25,945&&       0.17&      --  &      --  &      --  &  14,930\\
-- visit friends/family    &      0.14&      --  &      --  &      --  &  25,945&&       0.13&      --  &      --  &      --  &  14,930\\
-- religion                &      0.01&      --  &      --  &      --  &  25,945&&       0.00&      --  &      --  &      --  &  14,930\\
-- health                  &      0.01&      --  &      --  &      --  &  25,945&&       0.01&      --  &      --  &      --  &  14,930\\
Nonstop                    &      0.58&      --  &      --  &      --  &  25,945&&       1.00&      --  &      --  &      --  &  14,930\\
Economy Class              &      0.93&      --  &      --  &      --  &  25,945&&       0.92&      --  &      --  &      --  &  14,930\\
Advanced Purchase          &     73.65&      0.00&     60.00&    365.00&  25,945&&      75.25&      0.00&     60.00&    365.00&  14,930\\
Fare (on-way adjusted)     &    528.43&     70.00&    400.00&   5000.00&  25,945&&     480.37&     70.00&    375.00&   5000.00&  14,930\\
Total travelers            &      2.90&      1.00&      2.00&     10.00&  25,945&&       3.06&      1.00&      2.00&     10.00&  14,930\\
Travel with Family*  &      0.38&      --  &      --  &      --  &  25,945&&      0.42&      --  &      --  &      --  &  14,930\\
US Resident*         &      0.51&      --  &      --  &      --  &  25,945&&      0.47&      --  &      --  &      --  &  14,930\\
Male*                &      1.54&      1.00&      2.00&      2.00&  23,698&&      1.54&      1.00&      2.00&      2.00&  13,604\\
Age*                &     43.67&     18.00&     43.00&     88.00&  16,209&&     43.34&     18.00&     42.00&     88.00&   9,434\\
Income Bin*         &      5.69&      1.00&      5.00&     11.00&  21,247&&      5.70&      1.00&      5.00&     11.00&  12,011\\
\bottomrule\end{tabular}
\begin{tablenotes}[flushleft]
\item \textit{Note:} In this table, we present the summary statistics of variables that we observe from the SIAT survey. Variables with an asterisk (*) denote the variables not used in our empirical analysis. The table is divided into four sub-tables, each of which displays the summary statistics after each round of sample selection. Sub-table (1) denotes the original data; Sub-table (2) denotes the sample after we have dropped any observation that does not report fares; Sub-table (3) denotes the sample after we drop flights with less than ten responses, responses with partial information for our purposes, and non-revenue tickets, and connecting tickets; and sub-table (4) denotes the sample after we restrict to monopoly markets.
\end{tablenotes}
\end{threeparttable}
}
\end{table}

\setcounter{equation}{0}
\setcounter{figure}{0}
\setcounter{table}{0}
\renewcommand{\thesection}{A.\arabic{section}}
\renewcommand{\theequation}{A.\arabic{equation}}
\renewcommand\thefigure{\thesection.\arabic{figure}}
\renewcommand\thetable{\thesection.\arabic{table}}

%\newpage
%\pagebreak

\section{Additional Estimation Results\label{appendix1}}

In this section, we provide additional results referenced in the main text. In the main text, we present the results integrated over the distribution of aircraft capacities shown in Table 5. To construct that grand average, we estimate parameters at 20 distinct initial capacities and then take a weighted average over the parameter estimates using the capacity weights. In Tables \ref{table:appendix2} and \ref{table:appendix1} we present the parameter point estimates capacity by capacity, for completeness. Note that each capacity level represents different flights in the data, so we should expect variation across capacities. However, there are no clear patterns across capacities for many of the parameters, which makes sense if the capacity is not a function of those aspects of the data-generating process. We estimate that the number of passenger arrivals (for both nonstop and connecting) tends to be higher at higher capacity levels.

\begin{landscape}
\begin{table}
  \centering
  \caption{Estimated Mean Values of $\Psi$, by Capacities\label{table:appendix2}}
\begin{adjustbox}{scale=0.65}
\begin{tabular}{l ccccccccccccccccccccc}
\toprule
Parameters${\backslash}$Capacity Index   & (1) & (2) & (3) & (4) & (5) & (6) & (7) & (8) & (9) & (10) & (11) & (12) & (13) & (14) & (15) & (16) & (17) & (18) & (19) & (20) \\
\midrule
     $\mu^l$                    &338.87  &  249.04  &  453.35  &  458.80  &  310.15  &  255.63  &  427.86  &  445.68  &  248.54  &  591.49  &  561.36  &  484.82  &  507.55  &  469.35  &  388.38  &  433.15  &  342.28  &  454.47  &  399.94  &  263.81  &  \\ 
    $\frac{\sigma^l}{\mu^l}$    &0.13  &  0.49  &  0.14  &  0.09  &  0.07  &  0.05  &  0.32  &  0.16  &  0.22  &  0.12  &  0.07  &  0.17  &  0.10  &  0.06  &  0.10  &  0.09  &  0.11  &  0.31  &  0.07  &  0.19  &  \\ 
    $\delta^b$                  &0.26  &  0.09  &  0.30  &  0.22  &  0.11  &  0.37  &  0.22  &  0.08  &  0.13  &  0.08  &  0.07  &  0.19  &  0.08  &  0.19  &  0.14  &  0.08  &  0.17  &  0.05  &  0.24  &  0.37  &  \\ 
    $\frac{\sigma^b}{\mu^b}$    &0.52  &  0.20  &  0.14  &  0.48  &  0.33  &  0.20  &  0.27  &  0.64  &  0.30  &  0.98  &  0.48  &  0.48  &  0.89  &  0.26  &  0.67  &  0.67  &  0.33  &  0.84  &  0.24  &  0.73  &  \\ 
    $\mu^{\xi}$                 &0.40  &  0.62  &  0.29  &  0.36  &  0.34  &  0.25  &  0.31  &  0.46  &  0.27  &  0.64  &  0.62  &  0.91  &  0.53  &  0.42  &  1.00  &  0.29  &  0.32  &  0.88  &  0.51  &  0.55  &  \\ 
    $\lambda$                   &10.18  &  29.68  &  19.11  &  18.48  &  24.82  &  30.04  &  21.26  &  19.09  &  23.68  &  32.33  &  22.39  &  44.81  &  42.46  &  32.37  &  43.64  &  41.24  &  41.13  &  24.61  &  37.22  &  36.58  &  \\ 
    $\Delta^{\bf a}$            &-0.06  &  -0.04  &  -0.05  &  -0.06  &  -0.05  &  -0.04  &  -0.08  &  -0.04  &  -0.05  &  -0.08  &  0.05  &  -0.06  &  -0.09  &  -0.10  &  -0.05  &  -0.10  &  -0.02  &  -0.01  &  -0.04  &  -0.04  &  \\ 
    $\Delta^{\theta}$           &0.06  &  0.10  &  0.04  &  0.05  &  0.04  &  0.05  &  0.04  &  0.04  &  0.05  &  0.03  &  0.11  &  0.04  &  0.09  &  0.06  &  0.03  &  0.04  &  0.06  &  0.10  &  0.03  &  0.05  &  \\ 
    $\gamma$                    &29.24  &  27.41  &  32.50  &  18.16  &  28.40  &  38.19  &  22.19  &  30.77  &  36.84  &  48.55  &  54.28  &  57.32  &  57.19  &  38.31  &  57.93  &  55.91  &  56.73  &  57.70  &  58.51  &  59.53  &  \\ 
\bottomrule
\end{tabular}
\end{adjustbox}
\begin{figurenotes}In this table we present the mean demand parameters from the conditional distribution $h(\Psi|\omega_1;\widehat{\mu}_{\Psi},\widehat{\Sigma}_{\Psi})$, given the initial-capacity $\omega_1$. We estimated the model for 20 different capacities, and each capacity index corresponds to one such capacity.\end{figurenotes}
\end{table}

\begin{table}
  \centering
  \caption{Estimated Mode Values of $\Psi$, by Capacities \label{table:appendix1}}
\begin{adjustbox}{scale=0.65}
\begin{tabular}{l ccccccccccccccccccccc}
\toprule
Parameters${\backslash}$ Capacity Index   & (1) & (2) & (3) & (4) & (5) & (6) & (7) & (8) & (9) & (10) & (11) & (12) & (13) & (14) & (15) & (16) & (17) & (18) & (19) & (20) \\
\midrule
     $\mu^l$                    &415.02  &  352.42  &  470.62  &  405.89  &  397.85  &  429.13  &  397.18  &  427.69  &  433.23  &  504.16  &  518.59  &  527.39  &  510.85  &  450.46  &  510.52  &  499.05  &  453.19  &  472.65  &  424.85  &  448.82  &  \\ 
    $\frac{\sigma^l}{\mu^l}$    &0.13  &  0.48  &  0.11  &  0.09  &  0.10  &  0.09  &  0.19  &  0.17  &  0.20  &  0.15  &  0.10  &  0.18  &  0.12  &  0.08  &  0.13  &  0.11  &  0.17  &  0.34  &  0.07  &  0.19  &  \\ 
    $\delta^b$                  &0.36  &  0.15  &  0.29  &  0.29  &  0.13  &  0.38  &  0.27  &  0.13  &  0.20  &  0.11  &  0.08  &  0.20  &  0.10  &  0.26  &  0.18  &  0.11  &  0.15  &  0.10  &  0.27  &  0.37  &  \\ 
    $\frac{\sigma^b}{\mu^b}$    &0.69  &  0.18  &  0.25  &  0.53  &  0.36  &  0.28  &  0.34  &  0.65  &  0.38  &  0.53  &  0.45  &  0.51  &  0.78  &  0.25  &  0.52  &  0.68  &  0.38  &  0.60  &  0.27  &  0.68  &  \\ 
    $\mu^{\xi}$                 &0.54  &  0.37  &  0.41  &  0.28  &  0.35  &  0.27  &  0.31  &  0.49  &  0.37  &  0.56  &  0.37  &  0.50  &  0.49  &  0.50  &  0.52  &  0.36  &  0.44  &  0.50  &  0.51  &  0.59  &  \\ 
    $\lambda$                   &18.36  &  29.20  &  26.17  &  19.62  &  26.51  &  25.43  &  20.90  &  22.29  &  24.40  &  30.83  &  31.13  &  19.46  &  32.45  &  29.37  &  24.56  &  35.98  &  32.87  &  26.65  &  33.85  &  32.33  &  \\ 
    $\Delta^{\bf a}$            &-0.05  &  -0.05  &  -0.05  &  -0.06  &  -0.04  &  -0.04  &  -0.08  &  -0.04  &  -0.05  &  -0.08  &  0.03  &  -0.08  &  -0.09  &  -0.02  &  -0.03  &  -0.04  &  -0.00  &  -0.01  &  -0.04  &  -0.02  &  \\ 
    $\Delta^{\theta}$           &0.04  &  0.09  &  0.04  &  0.04  &  0.04  &  0.06  &  0.05  &  0.04  &  0.06  &  0.02  &  0.11  &  0.04  &  0.04  &  0.06  &  0.03  &  0.04  &  0.06  &  0.09  &  0.03  &  0.05  &  \\ 
    $\gamma$                    &29.16  &  27.58  &  28.28  &  19.49  &  37.48  &  31.10  &  22.20  &  30.20  &  36.74  &  49.55  &  53.89  &  50.76  &  40.22  &  35.74  &  55.26  &  53.08  &  54.54  &  49.15  &  39.12  &  46.53  &  \\ 
\bottomrule
\end{tabular}
\end{adjustbox}
\begin{figurenotes}In this table we present the mode of the demand parameters from the conditional distribution $h(\Psi|\omega_1;\widehat{\mu}_{\Psi},\widehat{\Sigma}_{\Psi})$, given the initial-capacity $\omega_1$. We estimated the model for 20 different capacities, and each capacity index corresponds to one such capacity.
\end{figurenotes}
\end{table}

\end{landscape}

\section{Model Fit\label{sec:fit}}

In Figures \ref{fig:fit1}, \ref{fig:fit2}, we display the empirical moments and the model-implied moments evaluated at the estimated parameters, of prices, by time, for economy seats and first-class seats, respectively. The moments take the form of deciles of the cumulative density functions of the data and the model predictions. 

\begin{figure}[h!]
\begin{center}
  \caption{Model Fit: Economy Fares} \label{fig:fit1}
  \includegraphics[scale=0.4]{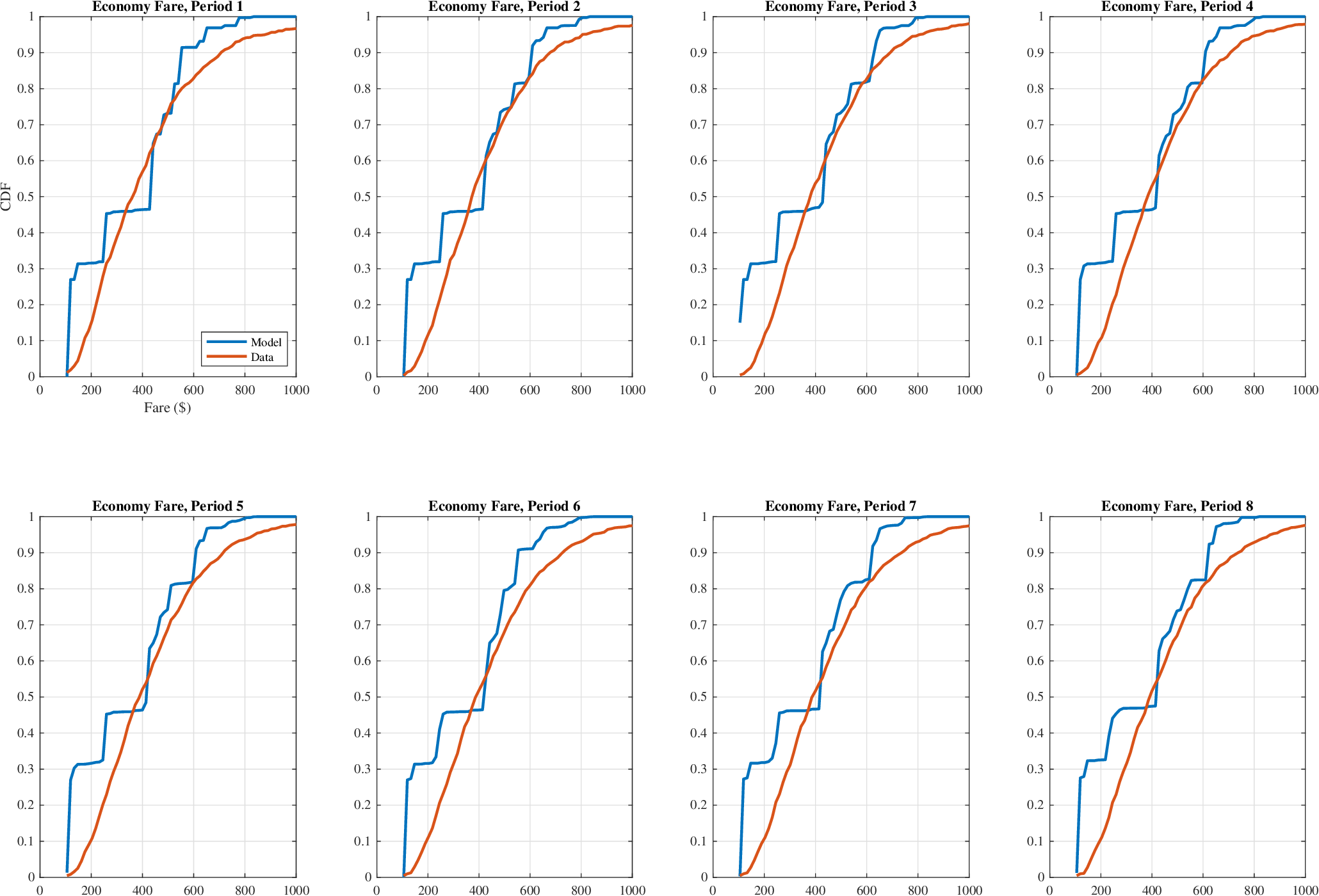}
  \end{center}
  \begin{figurenotes}
  
  This figure displays the distributions of economy fares according to the model (blue) for the smaller modal capacity of 160 economy seats and 18 first-class seats and the distributions of economy fares in the data (red) by time. 
  \end{figurenotes}
\end{figure}

The deciles from our data are shown in red, and the deciles predicted by the model are in blue. In our estimation step, for each initial capacity, we seek to match 620 moments to determine weights, i.e., the conditional density $h(\Psi|\omega_1)$ in Equation \ref{eq:mixture} for each period.\footnote{We use these 20 conditional densities to generate average welfare for our counterfactuals in Table \ref{tab:counterfactuals}. } Here, we display the fit for the ``smaller" modal capacity--160 economy seats and 18 first-class seats. Furthermore, in Figure \ref{fig:fit3}, we display the empirical CDF of the share of business passengers and the model-implied share of business passengers evaluated at the estimated parameters. 

\begin{figure}[h!]
\begin{center}
  \caption{Model Fit: First-Class Fares}\label{fig:fit2}
  \includegraphics[scale=0.4]{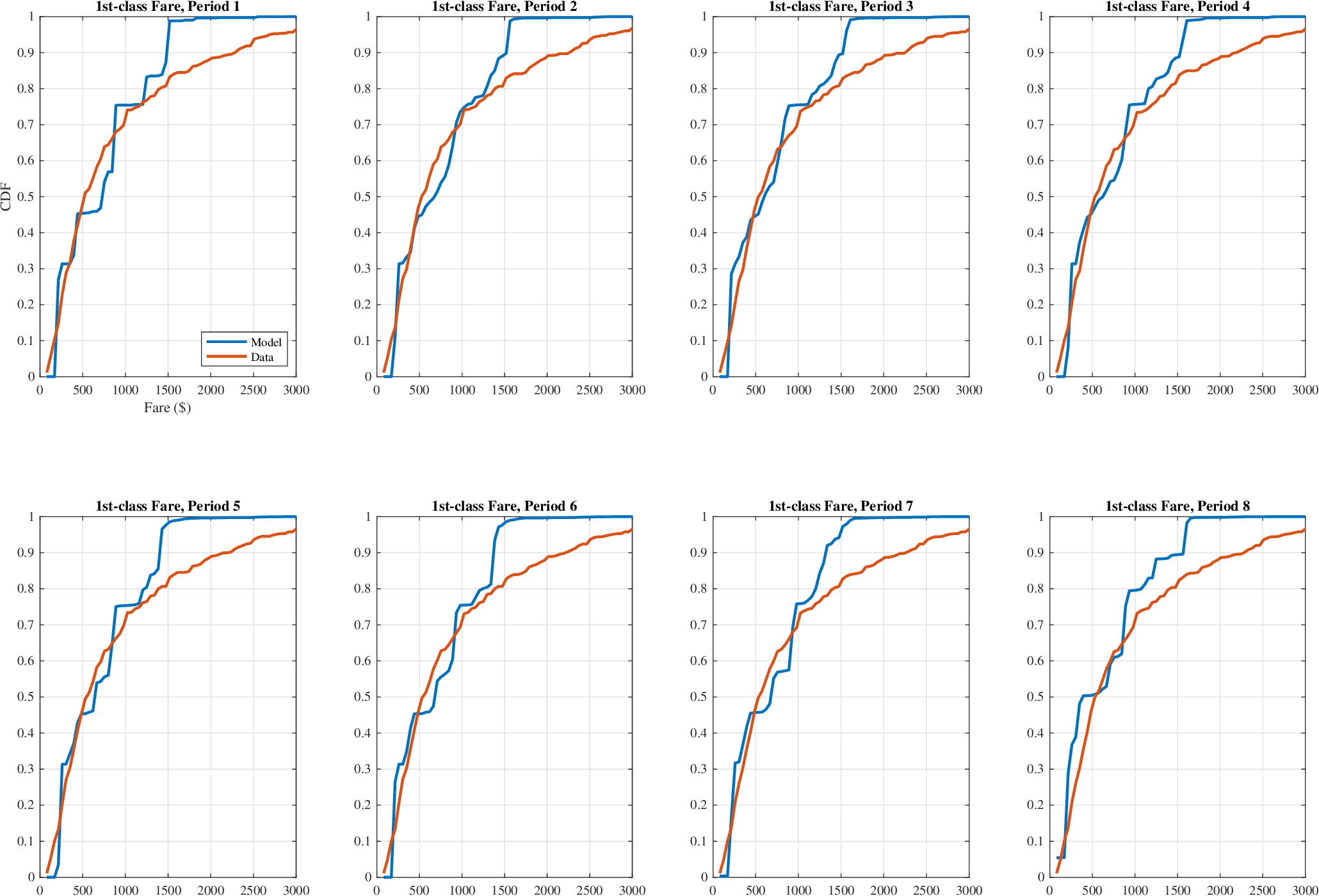}
  \end{center}
  \begin{figurenotes} This figure displays the distributions of first-class fares according to the model (blue) for the smaller modal capacity of 160 economy seats and 18 first-class seats and the distributions of first-class fares in the data (red) by time. \end{figurenotes}
\end{figure}

\begin{figure}[h!]
  \caption{Model Fit: Share of Business Passengers}
  \begin{center}
  \includegraphics[scale=0.3]{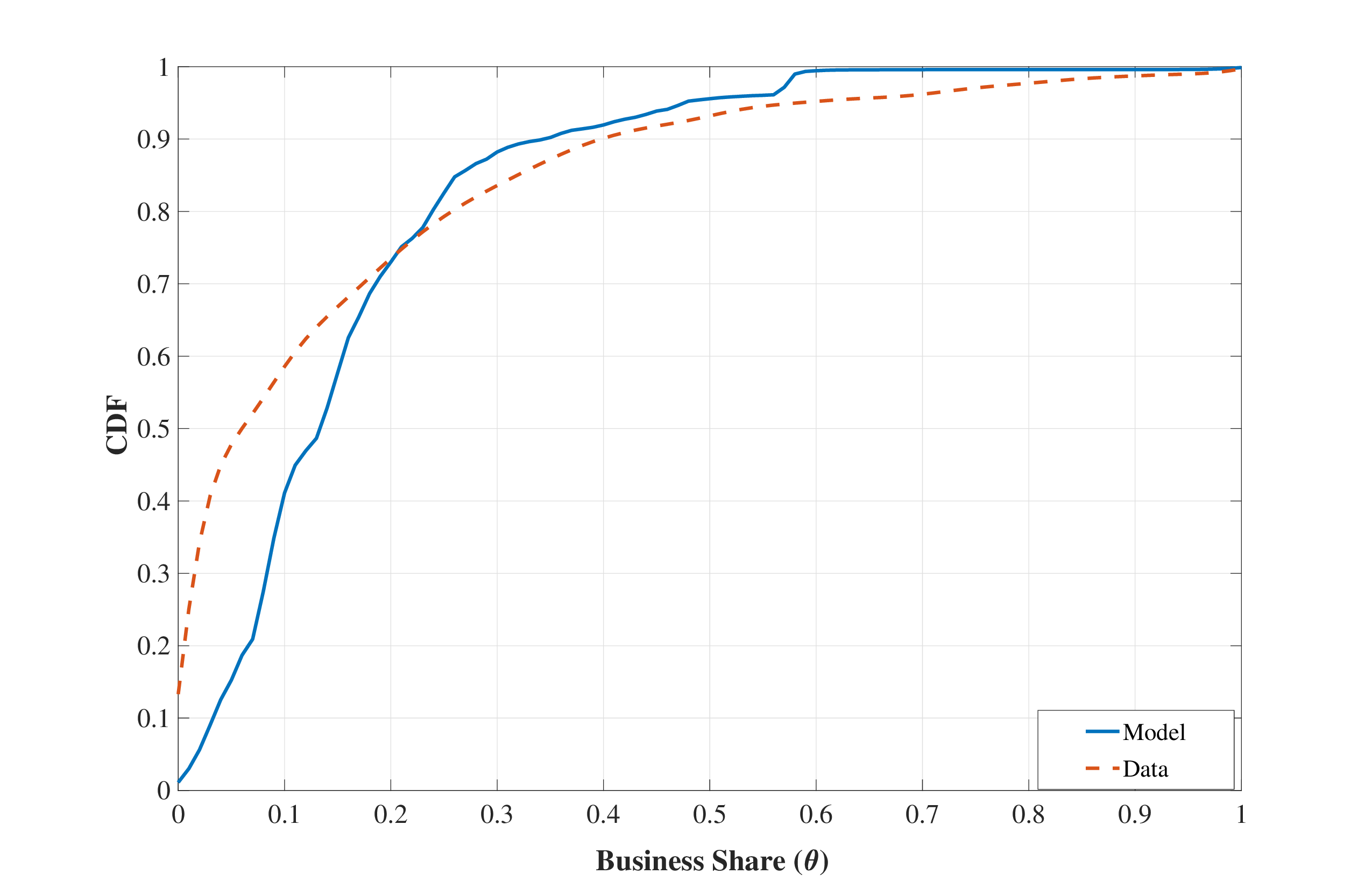}
  \label{fig:fit3}
  \end{center}
  \begin{figurenotes}This figure displays the distributions of the share of business passengers across all periods, according to the model (blue) for the smaller modal capacity of 160 economy seats and 18 first-class seats and the distributions of first-class fares in the data (red). \end{figurenotes}
\end{figure}

\section{Dynamic Programming}
{\subsection{Simulation Results \label{section:dp}}
In this section, we briefly present simulated optimal seat release policies when the Poisson arrival rates can be either high or low. We hold the initial capacity fixed for this exercise and choose parameters with a high willingness-to-pay. 
In particular, for illustration we consider two cases (out of 10,000 we consider) where in the first one, the Poisson arrival rates are $\lambda_t^n = 42.4 -0.086\times (t-1)$ and $\lambda_t^c = 45.33 -0.086\times (t-1)$ and in the other they are $\lambda_t^n = 9.084 +0.075\times (t-1)$ and $\lambda_t^c = 21.67 +0.075\times (t-1)$.

Given that in our model, an airline can update its prices once every period, the airline wants to ``protect" itself by limiting the number of seats it releases if, for some reason, it chooses a price that is too low and experiences an unexpectedly high number of bookings. 
By this logic, we expect that in the ``high $\lambda$ market," it is optimal for the airlines to restrict the number of seats released in the early periods and then release a greater fraction as demand is realized over time. 

In Figure \ref{fig:dp} (top left), we display the boxplot of the gap between the unsold economy seats and the number of economy seats released for eight periods for 150 simulations. As expected, the gap decreases as we get closer to the flight date. In the last period, the airline released all unsold seats. In Figure \ref{fig:dp} (top right), we display the boxplot of the gap between the number of economy seats released and the realized demand. In other words, this subfigure shows whether the seat release policy binds depending on the demand parameters and varies across simulations. 

In contrast, when the airline expects the number of arrivals to be small, it releases most of its unsold seats early on, but the exact number varies across simulation, as shown in Figure \ref{fig:dp} (bottom left). In this case, however, the seat release policy tends to bind earlier than in the previous case. These two cases highlight how the airline optimally responds to the expected demand by appropriately choosing its seat release policies. 

Thus the model implied prices can distinguish between (i) a low number of arrivals in each period but almost everyone buys a ticket, and (ii) a high number of arrivals in each period, but only a few buy tickets, by relying on the observed prices. All else equal, the price will start ``low" and remain low in the first case. In contrast, in the second case, the price will start at a higher level and change depending on the share of purchasers. In other words, higher demand (relative to the capacity) manifests itself in a more significant increase in prices over time because it implies a more significant variation in the opportunity cost of a seat due to more substantial ``surprises'' in the number of sales at given prices.  }

\begin{figure}[ht!]
  \caption{Simulated Seat Release Policies for Economy Seats}
 \hspace{-0.45in} \includegraphics[scale=0.45]{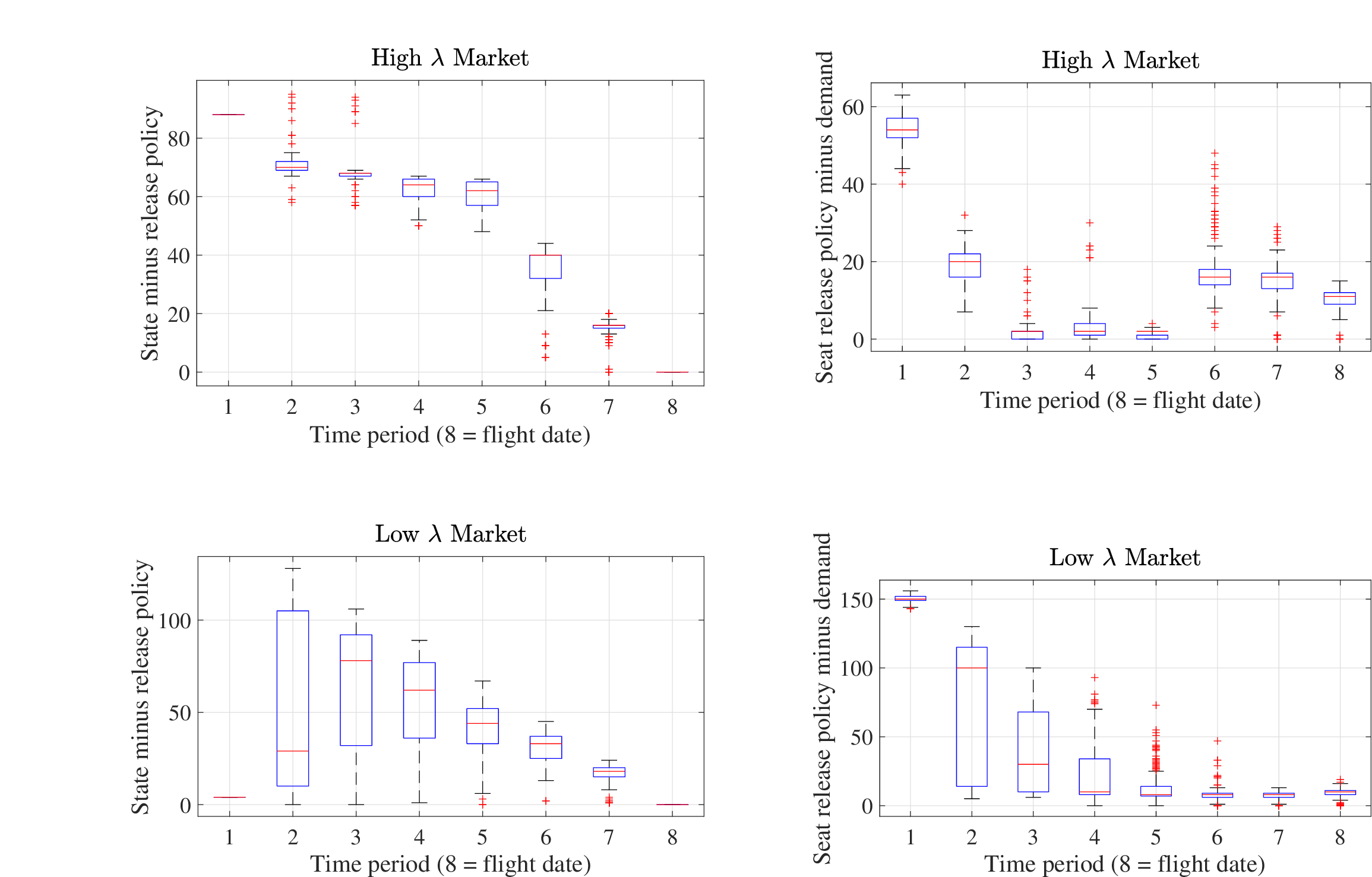}
  \label{fig:dp}
  \begin{figurenotes}The first column of the figure displays the boxplots (across 150 simulations) of the gap between unsold economy seats and the number of economy seats released at each of the eight periods. The top figure corresponds to ``high $\lambda$ market" with ${\Psi}=(922.54,	0.54,	0.22,	0.83,	0.566,	42.4,	-0.086,	0.079,	45.33)$ and the bottom figure corresponds to ``low $\lambda$ market" with ${\Psi}=(952.84,	0.063,	0.032,	0.702,	0.63,	9.084,	0.075,	0.037,	21.67)$. The second column of the figure displays the boxplots (across 150 simulations) of the gap between the number of economy seats released and the realized demand, at each of the eight periods.\end{figurenotes}
\end{figure}

\subsection{Uniqueness of the Optimal Policy\label{section:prooflemma}}
In this section, we show that the optimal policy is unique under some regularity conditions.
These regularity conditions are widely used in the literature and ensure that demand decreases in its own price and that the demand for each seating class is concave. We begin by presenting these conditions below, but for notational ease and without loss of generality, suppress the time index and ignore the connecting passengers.
\begin{assumption}
\begin{enumerate}
\item ({\bf Downward Demand}): $\left(\frac{\partial \E q^{e}(p^{e},p^{f})}{\partial p^{e}}\right) \leq0$, and $\left(\frac{\partial \E q^{f}(p^{e},p^{f})}{\partial p^{f}}\right) \leq0$.
\item ({\bf Concave Demand}): $\frac{\partial }{\partial p^{e}}\left(\frac{\partial \E q^{e}(p^{e},p^{f})}{\partial p^{e}}\right)<\frac{\partial }{\partial p^{f}}\left(\frac{\partial \E q^{e}(p^{e},p^{f})}{\partial p^{e}}\right)\leq0$, and\\ $\frac{\partial }{\partial p^{f}}\left(\frac{\partial \E q^{f}(p^{e},p^{f})}{\partial p^{f}}\right)<\frac{\partial }{\partial p^{e}}\left(\frac{\partial \E q^{f}(p^{e},p^{f})}{\partial p^{f}}\right)\leq0$.
\item ({\bf Cross Price Curvature}): $\frac{\partial }{\partial p^{f}}\left(\frac{\partial \E q^{e}(p^{e},p^{f})}{\partial p^{f}}\right)<0$, and $\frac{\partial }{\partial p^{e}}\left(\frac{\partial \E q^{f}(p^{e},p^{f})}{\partial p^{e}}\right)<0$.
\end{enumerate}
 \label{assumption:1}
\end{assumption}
The first assumption says that the demand for either of the cabins must be weakly decreasing in its own price.
The second assumption says that the demand is concave in its own price, which ensures the revenue is well defined. It also says that the demand for economy seats decreases more with respect to economy fare than business fare. The third assumption says that the change in demand for economy seats with respect to first-class price decreases with the first-class price and vice versa.
Although these assumptions are not on the primitives of the model, we present them in these forms because they are more intuitive, self-explanatory, and thus easier to understand than the equivalent assumptions on the primitives.
\begin{lemma}Under Assumption \ref{assumption:1} there is a unique policy function $\{\sigma_{t}:t=,1\ldots, T\}$.\label{lemma}
\end{lemma}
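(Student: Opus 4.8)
The plan is to argue by backward induction on $t$, and at each date to decouple the continuous pricing decision from the discrete seat-release decision, exploiting the recursive representation
\[
V_t(\omega_t,\Psi)=\max_{\overline{q}_t\le\omega_t}\Big\{\pi_t\big((p_t(\overline{q}_t),\overline{q}_t),\omega_t;\Psi_t\big)+\sum_{\omega_{t+1}\in\Omega}V_{t+1}(\omega_{t+1},\Psi)\,Q_t\big(\omega_{t+1}\mid(p_t(\overline{q}_t),\overline{q}_t),\omega_t,\Psi_t\big)\Big\}.
\]
For each admissible integer cap $\overline{q}_t$ I would first show that the inner objective in prices has a unique maximizer $p_t(\overline{q}_t)$. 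Since the set of admissible caps $\{\overline{q}_t\le\omega_t\}$ is finite, the outer maximization ranges over finitely many candidate values, so uniqueness of the policy reduces to uniqueness of the optimal prices for each cap together with an argument that no two caps deliver the same maximal value.

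\textbf{Base case ($t=T$).} In the last period there is no continuation value, so the objective is the flow profit $\pi(\chi_T,\omega_T;\Psi_T)=(p^e-c^e)\,\E_T(q^e;\chi_T)+(p^f-c^f)\,\E_T(q^f;\chi_T)$, whose stationarity conditions are exactly the static part of Equation \ref{eq:foc-f}. Fixing non-binding caps, I would compute the Hessian of $\pi$ in $(p^e,p^f)$ and verify that Assumption \ref{assumption:1} makes it negative definite: the Downward Demand condition signs the leading own-price terms, the Concave Demand condition forces each own second derivative to strictly dominate the corresponding cross second derivative, so the diagonal entries $\pi_{ee},\pi_{ff}$ are strictly negative and diagonally dominant, while the Cross Price Curvature condition signs the remaining mixed terms so that $\pi_{ee}\pi_{ff}-\pi_{ef}^{2}>0$. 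Negative definiteness gives strict concavity and hence a unique price maximizer for each cap.

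\textbf{Inductive step.} Suppose $V_{t+1}(\cdot,\Psi)$ is concave on the integer state space. For fixed caps, the period objective is the sum of the strictly concave flow profit and the continuation term. I would establish concavity of the continuation term in $(p^e,p^f)$ by expressing $Q_t$ through the purchase probabilities $P_t^e(\chi_t),P_t^f(\chi_t)$ and composing the induction hypothesis on $V_{t+1}$ with the Poisson arrival and rationing structure; since adding a concave term to a strictly concave one preserves strict concavity, the maximizer $p_t(\overline{q}_t)$ is again unique, and a standard preservation argument shows $V_t$ inherits concavity in $\omega$, closing the induction.

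\textbf{Main obstacle.} The delicate step is the inductive one: unlike the purely algebraic Hessian check of the base case, the continuation value enters through \emph{price-dependent} transition probabilities generated by the random number of arrivals and the random-rationing rule, so I must verify both that $\sum_{\omega_{t+1}}V_{t+1}\,Q_t$ does not destroy the strict concavity delivered by Assumption \ref{assumption:1} and that $V_t$ remains concave in $\omega$ so the hypothesis regenerates at the next step; the binding-cap regimes, where expected sales are truncated, must be handled in the same pass. Once strict concavity in prices holds at every date, the only remaining task is the finite discrete choice of caps: for each $\omega_t$ the optimum is attained at one of finitely many pairs $(p_t(\overline{q}_t),\overline{q}_t)$, and the strict inequalities in Assumption \ref{assumption:1} ensure the maximized values are distinct across caps, so the maximizing $\chi_t$—and hence the policy $\sigma_t$—is unique.
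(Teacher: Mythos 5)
Your plan diverges from the paper's argument at both stages, and the divergences are where the gaps lie. In the base case the paper does not check a Hessian: it applies the Gale--Nikaido univalence theorem to the first-order-condition system, arguing that the negative of its Jacobian is a $P$-matrix under Assumption \ref{assumption:1}. Your claim that the same assumption makes the profit Hessian negative definite is not a free substitution: the Hessian entries are of the form $2\,\partial_{p^e}\E q^e+(p^e-c^e)\,\partial^2_{p^ep^e}\E q^e+(p^f-c^f)\,\partial^2_{p^ep^e}\E q^f$, so negative definiteness requires the curvature terms weighted by markups to cooperate with the first-derivative terms; Assumption \ref{assumption:1} signs the individual pieces but does not by itself deliver diagonal dominance or a positive determinant of the Hessian. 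You would need to either prove that implication or switch to the $P$-matrix route.

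The more serious gap is in the inductive step. You propose to carry discrete concavity of $V_{t+1}$ in $\omega$ and to obtain concavity of $\sum_{\omega_{t+1}}V_{t+1}Q_t$ in prices by ``composing'' that concavity with the Poisson arrival and rationing structure. That composition is exactly the step that does not follow from general principles: an expectation of a concave function under a price-dependent transition kernel need not be concave (or even quasi-concave) in the price. The paper's resolving idea, which your plan does not contain, is to compute the transition kernel explicitly and show that the number of sales is a (generalized, truncated) Poisson random variable with parameter $\lambda_t(1-\tilde F_t(p))$; log-concavity of the Poisson p.m.f. then yields quasi-concavity of the expected continuation profit, which is all that is needed for uniqueness of the maximizer --- no concavity of $V_t$ in the state is used or established. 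Finally, your closing claim that ``the strict inequalities in Assumption \ref{assumption:1} ensure the maximized values are distinct across caps'' is a non sequitur: strict concavity in prices for each fixed integer cap says nothing about whether two different caps attain the same optimal value, so the uniqueness of the discrete seat-release component is not settled by your argument.
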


\begin{proof}
To prove this result, we use induction on $T$:
\begin{enumerate}
\item Suppose $T=1$ and $K^{e}$ and $K^{f}$ denote the cabin specific capacities. There are $S:=(K^{e}+1)\times(K^{f}+1)$ possible seats combinations (state-variables) that could be realized. We show that for each $s\in\{1, \ldots, S\}$ there is a unique optimal pair $\{p^{e}(s), p^{f}(s)\}$.

\item Suppose uniqueness is true for $T=\tilde{t}$, then we show that the uniqueness holds even when $T=\tilde{t}+1$.\footnote{ With time we also keep track of the remaining seats. We use the short-hand notation of $s_{t}$ for $s_{t}\in S$.
}
\end{enumerate}

{\bf Step 1.} Here, $T=1$ and
for notational ease, suppress the time index.
The airline solves:
\begin{eqnarray*}
V(\sigma^*)=\max_{p^{e}, p^{f}}\left\{\sum_{k=e,f}(p_{t}^{k}-c^{k})\int q^{k}(p^{e}, p^{f})g^{k}(q^{k}(p^{e}, p^{f}); p^{e}, p^{f})dq^{k}\right\}=\max_{p^{e}, p^{f}}\sum_{k=e,f}(p_{t}^{k}-c^{k})\E q^{k}(p^{e}, p^{f}).
\end{eqnarray*}
Then the equilibrium prices $(p^{e}, p^{f})$ solve the following system of equations:
\begin{eqnarray}
\left[\begin{array}{c}\E q^{e}(p^{e}, p^{f})+(p^{e}-c^{e})\frac{\partial {\E} q^{e}(p^{e},p^{f})}{\partial p^{e}}+(p^{f}-c^{f})\frac{\partial {\E} q^{f}(p^{e},p^{f})}{\partial p^{e}}\\{\E} q^{f}(p^{e}, p^{f})+(p^{f}-c^{f})\frac{\partial \E q^{f}(p^{e},p^{f})}{\partial p^{f}}+(p^{e}-c^{e})\frac{\partial {\E} q^{e}(p^{e},p^{f})}{\partial p^{f}}\end{array}\right]=\left[\begin{array}{c}0\\0\end{array}\right].
\end{eqnarray}
The above system has a unique solution $(p^{e}, p^{f})$ if the negative of the Jacobian corresponding to the above system is a $P$-matrix \citep{GaleNikaido1965}.
In other words, all principal minors of the Jacobian matrix are non-positive, which follows from Assumption \ref{assumption:1}.

{\bf Step 2.} Suppose we have a unique solution when $T=\tilde{t}$ and all finite pair $\{K^{e}, K^{f}\}$. Next, we want to show that the solution is still unique if we have one additional period, i.e., $T=\tilde{t}+1$.
Consider the value function.

\begin{eqnarray*}
V(\sigma^{*\tilde{t}}):=\max_{\{\sigma_{t}\}_{t=0}^{\tilde{t}}}\mathbb{E}_0\Bigg[\sum_{t=0}^{\tilde{t}}\sum_{k=e,f}(p_{t}^{k}-c^{k}) q_{t}^{k}(\sigma_{t})\big{|}K_0^f, K_0^e\Bigg],
\end{eqnarray*}
where $\sigma^{*\tilde{t}}:=(\sigma_1^{*},\ldots, \sigma_{\tilde{t}}^{*})$ is the unique optimal policy.
Now, suppose we have $\tilde{t}+1$ periods to consider. So the maximization problem faced by the airline becomes
\begin{eqnarray*}
&&\max_{\{\sigma_{t}\}_{t=0}^{\tilde{t}+1}}\mathbb{E}_0\Bigg[\sum_{t=0}^{\tilde{t}+1}\sum_{k=e,f}(p_{t}^{k}-c^{k}) q_{t}^{k}(\sigma_{t})\big{|}K_0^f, K_0^e\Bigg]=\max_{\{\sigma_{t}\}_{t=0}^{\tilde{t}}}\mathbb{E}_0\Bigg[\sum_{k=e,f}(p_{t}^{k}-c^{k}) q_{t}^{k}(\sigma_{t})\big{|}K_0^f, K_0^e\Bigg]\\&&+\max_{\sigma_{\tilde{t}+1}}\sum_{\omega_{\tilde{t}+1}\Omega_{\tilde{t}+1}}\mathbb{E}_{\tilde{t}+1}\Bigg[\sum_{k=e,f}(p_{t}^{k}-c^{k}) q_{t}^{k}(\sigma_{t})\big{|}\omega_{\tilde{t}+1}\Bigg]\Pr(\omega_{\tilde{t}+1}|\sigma^{\tilde{t}}).
\end{eqnarray*}
Consider the last period.
We have shown that for any realized state space $\omega_{\tilde{t}+1}$, there is a unique optimal policy that solves the second term.
The question is if the uniqueness is preserved when we take an expectation with respect to the state variable $\omega_{\tilde t+1}$.

For the solution to be unique, it is sufficient that the transition probability $\Pr(\omega_{\tilde t+1}|\sigma^{t})$ is log-concave,  which guarantees the expected profit is quasi-concave, hence the solution is unique.\footnote{ A positive and discrete random variable $V$ is log-concave if its p.m.f. $\Pr(V=i)$ forms a log-concave sequence. A non-negative sequence $\{r_{i}: i\geq0\}$ is log-concave if for all $i\geq1: (r_{i})^{2}\geq (r_{i-1})\times(r_{i+1})$.}
Then, the fact that the uniqueness extends from $\tilde{t}$ to $\tilde{t}+1$ follows from the usual backward induction argument of the finite-periods maximization problem.
Therefore it is enough to show that the transition probability is a (generalized) Poisson distribution, which is log-concave \cite[see][]{Johnson2007}.

For simplicity and to provide some intuition as to why transition probability is a (generalized) Poisson, we present the derivation of the transition probability when there is only one cabin and without censoring. Extending the argument to two cabins and incorporating rationing is straightforward, albeit tedious, once we recognize that the Poisson structure is preserved under truncation.
Suppose there is only one cabin and no seat-release  policy and hence no censoring.
And let $\tilde{K}_{t}=m$ is the number of seats remaining at time $t$.
Then, the probability of reaching $\tilde{K}_{t+1}=m'$ in $t+1$ from $m$ in period $t$ is
\begin{eqnarray*}
&&\Pr(m'|m, \text{Price}=p)=\Pr(Sales_{t} = \underbrace{m-m'}_{=d}|p)=\sum_{s=0}^{\infty}\Pr(Sales_{t}=d, S_{t}=s|p)\notag\\
&=&\sum_{s=0}^{\infty}\Pr(Sales_{t}=d|S_{t}=s, p)\times \Pr(S_{t}=s|p)=\sum_{s=d}^{\infty}{s \choose d} (1-\tilde{F}_t(p))^{d}(\tilde{F}_t(p))^{s-d}\frac{e^{-\lambda_{t}^n}(\lambda_{t}^{n})^s}{s!}\notag\\
&=&e^{-\lambda_{t}^n}(1-\tilde{F}_t(p))^{d}(\lambda_{t}^n)^{d}\sum_{s=d}^{\infty}{s \choose d} (\tilde{F}_t(p))^{s-d}\frac{(\lambda_{t}^n)^{(s-d)}}{s!}=e^{-(\lambda_{t}^n)^s}\frac{[\lambda_{t}^n(1-\tilde{F}_t(p))]^{d}}{d!}\sum_{s=d}^{\infty}\frac{[\lambda_{t}^n\tilde{F}_t(p)]^{s-d}}{(s-d)!} 
\end{eqnarray*}
\begin{eqnarray*}
&&\Pr(m'|m, \text{Price}=p)=e^{-\lambda_{t}^n}\frac{[\lambda_{t}^n(1-\tilde{F}_t(p))]^{d}}{d!}e^{\lambda_{t}\tilde{F}_t(p)}=e^{-\lambda_{t}^n(1-\tilde{F}_t(p))}\frac{[\lambda_{t}^n(1-\tilde{F}_t(p))]^{d}}{d!}\notag\\ &=&e^{-\lambda_{t}^n(1-\tilde{F}_t(p))}\frac{[\lambda_{t}^n(1-\tilde{F}_t(p))]^{(m-m')}}{(m-m')!},\label{eq:transition1}
\end{eqnarray*}
where $\lambda_t^n$ is the rate of nonstop passengers,   
$
\tilde{F}(t)=\left\{\begin{array}{cc}
\int_0^tf_t(v)dv,& \texttt{ economy cabin}\\
\int_0^t\int_0^{\infty} f_t(v)f_{\xi}(w/v)\frac{1}{v}dvdw, &\texttt{first-class cabin}
\end{array}\right.\label{eq:Ftilde}
$
and in the fourth equality the sum starts from $s=d$ as the probability of $d$ sales when $s<d$ is zero. Therefore, the transition probability is a Poisson with parameter $\lambda_{t}(1-\tilde{F}_t(p))$.

\end{proof}

\end{document}